\let\cleardoublepage\clearpage
\newcommand{\Avatar}{$\textsc{Avatar}_{\textsc{Cbt}}$ }
\newcommand{\proofsketch}{\noindent \emph{Proof sketch: }}
\begin{document}
\title{Avatar: A Time- and Space-Efficient Self-Stabilizing Overlay Network}
\author{Andrew Berns}
\institute{Department of Computer Science\\
University of Wisconsin-La Crosse\\
La Crosse, Wisconsin, USA\\
\email{aberns@uwlax.edu}}
\date{}
\maketitle

\begin{abstract}
Overlay networks present an interesting challenge for fault-tolerant computing.  Many overlay networks operate in dynamic environments (e.g. the Internet), where faults are frequent and widespread, and the number of processes in a system may be quite large.  Recently, self-stabilizing overlay networks have been presented as a method for managing this complexity.  \emph{Self-stabilizing overlay networks} promise that, starting from any weakly-connected configuration, a correct overlay network will eventually be built.  To date, this guarantee has come at a cost: nodes may either have high degree during the algorithm's execution, or the algorithm may take a long time to reach a legal configuration.  In this paper, we present the first self-stabilizing overlay network algorithm that does not incur this penalty.  Specifically, we (i) present a new locally-checkable overlay network based upon a binary search tree, and (ii) provide a randomized algorithm for self-stabilization that terminates in an expected polylogarithmic number of rounds \emph{and} increases a node's degree by only a polylogarithmic factor in expectation.
\end{abstract}

\section{Introduction}
Today's distributed systems are quite different than those from only a decade ago.  Pervasive network connectivity and an increase in the number of computational devices has ushered in an era of large-scale distributed systems which operate in highly-dynamic environments.  One type of distributed system that has gained popularity recently is the overlay network.  An \emph{overlay network} is a network where communication occurs over \emph{logical} links, where each logical link consists of zero or more \emph{physical} links.  The use of logical links allows the design of efficient logical topologies (e.g. topologies with low diameter and/or low degree) irrespective of the physical topology.  These topologies allow the construction of efficient data structures from large systems with arbitrary physical networks.

The dynamic nature of many overlay networks places extreme importance on the ability to handle a wide variety of faults.  \emph{Self-stabilization}, first presented by Dijkstra in 1974~\cite{dijkstra_ssorig_74}, is an elegant fault-tolerant paradigm promising that, after \emph{any} memory-corrupting transient fault, the system will eventually recover to the correct configuration.  \emph{Self-stabilizing overlay networks}, then, are logical networks that guarantee a correct topology will be restored after any transient memory corruption.

\subsection{Related Work}
Many overlay networks include a mechanism to tolerate at least a subset of possible faults.  For instance, the \textsc{Chord} overlay network~\cite{stoica_chord_01} defines a procedure for nodes to join the network efficiently.  The \textsc{Forgiving Graph}~\cite{hayes_forgiving_12} presents a self-healing overlay network which maintains connectivity while limiting degree increases and stretch despite periodic adversarial node insertions and deletions.

Self-stabilizing overlay networks are a relatively new area of overlay network research.  In 2007, Onus et al. presented the first silent self-stabilizing overlay network, which built a linear topology in linear (in the number of nodes) rounds~\cite{onus_linear_07}.  The first self-stabilizing overlay network with polylogarithmic convergence time was the \textsc{Skip+} graph, presented in 2009 by Jacob et al.~\cite{jacob_skipplus_09}.  Berns et al. presented a generic framework capable of building any locally-checkable overlay network, and proved that their result was near-optimal in terms of running time~\cite{berns_tcs_13}.

Current self-stabilizing overlay networks have suffered from one of two limitations.  First, some self-stabilizing overlay networks require a long time to reach the correct configuration.  For instance, the \textsc{ReChord} network~\cite{kniesburges_rechord_11} is a self-stabilizing variant of the \textsc{Chord} network, but requires $\mathcal{O}(n \log n)$ rounds to do so.  Conversely, some self-stabilizing overlay networks can converge quickly, but may require a large amount of space.  The \textsc{Skip+} network~\cite{jacob_skipplus_09}, for example, has a polylogarithmic convergence time, but may increase a node's degree to $\mathcal{O}(n)$ during convergence.  The Transitive Closure Framework of Berns et al.~\cite{berns_tcs_13} requires $\Theta(n)$ space.  To date, no work has achieved efficient convergence in terms of both time and space.

\subsection{Contributions}
In Section \ref{section:avatar}, we present \textsc{Avatar}, a generic locally checkable overlay network, and describe a specific ``instance'' of the network called \Avatar, based upon a binary search tree.  Section \ref{section:avatar_algo} presents a randomized self-stabilizing algorithm for creating the \Avatar network, as well as an analysis sketch demonstrating the algorithm is efficient in terms of both convergence time and space (using a new metric we call the \emph{degree expansion}).
%
%
%
%
%
%
%
%
%

\section{Model of Computation}
\label{section:model}
We model the distributed system as an undirected graph $G = (V,E)$, with nodes $V$ representing the processes of the system, and edges $E$ representing the communication links.  Each node $u$ is assigned an identifier from the function $\mathit{id}: V \rightarrow \mathbb{Z}^+$.  We assume each node stores $\mathit{id}(u)$ as immutable data.  Where clear from context, we will refer to a node $u$ by its identifier.

Each node $u \in V$ has a \emph{local state} $S(u)$ consisting of a set of variables and their values.  We assume all nodes also have access to a shared (immutable) random sequence $\Psi$.  A node $u$ can modify the value of its variables using \emph{actions} defined in the \emph{program} of $u$.  All nodes execute the same program.  We use a \emph{synchronous} model of computation, where in one \emph{round} each node executes its program and communicates with its neighbors.  We use the \emph{message passing} model of communication, where a node $u$ can communicate with a node $v$ in its \emph{neighborhood} $N(u)=\{v \in V:(u,v) \in E\}$ by sending node $v$ (called a \emph{neighbor}) a message.  A node can send unique messages to every neighbor in every round.  Messages sent to node $u$ in round $i$ are received by node $u$ at the start of round $(i+1)$.  We assume reliable and bounded capacity communication channels where a message is received if and only if it was sent in the previous round.

In the overlay network model, a node's neighborhood is part of its state, allowing a node to change its neighborhood with program actions.  Specifically, in a round $i$, a node $u$ can delete any edge incident upon it, or add an edge to any node $v$ which is currently distance 2 from $u$.  Specifically, let $G_i$ be the configuration in round $i$.  A node $u$ can (i) delete any edge $(u,v) \in E(G_i)$, resulting in $(u,v) \notin E(G_{i+1})$, and (ii) create the edge $(u,w)$ if $(u,v), (v,w) \in E(G_{i})$, resulting in $(u,w) \in E(G_{i+1})$.  We restrict edge additions to only those nodes at distance 2 to reflect the fact that only nodes at distance 2 share a common neighbor through which they can be ``connected''.  We assume that $v \in N(u) \Leftrightarrow (u,v) \in E$ -- that is, every neighbor of $u$ is known, and $u$ has no nodes in $N(u)$ which do not exist.  This can be achieved with the use of a ``heartbeat'' message sent in each round.

Our overlay network problem is to take a set of nodes $V$ and create a \emph{legal configuration}, where the legal configuration is defined by some predicate taken over the state of all nodes in $V$.  Since edges are state in an overlay network, the predicate for a legal configuration often includes the requirement that the topology matches a particular \emph{desired topology} $ON(V) = (V, E)$.  The \emph{self-stabilizing overlay network problem} is to design an algorithm $\mathcal{A}$ such that, when executed on nodes $V$ with arbitrary initial state in an arbitrary weakly-connected initial topology, the system reaches a legal configuration.  Furthermore, once the network is in a legal configuration, it remains in this legal configuration until an external fault perturbs the system.

Performance of an overlay network algorithm can be measured in terms of both time and space.  To analyze the worst-case performance, it is assumed that an \emph{adversary} creates the initial configuration using full knowledge of the nodes and algorithm (excluding the value of the shared random sequence $\Psi$).  The maximum number of rounds required for a legal configuration to be reached, taken over all possible initial configurations, is called the \emph{convergence time} of $\mathcal{A}$.  The number of incident edges during convergence is a major space consideration for overlay network algorithms as each incident edge requires memory and communication (for heartbeat messages).  To quantify this growth, we introduce the \emph{degree expansion}, which is, informally, the amount a node's degree may grow ``unnecessarily'' during convergence.  For a graph $G$ with node set $V$, let $\Delta_G$ be the maximum degree of nodes in $G$.  For a self-stabilizing algorithm $\mathcal{A}$ executing on $G$, let $\Delta_{\mathcal{A},G}$ be the maximum degree of any node from $V$ during execution of $\mathcal{A}$ beginning from configuration $G$.  We define degree expansion as follows.

\begin{definition}
The \emph{degree expansion of $\mathcal{A}$ on $G$}, denoted $\mathit{DegExp}_{\mathcal{A},G}$, is equal to $(\Delta_{\mathcal{A},G} / \max(\Delta_G, \Delta_{\mathit{ON}(\lambda)}))$.  Let the \emph{degree expansion of $\mathcal{A}$} be $\mathit{DegExp}_{\mathcal{A}} = \max_{G \in \mathcal{G}}(\mathit{DegExp}_{\mathcal{A},G})$
\end{definition}

An adversary could create an initial configuration where many edges are forwarded to a node in one round.  The degree expansion is meant to capture the degree growth ``caused'' by the algorithm itself, not the adversary.

We say that a self-stabilizing overlay network algorithm is \emph{silent} if and only if the algorithm brings the system to a configuration where the messages exchanged between nodes remains fixed until a fault perturbs the system~\cite{dolev_ssbook_00}.  Traditionally, these messages consist of the state of a node $u$.  In order for a silent self-stabilizing overlay network algorithm to exist given only this information, the legal configuration must be locally checkable.  An overlay network is \emph{locally checkable} if and only if each configuration which is not a legal configuration has at least one node (called a \emph{detector}) which detects that the configuration is not legal using only its state and the state of its neighbors, and all legal configurations have no detectors.

%
%
%
%
%
%
%
%
%
%
%

\section{The \textsc{Avatar} Network}
\label{section:avatar}
\subsection{\textsc{Avatar} Specification}
One of the challenges with creating silent self-stabilizing overlay network algorithms is in designing a topology that is locally checkable.  This is a non-trivial task as many popular overlay networks are in fact not locally checkable.  This is demonstrated in prior work.  The \textsc{Skip+} network~\cite{jacob_skipplus_09} was created as a locally-checkable variant of the \textsc{Skip} graph~\cite{aspnes_skipgraph_03}.  Similarly, the self-stabilizing \textsc{ReChord} network~\cite{kniesburges_rechord_11} required a variant of \textsc{Chord}~\cite{stoica_chord_01} for local checkability using real and virtual nodes.  Simplifying this network design task is the motivation for \textsc{Avatar}.  \textsc{Avatar} allows many different topologies to be ``simulated'' while ensuring local checkability, simplifying the network design task when creating stabilizing overlays.

A \emph{network embedding} $\Phi$ maps the node set of a \emph{guest network} $G_g = (V_g, E_g)$ onto the node set of a \emph{host network} $G_h = (V_h, E_h)$~\cite{leighton_book_1991}.  The \emph{dilation} of $\Phi$ is defined as the maximum distance between any two nodes $\Phi(u), \Phi(v) \in V_h$ such that $(u,v) \in E_g$.  The \textsc{Avatar} network, informally speaking, is an overlay network designed to realize a dilation-1 embedding for a guest network using the logical overlay links.  To do this \emph{and} ensure local checkability, the (host) overlay edges of \textsc{Avatar} consist of the successor and predecessor edges from a linearized graph (ensuring host nodes know which guest nodes should be embedded on them) as well as the overlay edges between host nodes necessary for neighboring guest nodes to also have neighboring hosts.

More formally, for any $N \in \mathbb{N}$, let $[N]$ be the set of nodes $\{0,1,\ldots,N-1\}$.  Let $\mathcal{F}$ be a family of graphs such that, for each $N \in \mathbb{N}$, there is exactly one graph $F_N \in \mathcal{F}$ with node set $[N]$.  We use $\mathcal{F}(N)$ to denote $F_N$.  We call $\mathcal{F}$ a \emph{full graph family}, capturing the notion that the family contains exactly one topology for each ``full'' set of nodes $[N]$ (relative to the identifiers).  For any $N \in \mathbb{N}$ and $V \subseteq [N]$, $\textsc{Avatar}_{\mathcal{F}}(N,V)$ is a network with node set $V$ that realizes a dilation-1 embedding of $F_N \in \mathcal{F}$.  The specific embedding is given below.  We also show that, when given $N$, \textsc{Avatar} is locally checkable ($N$ can be viewed as an upper bound on the number of nodes in the system).

\begin{definition}
Let $V \subseteq [N]$ be a node set $\{u_0, u_1, \ldots, u_{n-1}\}$, where $u_i < u_{i+1}$ for $0 \leq i < n-1$.  Let the \emph{range} of a node $u_i$ be $\mathit{range}(u_i) = [u_i, u_{i+1})$ for $0 < i < n-1$.  Let $\mathit{range}(u_0) = [0, u_1)$ and $\mathit{range}(u_{n-1}) = [u_{n-1}, N)$.  $\textsc{Avatar}_{\mathcal{F}}(N, V)$ is a graph with node set $V$ and edge set consisting of two edge types:
\begin{description}
\item[Type 1:] $\{(u_i, u_{i+1}) | i=0,\ldots,n-1\}$
\item[Type 2:] $\{(u_i, u_j) |  u_i \neq u_j \wedge \exists (a,b) \in E(F_N), a \in \mathit{range}(u_i) \wedge b \in \mathit{range}(u_j)\}$
\end{description}
\label{defn:avatar}
\end{definition}

\begin{theorem}
Let $\mathcal{F}$ be an arbitrary full graph family, and let $\textsc{Avatar}_{\mathcal{F}}(N, V)$ be an overlay network for some arbitrary $N$ and $V$, with all $u \in V$ having knowledge of $N$.  $\textsc{Avatar}_{\mathcal{F}}(N, V)$ is locally checkable.
\label{theorem:locally_checkable_mutable}
\end{theorem}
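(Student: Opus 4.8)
The plan is to show that the local-checkability condition holds by exhibiting, for every configuration $G$ on node set $V$ whose topology is not exactly $\textsc{Avatar}_{\mathcal{F}}(N,V)$, at least one node that can detect this from its own state and that of its neighbors, and conversely that when the topology \emph{is} $\textsc{Avatar}_{\mathcal{F}}(N,V)$ no node detects a fault. The key observation driving the whole argument is that every edge of $\textsc{Avatar}_{\mathcal{F}}(N,V)$ is determined by purely \emph{local} data: a Type-1 edge $(u_i,u_{i+1})$ is determined once a node knows it and its neighbor are consecutive in the sorted order of $V$, and the range of a node $u_i$ is determined entirely by $u_i$ and its successor $u_{i+1}$ (with the two boundary cases using $0$ and $N$, which every node knows since it knows $N$). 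Hence a node $u_i$ that knows $N$, its own id, and the ids of its neighbors can compute $\mathit{range}(u_i)$ as soon as it has identified its successor, and — crucially — since $F_N$ is the unique graph in $\mathcal{F}$ on node set $[N]$ and $N$ is common knowledge, $u_i$ can compute the entire edge set $E(F_N)$ and therefore the exact set of Type-2 edges that should be incident on it, \emph{provided} it also knows the ranges of its neighbors, i.e. the successor of each neighbor.

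The main structural step, then, is to argue that this "knows its successor and its neighbors' successors" information really is locally available in a correct configuration and that its \emph{absence} is itself locally detectable. Concretely I would have each node's state advertise (its id and) the id of the neighbor it believes to be its successor. A node $u$ can check consistency of the successor pointers among its neighbors: in $\textsc{Avatar}_{\mathcal{F}}(N,V)$ the Type-1 edges form a path through all of $V$ in sorted order, so $u$'s successor is simply its smallest-id neighbor larger than $u$, and $u$ is the successor-target of its largest-id neighbor smaller than $u$. If the advertised successor pointers are inconsistent with this (or missing), some incident node detects it. Once every node correctly knows its own range and, via its neighbors' advertised successors, its neighbors' ranges, each node $u$ recomputes the prescribed incident edge set: the two Type-1 edges to its range-neighbors, and every Type-2 edge $(u,v)$ for which $E(F_N)$ contains a pair $(a,b)$ with $a\in\mathit{range}(u)$, $b\in\mathit{range}(v)$. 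If $u$'s actual neighborhood differs from this prescribed set in either direction — an extra edge or a missing edge — then either $u$ or the node at the other end of the discrepant edge detects it (for a missing edge $(u,v)$, note $v$ would be at distance $2$ via the Type-1 path, but more simply: the endpoint that is present in the configuration and sees a wrong neighbor set detects the fault; a genuinely missing edge between two otherwise-correct nodes is caught because each of them can compute that the edge \emph{should} be present and is not). Finally, for the converse direction I would simply observe that in $\textsc{Avatar}_{\mathcal{F}}(N,V)$ itself, every node, running the same check, finds its successor pointers consistent and its incident edge set exactly equal to the prescribed set, hence no detector exists.

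The subtle point — and the one I expect to be the main obstacle — is the detection of a \emph{missing} Type-2 edge between two nodes both of whose local neighborhoods otherwise look fine: node $u$ can compute that $(u,v)$ should exist, but if $v\notin N(u)$ then $v$'s state is not available to $u$, so $u$ cannot by itself confirm that $v\in V$ and that $\mathit{range}(v)$ contains the relevant endpoint $b$. The resolution is to lean on the Type-1 path: in any configuration where the Type-1 (successor/predecessor) structure is locally consistent at every node, that structure is forced to be a single sorted path spanning all of $V$, so $u$ \emph{does} know the full sorted node set from the chain of successor pointers visible along its own incident edges — or, more carefully, the argument is staged: first argue local checkability of the linearized (Type-1) sub-structure on its own (this is the standard linearization/list argument), and only once that holds is each node guaranteed correct global knowledge of $V$ and hence of all ranges, at which point the Type-2 edges are a deterministic function of data every node can verify. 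I would make sure the detector definition is invoked correctly here: it suffices that \emph{some} node detects illegality, so I can always pick the node whose local view is violated, and the staging ensures that if no Type-1 node is a detector then every node's global picture is accurate and any Type-2 error becomes locally visible at one of its endpoints.
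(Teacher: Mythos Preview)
Your overall approach matches the paper's: have each node compute its own range from its successor, learn its neighbors' ranges from their advertised successor/predecessor ids, and then check that every incident edge is a legitimate Type-1 or Type-2 edge and that every required edge is present. That is exactly the paper's proof sketch.

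Where you go astray is in the ``subtle point'' about a missing Type-2 edge. You frame the difficulty as: $u$ cannot confirm that a specific node $v$ exists in $V$ with $b\in\mathit{range}(v)$, and you resolve it by asserting that once the Type-1 structure is locally consistent everywhere, each node is ``guaranteed correct global knowledge of $V$ and hence of all ranges.'' That implication is false: even when the linearized path is globally correct, a node still sees only its immediate neighbors' states, not all of $V$. So the resolution you give is not a local check.

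The good news is that the obstacle you identified is illusory, and the fix is simpler than what you propose. Node $u$ never needs to know the identity of the absent $v$. Once the Type-1 structure is correct (so $u$'s own range is correct and each neighbor's advertised range is correct), $u$ simply enumerates every edge $(a,b)\in E(F_N)$ with $a\in\mathit{range}(u)$ and $b\notin\mathit{range}(u)$, and checks whether \emph{some} current neighbor's range contains $b$. If no neighbor covers $b$, $u$ is a detector. This uses only $N$, $u$'s range, and the ranges of $u$'s neighbors --- precisely the data the paper says is sufficient. Replace your ``global knowledge of $V$'' step with this coverage check and the staged argument goes through cleanly.
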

\proofsketch To prove this theorem, note each node can calculate its range using only its neighborhood.  As a node $u$ receives the state of its neighbors in each round, $u$ can also calculate the range of its neighbors.  This information is sufficient for each node $u$ to verify every neighbor $v \in N(u)$ is either from a type 1 or type 2 edge.  As all nodes know $N$ and there is exactly one $F_N \in \mathcal{F}$, all nodes can verify their type 1 and type 2 edges correctly map to the given network.

Interestingly, $\textsc{Avatar}_{\mathcal{F}}$ is locally checkable even when only $\mathcal{O}(\log n)$ bits of information are exchanged between neighbors.  Specifically, if every node shares with its neighbors (i) its identifier, and (ii) the identifier of its predecessor and successor, $\textsc{Avatar}_{\mathcal{F}}$ is locally checkable.

\subsection{The Full Graph Family $\textsc{Cbt}$}
\label{section:cbt}
Our goal is to create a self-stabilizing \textsc{Avatar} network which maintains low degree during stabilization and yet stabilizes quickly.  To this end, we selected a graph family of a simple data structure with constant degree and logarithmic diameter for all nodes: a binary search tree.  As we will demonstrate, not only does a complete binary search tree have low degree and diameter, but an embedding in \textsc{Avatar} of the binary search tree does as well.

More formally, consider a simple graph family based upon the \emph{complete binary search tree}.  We define the full graph family $\textsc{Cbt}$ by defining $\textsc{Cbt}(N)$ recursively in Definition \ref{defn:cbt}.

\begin{definition}
\label{defn:cbt}
For $a \leq b$, let $\textsc{Cbt}[a,b]$ be a binary tree rooted at $\mathit{r} = \lfloor (b+a)/2 \rfloor$.  Node $r$'s left cluster is $\textsc{Cbt}[a,r-1]$, and $r$'s right cluster is $\textsc{Cbt}[r+1,b]$.  If $a > b$, then $\textsc{Cbt}[a,b] = \bot$.  We define $\textsc{Cbt}(N) = \textsc{Cbt}[0,N-1]$.  Let the \emph{level} of a node $d$ in $\textsc{Cbt}[0,N-1]$ be the distance from $d$ to root $\lfloor N-1 / 2 \rfloor$.
\end{definition}

\subsubsection{Diameter and Maximum Degree of $\textsc{Avatar}_{\textsc{Cbt}}$}
All dilation-1 embeddings preserve the diameter of the guest network, meaning $\textsc{Avatar}_{\textsc{Cbt}}$ has $\mathcal{O}(\log N)$ diameter.  However, a node $v$ in our embedding may have a large $\Phi^{-1}(v)$ -- that is, many nodes from the guest network may map to a single node in the host network.  Surprisingly, however, the host nodes for $\textsc{Avatar}_{\textsc{Cbt}}$ have a small degree \emph{regardless of $\Phi^{-1}$}.  We sketch the proof for this result below.

\begin{theorem}
For any node set $V \subseteq [N]$, the maximum degree of any node $u \in V$ in $\textsc{Avatar}_{\textsc{Cbt}}(N, V)$ is at most $2 \cdot \log N + 2$.
\label{theorem:max_degree}
\end{theorem}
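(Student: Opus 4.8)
The plan is to count separately the Type~1 and Type~2 edges incident to an arbitrary node $u_i \in V$. The Type~1 edges contribute at most $2$, since they are exactly the edges to the linearization-predecessor and -successor of $u_i$. For the Type~2 edges, I would first argue that every Type~2 neighbor $u_j$ of $u_i$ is witnessed by some tree edge $(a,b) \in E(\textsc{Cbt}(N))$ with $a \in \mathit{range}(u_i)$ and $b \notin \mathit{range}(u_i)$; since the ranges partition $[N]$, the endpoint $b$ alone determines $u_j$, so the number of Type~2 neighbors of $u_i$ is at most the number of edges of $\textsc{Cbt}(N)$ that have exactly one endpoint in the contiguous interval $I = \mathit{range}(u_i) = [u_i, u_{i+1})$. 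It therefore suffices to show that the ``boundary'' of any interval $I \subseteq [N]$ in a complete binary search tree consists of at most $2 \lfloor \log_2 N \rfloor$ edges.

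Next, I would split the boundary edges of $I$ according to whether the endpoint outside $I$ has identifier smaller than $u_i$ or at least $u_{i+1}$; equivalently, these are the tree edges straddling the value gap just below $u_i$, respectively just above $u_{i+1} - 1$. By symmetry it is enough to bound, for a fixed $k$ with $1 \le k \le N-1$, the number of edges $(x,y)$ of $\textsc{Cbt}(N)$ with $\mathit{id}(x) < k \le \mathit{id}(y)$. The key structural claim is that all such ``$k$-straddling'' edges lie on a single downward path $P$: the path that starts at the root and at each node $v$ descends to the left child if $\mathit{id}(v) \ge k$ and to the right child otherwise. Because a subtree of $\textsc{Cbt}$ spans a contiguous interval of identifiers delimited by its parent's identifier, any subtree hanging off $P$ (rooted at the child not chosen at some $v \in P$) has all of its identifiers on the same side of the gap as $v$, hence contains no $k$-straddling edge; and the edge from $v$ to that hanging child is not straddling either. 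Consequently the $k$-straddling edges are exactly the edges of $P$ at which $\mathit{id}$ switches sides of the gap, and there are at most as many as there are edges on $P$, hence at most $\mathrm{height}(\textsc{Cbt}(N))$ of them.

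Finally, I would verify that $\mathrm{height}(\textsc{Cbt}(N)) = \lfloor \log_2 N \rfloor$ using the recursion that a size-$s$ instance splits into children of sizes $\lfloor (s-1)/2 \rfloor$ and $\lceil (s-1)/2 \rceil$. This gives at most $\lfloor \log_2 N \rfloor$ straddling edges per gap, hence at most $2 \lfloor \log_2 N \rfloor$ boundary edges for $I$ (with the left count vanishing when $u_i = 0$ and the right count vanishing when $i = n-1$). Adding the at most $2$ Type~1 edges, the degree of $u_i$ in $\textsc{Avatar}_{\textsc{Cbt}}(N, V)$ is at most $2 \lfloor \log_2 N \rfloor + 2 \le 2 \log N + 2$.

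The step I expect to be the crux is the structural claim that every gap-straddling edge lies on the single path $P$; the rest is bookkeeping. The care needed is that $\textsc{Cbt}$ is only balanced, not perfect, so one must work from the exact recursive definition both to conclude that each subtree is a contiguous identifier interval delimited by ancestor identifiers (so that hanging subtrees sit entirely on one side of a gap) and to pin the worst-case height down to $\lfloor \log_2 N \rfloor$, which is what makes the additive constant come out as stated.
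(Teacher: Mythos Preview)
Your argument is correct and reaches the stated bound, but via a different decomposition from the paper's. The paper groups the $\textsc{Cbt}$ edges leaving the segment $\Phi^{-1}(u_i)=\mathit{range}(u_i)$ by \emph{span size} (equivalently, by level): it asserts that in $\textsc{Cbt}(N)$ any two edges of equal span have disjoint spans, so at most two edges of each size can leave a contiguous segment, and with $\log N$ distinct span sizes this yields at most $2\log N$ crossing edges. You instead group the crossing edges by \emph{which end of the interval they cross} and invoke the search-path property of binary search trees to get at most $\mathrm{height}(\textsc{Cbt}(N))=\lfloor\log_2 N\rfloor$ crossings per end. Your route is more self-contained---the search-path lemma is a standard BST fact that you actually prove, whereas the paper's span-disjointness is stated as following ``by definition'' without further argument---and it makes explicit that the bound is really $2\cdot\mathrm{height}+2$ for \emph{any} binary search tree embedded this way, not just the median-split $\textsc{Cbt}$. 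The paper's per-level viewpoint, on the other hand, dovetails with the level-by-level $\mathit{PFC}$ and merge machinery used later in the algorithm. One small imprecision in your write-up: the left count vanishes whenever $i=0$ (since $\mathit{range}(u_0)=[0,u_1)$ by definition), not only when $u_i=0$; this does not affect the bound.
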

\proofsketch Consider $\Phi^{-1}(u)$, the subset of nodes from $[N]$ mapped to node $u$.  Let $[N]_j$ be the set of all nodes at level $j$ of $\textsc{Cbt}(N)$.  There are at most 2 nodes in $\Phi^{-1}(u) \cap [N]_j$ with a neighbor not in $\Phi^{-1}(u)$ -- that is, there are at most 2 edges from the range of a node $u$ to any other node outside this range for a particular level $j$ of the tree.  As there are only $\log N + 1$ levels, the total degree of any node in $\textsc{Avatar}_{\textsc{Cbt}}$ is at most $2 \cdot \log N + 2$.

\section{A Self-Stabilizing Algorithm}
\label{section:avatar_algo}
\subsection{Algorithm Overview}
At a high level, our self-stabilizing algorithm works on the same principle as the algorithm for constructing a minimum-weight spanning tree presented by Gallager, Humblet, and Spira~\cite{ghs_mst_83}.  The network is organized into disjoint clusters, each with a leader.  The cluster leaders coordinate the merging of clusters until only a single cluster remains, at which point the network is in a legal configuration.

Self-stabilizing overlay networks add an interesting aspect to this pattern.  To converge from an arbitrary weakly-connected configuration while limiting a node's degree increase requires coordination of merges, which requires either time (additional rounds) or bandwidth (additional edges).  In the overlay network model, we can increase both of these: we can add edges to the network and steps to our algorithm.  Our algorithm balances these aspects using the four components discussed below to achieve polylogarithmic convergence time and degree growth.
\begin{enumerate}
\item \textbf{Clustering:} As any weakly-connected initial configuration is allowable, we must ensure all nodes quickly join a cluster, as well as have a way to efficiently communicate within their cluster.  We define a cluster for \Avatar and present mechanisms for creating and communicating within clusters.
\item \textbf{Matching:} Progress comes from clusters in the system merging and moving towards a single-cluster configuration.  However, as we will show, merging clusters results in an $\mathcal{O}(\log^2 N)$ degree increase for each merging cluster.  To control this degree growth, we limit a cluster to merging with at most one other cluster at a time.  We determine which clusters should merge by creating a matching.  We introduce a mechanism to create sufficiently-many matchings on \emph{any} topology.  Our mechanism relies on the ability to add edges in the overlay network model.
\item \textbf{Merging:} Once two clusters are matched they can merge together into one.  Merging quickly requires sufficient ``bandwidth'' (in the form of edges) between two clusters.  However, to limit degree increases, these edges must be created carefully.  We present an algorithm for merging two clusters quickly while still limiting the number of additional edges that are created.
\item \textbf{Termination Detection:} Finally, to ensure our algorithm is silent, we define a simple mechanism for detecting when the legal configuration has been reached, allowing our algorithm to terminate.
\end{enumerate}

We discuss these components below, providing sketches of the algorithms and analysis.  The full algorithms and analysis can be found in the appendix, along with several clarifying figures demonstrating the merge process.

\subsection{Clustering}
\subsubsection{Defining a Cluster}
In the overlay network model, we can create clusters by defining both the nodes in the cluster, as well as the \emph{topology} of the cluster.  Our algorithm uses the following cluster definition.

\begin{definition}
Let $G$ be a graph with node set $V$.  A \emph{\textsc{Cbt} cluster} is a set of nodes $V' \subseteq V$ in graph $G$ such that $G[V']$, the subgraph of $G$ induced by $V'$, is $\textsc{Avatar}_{\textsc{Cbt}}(N,V')$.
\end{definition}

Notice our cluster can be thought of on two levels: on one level, it consists of an $N$-node guest \textsc{Cbt} network, while on the other level, it consists of host nodes $V'$.  We call the root of the guest \textsc{Cbt} network the \emph{root} of the cluster.  Figure \ref{fig:clusters_real} contains the (host) graph $G$ with two clusters -- one consisting of nodes in various shades of green, and one with nodes in shades of red.  The two (guest) \textsc{Cbt} networks corresponding to these clusters are given in Figure \ref{fig:clusters_virtual}.

\begin{figure}
\centering
   \includegraphics[scale=0.3]{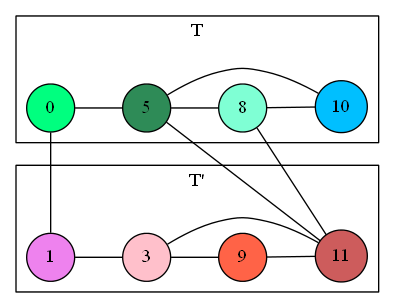}
   \caption{Host nodes of clusters $T$ (top) and $T'$ (bottom)}
   \label{fig:clusters_real}
\end{figure}
\begin{figure}
\centering
   \includegraphics[scale=0.27]{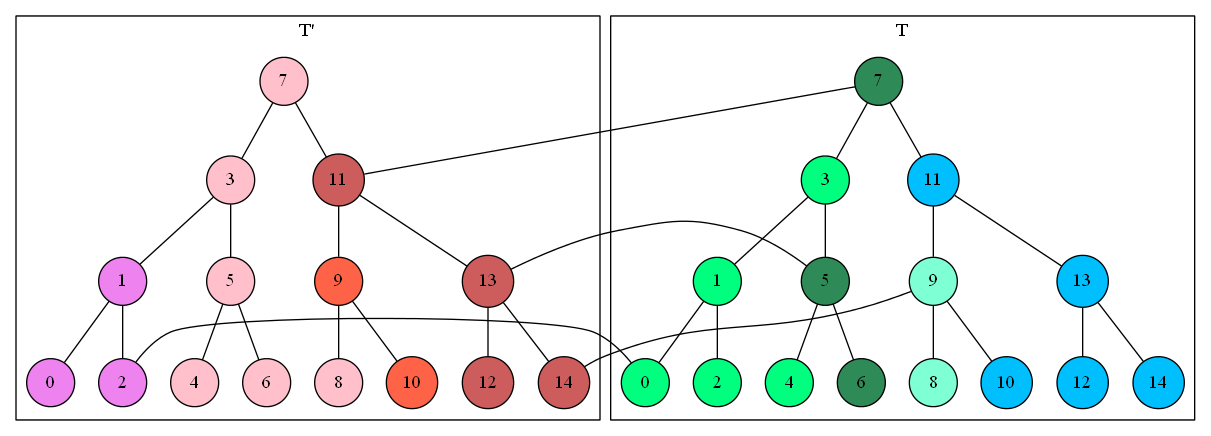}
   \caption{Guest Nodes for $T$ (right) and $T'$ (left)}
   \label{fig:clusters_virtual}
\end{figure}

To ensure all nodes are members of a cluster quickly, our \textsc{Cbt} clusters (from here on, simply clusters) must be locally checkable.  For local checkability, we add two variables to each node: a cluster identifier $cluster_u$ containing the identifier of the host of the root node in the cluster, and a cluster predecessor $clusterPred_u$ and successor $clusterSucc_u$, set to the closest identifiers in the subgraph induced by nodes with the same cluster identifier.  Let the \emph{cluster range of $u$} be the range of $u$ defined by the \emph{cluster} predecessor and successor.  We call a valid cluster a set of nodes $V'$ such that (i)the subgraph induced by $V'$ matches $\textsc{Avatar}_{\textsc{Cbt}}(N, V')$, and (ii) the legal range in $\textsc{Avatar}_{\textsc{Cbt}}(N, V')$ matches the \emph{cluster range} of $u$ in the configuration $G$.

Like $\textsc{Avatar}_{\textsc{Cbt}}(N, V')$, a cluster of nodes $V'$ is locally checkable.  The proof follows closely the proof that $\textsc{Avatar}_{\textsc{Cbt}}$ is locally checkable: nodes in a cluster $V'$ can calculate their cluster range, as well as the cluster range of their neighbors with matching cluster identifiers.  This allows them to check if their edges are from \Avatar$(N, V')$.  Furthermore, if the cluster identifier is invalid, at least one node can detect this.  We shall see later that there are some cases where a node is not a member of a cluster due to program actions (i.e. merge), and we will show this is also locally checkable.

In the self-stabilizing setting, there is no guarantee that each node begins execution belonging to a valid cluster.  To ensure this quickly, we define a ``reset'' operation nodes execute when a particular configuration is detected.  We say that a node $u$ has detected a \emph{reset fault} if $u$ detects (i) it is not a member of a cluster, (ii) this is not a state reachable from a ``legal'' merge (as we shall see, merges require coordination which allows nodes to differentiate invalid clusters caused from merging, and those from a fault), and (iii) it did not reset in the previous round.  When $u$ detects a reset fault, it ``resets'' to its own cluster of size 1.

\subsubsection{Intra-Cluster Communication}
\label{section:pfc}
Our algorithms rely on a systematic and reliable method of communication amongst nodes in the same cluster $T$.  We use a non-snap-stabilizing variant of the \emph{propagation of information with feedback and cleaning} ($\mathit{PFC}$) algorithm~\cite{bui_pfc_99}, which we ``simulate'' on the guest network $\textsc{Cbt}$ network for a cluster $T$ (denoted $\textsc{Cbt}_T(N)$).  The root node initiates a \emph{$\mathit{PFC}$ wave}, which (i) propagates information down the tree level-by-level until reaching the leaves, (ii) sends a feedback wave from the leaves to the root, passing along any requested feedback information, and (iii) prepares all nodes for another $\mathit{PFC}$ wave.

To allow the host network to simulate the $PFC$ algorithm, it is sufficient to append the ``level'' of the sender in the guest network to each message.  For instance, imagine the guest root wishes to initiate a $PFC$ wave by sending the message $m$ to its two children.  The host of the root will send the message $(m, 0)$ to the (at most two) hosts of the children.

\subsubsection{Analysis of Cluster Creation and Communication}
Below we analyze the performance of our cluster creation and communication mechanisms.

\begin{lemma}
Every node $u$ will be a member of a cluster in $\mathcal{O}(\log N)$ rounds.
\label{lemma:reset}
\end{lemma}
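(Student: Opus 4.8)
\noindent\emph{Proof proposal.}
The plan is to reduce the claim to two facts: (a) a singleton is always a valid cluster, so a reset immediately places a node in a cluster; and (b) any node that is not a member of a valid cluster and whose state is not reachable from a legal merge detects a reset fault within $\mathcal{O}(\log N)$ rounds. Fact (a) is immediate from Definition \ref{defn:avatar}: for $V'=\{u\}$ we have $\mathit{range}(u)=[0,N)$ and no second node, so $\textsc{Avatar}_{\textsc{Cbt}}(N,\{u\})$ is the isolated vertex $u$ with the trivial self-range, which a node can realize in one round. Moreover, the cluster membership of $u$ changes only through $u$'s own actions, and a valid cluster has no detectors (local checkability of clusters, proved exactly as Theorem \ref{theorem:locally_checkable_mutable}); so once $u$ is a member of a valid cluster it stays one until it voluntarily enters a legal merge. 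Hence it suffices to show every node becomes a member at least once within $\mathcal{O}(\log N)$ rounds.

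For (b) I would first dispose of the two guard conditions. The guard ``$u$ did not reset in the previous round'' is just a stored bit; whatever its initial value, it is truthful after one round, so it delays a reset by at most one round. The guard ``this is not a state reachable from a legal merge'' is the delicate one: a genuine merge-transition is validated by the root's coordination, carried by a $\mathit{PFC}$ wave on $\textsc{Cbt}_T(N)$, and since $\textsc{Avatar}_{\textsc{Cbt}}$ has diameter $\mathcal{O}(\log N)$ such a wave completes in $\mathcal{O}(\log N)$ rounds; I would argue that a claimed merge-transition not backed by a matching, in-progress $\mathit{PFC}$ wave is exposed within one wave. The remaining content is: if $u$'s cluster is invalid, then by local checkability of clusters some node $w$ in that cluster is a detector. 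If $w=u$, $u$ resets within $\mathcal{O}(1)$ rounds. Otherwise, using that a $\textsc{Cbt}$-cluster whose local views are everywhere consistent inherits the $\mathcal{O}(\log N)$ diameter of $\textsc{Avatar}_{\textsc{Cbt}}$, the detector $w$ is within $\mathcal{O}(\log N)$ hops of $u$; resetting $w$ (and the detectors its reset exposes) perturbs the local views along a path to $u$, so within $\mathcal{O}(\log N)$ rounds $u$ itself becomes a detector and resets, landing in a valid singleton by (a).

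The main obstacle I anticipate is making this propagation step run in $\mathcal{O}(\log N)$ rather than $\mathcal{O}(n)$ rounds: I must rule out a slow cascade in which each round exposes only a single new detector, marching along a long chain. I expect to close this by exploiting the rigidity of $\textsc{Avatar}_{\textsc{Cbt}}$ --- a subgraph that is locally consistent with $\textsc{Avatar}_{\textsc{Cbt}}(N,V')$ at all but a few nodes must already be $\mathcal{O}(\log N)$-close to globally correct, so a single detector's reset exposes detectors throughout an $\mathcal{O}(\log N)$-radius region at once. A secondary obstacle is the merge carve-out: a fully rigorous argument that a spurious merge-transition cannot persist for $\omega(\log N)$ rounds rests on the self-stabilization properties of the matching and merging protocols developed later (and in the appendix), so I would isolate the exact property needed here and discharge it there.
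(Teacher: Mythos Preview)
Your decomposition matches the paper's: singleton clusters handle the reset case, the merge carve-out is deferred to the merge analysis (which indeed shows a spurious merge either completes or collapses in $\mathcal{O}(\log N)$ rounds), and in the remaining case a detector lies within $\mathcal{O}(\log N)$ hops of $u$ and the reset cascades one hop per round back to $u$.

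The only place you diverge is the ``main obstacle'' paragraph, and there you are manufacturing a difficulty you have already dispatched. You correctly argue that a locally-consistent $\textsc{Cbt}$-cluster inherits the $\mathcal{O}(\log N)$ diameter, so the detector $w$ is within $\mathcal{O}(\log N)$ hops of $u$ along a path of nodes sharing $u$'s cluster identifier. Given that, a cascade that exposes \emph{exactly one} new detector per round is perfectly fine: one hop per round over an $\mathcal{O}(\log N)$-length path is $\mathcal{O}(\log N)$ rounds. This is precisely the paper's argument --- when $v_k$ resets, its former cluster neighbor $v_{k-1}$ loses a required cluster edge and detects a reset fault the next round, and so on back to $u$. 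There is no need for a ``rigidity'' claim that a single reset exposes detectors throughout an $\mathcal{O}(\log N)$-radius region simultaneously; that stronger statement is neither used nor obviously true, and attempting to prove it would be wasted effort.
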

\proofsketch The lemma holds easily for nodes that are members of a cluster initially.  Consider a node $u$ that is not a member of a cluster.  If a reset fault is detected by $u$, then $u$ becomes a size-1 cluster in one round.  If no reset fault is detected, either (i) $u$ believes it is participating in a merge, or (ii) $u$ believes it is a member of a cluster.  For case (i), we will show later that the merge process is locally checkable (that is, if a configuration is reached that is not a valid merge, at least one node detects this), and that every node that detects an invalid cluster from a merge will either complete the merge in $\mathcal{O}(\log N)$ rounds, or reset in $\mathcal{O}(\log N)$ rounds, satisfying our claim.  For case (ii), as clusters are locally checkable, there must be a shortest path of nodes $u,v_0,v_1,\ldots,v_k$, with $k = \mathcal{O}(\log N)$, such that all nodes in the path have a cluster identifer matching the identifier of $u$, and where $v_k$ is the only node which detects a reset fault.  When $v_k$ executes a reset, it will cause node $v_{k-1}$ to detect a reset fault in the next round.  In this way, the reset will ``spread'' to $u$ in $\mathcal{O}(\log n)$ rounds, resulting in $u$ executing a reset, satisfying our claim.

\begin{lemma}
After $\mathcal{O}(\log N)$ rounds, if a set of nodes $T \subseteq V$ forms a cluster, no node in $T$ will execute a reset action until an external fault further perturbs the system.
\end{lemma}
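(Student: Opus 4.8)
The plan is to argue by induction on rounds that, once the node set $T$ forms a (valid) cluster at some round $r = \Omega(\log N)$, those nodes stay ``well-behaved'' forever after (absent an external fault), where ``well-behaved'' will be shown to preclude a reset. First I would recall the precise trigger for a reset: a node $u$ resets only upon detecting a \emph{reset fault}, and the first condition of a reset fault is that $u$ is not a member of a cluster. So the core observation is immediate: if $u \in T$ and $T$ forms a valid cluster in round $r'$, then $u$ \emph{is} a member of a cluster, and — because clusters are locally checkable, as argued earlier — $u$ certifies this from its own state together with that of its neighbors with matching cluster identifier; hence $u$ detects no reset fault in round $r'$ and does not reset.

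The work then shifts to showing that the property ``each node of $T$ is a member of a cluster, or the nodes of $T$ are jointly participating in a coordinated merge'' is closed under one round when no external fault occurs. Here I would appeal to the structure of the algorithm: the only actions that change which cluster a node belongs to are (a) a reset, which the induction hypothesis has just ruled out for nodes of $T$, and (b) a merge, which by design is initiated only by a cluster root and is itself locally checkable, so a node temporarily lacking a cluster because of a merge is in a state satisfying condition (ii) of the reset-fault definition and therefore still does not reset. I would also observe that no node outside $T$ can corrupt the induced subgraph $G[T]$: in the overlay model an external node may add an edge incident to a node of $T$, but never an edge \emph{internal} to $T$, and the intra-cluster edges of $\textsc{Avatar}_{\textsc{Cbt}}(N,T)$ are added or removed only by the coordinated cluster actions just discussed. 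These together yield the inductive step, and note the conclusion persists even after $T$ itself dissolves into a larger cluster through a merge, which is exactly what the phrase ``until an external fault further perturbs the system'' requires.

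The role of the ``$\mathcal{O}(\log N)$ rounds'' prefix is to discharge stale, adversarially planted bookkeeping: before a cluster can be relied upon, any residual ``merge'' flags or recent-reset markers must have either driven their owners to completion or been detected and cleared, and because a \textsc{Cbt} cluster has diameter $\mathcal{O}(\log N)$ this detection and clearing propagates in $\mathcal{O}(\log N)$ rounds, mirroring the spreading argument used in Lemma~\ref{lemma:reset}. After that point the reset-fault predicate behaves exactly as in fault-free operation, and the induction above applies verbatim.

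I expect the main obstacle to be case (b): ruling out a reset by a node of $T$ that is, or becomes, entangled in a merge. This is where we must lean hardest on the (separately established) guarantees of the merge protocol — that merges are coordinated and locally checkable, that a node can distinguish ``no cluster because of a legitimate merge'' from ``no cluster because of a fault'', and that a merge started from a valid cluster either completes or is detected within $\mathcal{O}(\log N)$ rounds — so that condition (ii) of the reset-fault definition is genuinely satisfied throughout. Making this airtight will require invoking the merge analysis of the appendix rather than any self-contained argument.
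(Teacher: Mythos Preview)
Your proposal is correct and mirrors the paper's argument: the paper likewise reasons that membership in a valid cluster precludes the reset-fault predicate, that only a merge can alter cluster membership, and that merges initiated from proper clusters complete successfully (made precise in the appendix via Lemma~\ref{dlemma:clean_merge}, which uses the pre-merge $\mathit{PFC}$ wave to certify that the partner cluster is also proper). The one refinement worth noting is that the paper attributes the $\mathcal{O}(\log N)$ prefix specifically to corrupted $\mathit{PFC}$ state rather than to stale merge flags or reset markers---a set of nodes can satisfy the cluster topology yet carry inconsistent $\mathit{PFC}$ variables from the adversarial initial configuration, and it is this that takes $\mathcal{O}(\log N)$ rounds to clear.
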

\proofsketch This proof follows from the observation that once $u$ is part of a cluster $T$, no action $u$ executes will cause it to leave cluster $T$ unless it is merging with another cluster $T'$, which will successfully complete and result in a new valid cluster $T''$, with $V(T') \subseteq V(T'')$.  The initial ``delay'' of $\mathcal{O}(\log N)$ rounds is for a special circumstance: the case where a node is part of a cluster but the $PFC$ mechanism is corrupted.  This can only happen in the initial configuration, and it is corrected (either with the $PFC$ mechanism or through resets) in $\mathcal{O}(\log N)$ rounds, meaning our claim still holds.

The analysis from this point forward shall assume the system is in a ``reset-free'' configuration consisting of valid clusters and merging clusters.

\subsection{Matching}
For a merge to occur quickly, we can add edges to increase ``bandwidth'', but we must be careful to limit the resulting degree increase.  Therefore, we allow a cluster $T$ to merge with at most one other cluster at any particular time.  We achieve this by calculating a matching.  We say that a cluster $T$ has been assigned a \emph{merge partner} $T'$ if and only if the roots of $T$ and $T'$ have been connected by the matching process described below.  We say that a cluster $T$ is \emph{matched} if it has been assigned a merge partner, and unmatched otherwise.

Our goal is to find a large matching on the \emph{cluster graph} $G_c$ induced by the clusters in configuration $G$, where a node $v_T$ in $G_c$ corresponds to a cluster $T$ in $G$, and an edge $(v_T, v_{T'})$ corresponds to an edge between at least one node $u \in T$ and node $u' \in T'$.  To find a matching, we use a randomized symmetry-breaking technique.  Note there are topologies where even a maximum matching consists of only a small number of nodes (e.g. a star topology has a maximum matching of a single pair).  If merges were limited to those found by a matching, then, convergence would be slow.  However, one can identify large matchings on the square of the graph $G_c^2$ (the graph resulting from connecting all nodes of distance at-most 2 in $G$).  Since we are in the overlay network model, a matching on $G_c^2$ can become a (distance-one) matching by adding a single edge between matched clusters.  Our matching algorithm, then, creates a matching on the square of the cluster graph of the network, denoted $G_c^2$.  We provide a sketch of the matching algorithm in Algorithm \ref{algo:match_sketch}, and a discussion below.

To create this matching, our algorithm uses two different \emph{roles}, selected by the cluster root: \emph{leaders} and \emph{followers}.  Leaders connect followers together to form a matching on $G_c^2$.  A cluster root chooses the cluster's role uniformly at random, with the exception of one special case, discussed below (clusters which are merging become leaders if they were ``followed'' during their merge).

Consider a follower cluster.  We define two types of followers: \emph{long followers} and \emph{short followers}.  Short followers will only search for a leader for a ``short'' amount of time ($4 \log N$ rounds), while long followers will search for a (slightly) ``longer'' time ($24 \log N$).  Long and short followers are used to ensure the scenario where a cluster and all of its neighbors are ``stuck'' searching for a leader sufficiently rare.  Each node $u \in T$ will check $N(u)$ for a node $v$ such that $v$ is in another cluster $T'$ and $v$ is a potential leader.  A \emph{potential leader} is a node which either (i) has the role of leader and is ``open'' (see below), or (ii) is merging, and thus ``available'' for followers.  A node $u \in T$ will (i) mark one potential leader as ``followed'' (if one such neighbor exists), (ii) receive at most two edges to potential leaders from its children, and (iii) forward at most one edge incident on a potential leader to $u$'s parent.  Eventually, at most two such edges reach the root of the cluster.  At this point, the root waits for the selected leader to assign it a merge partner.

For the case where a root has selected the role of leader, the root begins by propagating the role to all nodes in the cluster.  At this point, nodes are considered \emph{open leaders}, and neighboring follower nodes can ``follow'' these leaders.  After this $PFC$ wave completes, the root sends another $PFC$ wave asking nodes in $T$ to (i) become \emph{closed leaders} (no node can select them as a potential leader), and (ii) connect any current followers as merge partners.  Nodes in $T$ will connect all followers incident upon them as merge partners, thus creating the matching on $G_c^2$.  If a node $u \in T$ has an odd number of followers, it simply matches the pairs of clusters, and then forwards the one ``extra'' edge to $u$'s parent.  This guarantees all followers will find a merge partner: the root of $T$ will either match the final two received followers, or set a follower as the merge partner for $T$.  Once the $PFC$ wave completes, the root either (i) begins the merge process with a follower $T'$ (if a merge partner was found), or (ii) randomly selects a new role.

\begin{algorithm}
\begin{tabbing}
........\=....\=....\=....\=....\=....\=....\kill
1.\>If no role, root $r_T$ selects a role uniformly at random: leader or follower.\\
2.\>If $r_T$ is a \emph{leader}:\\
3.\>\>$r_T$ uses $PFC$ to set all nodes as \emph{open leaders}\\
4.\>\>Upon completion of the wave, $r_T$ uses $PFC$ set all nodes as \emph{closed leader}\\
5.\>\>Upon completion of the wave, nodes connect all incident followers, and\\
\>\>forward to their parent the (at-most-one) unmatched follower\\
6.\>\>$r_T$ matches any received followers\\
7.\>\>$r_T$ either repeats the matching algorithm (if unmatched), or \\
\>\>\>begins merging (if matched)\\
8.\>Else\\
9.\>\>$r_T$ selects uniformly at random the role of \emph{long} or \emph{short} follower\\
10.\>\>Nodes in $T$ search for a leader.  Short followers search for $2$ $PFC$ waves,\\
\>\>while long followers search for $12$ $PFC$ waves\\
11.\>\>If a leader was found $r_T$ waits to be matched with a merge partner\\
12.\>\>Else $T$ repeats the matching algorithm\\
13.\>Endif
\end{tabbing}
\caption{The Matching Algorithm for Cluster $T$}
\label{algo:match_sketch}
\end{algorithm}

\subsubsection{Analysis of Matching}
The high-level ``idea'' behind our analysis is sketched below.

\begin{lemma}
Consider a cluster $T$ in a configuration $G$.  With probability at least $1/4$, all nodes in $T$ will be a potential leader for at least one round in the next $\mathcal{O}(\log N)$ rounds.
\end{lemma}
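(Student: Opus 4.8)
The plan is to reduce the statement to a single fair coin flip --- the root's choice of role in line~1 of Algorithm~\ref{algo:match_sketch} --- together with a deterministic timing argument for the ``open leader'' $PFC$ wave. I would first dispose of the case where $T$ is already merging in $G$: by definition every node of $T$ is then ``available for followers,'' i.e.\ a potential leader (case (ii) of the definition), so the claim holds with probability $1$; one need only note that a merging cluster enters and keeps this ``available'' state within $\mathcal{O}(\log N)$ rounds of the merge beginning (it advertises availability throughout the merge, since it may itself be followed). So assume henceforth that $T$ is a valid cluster with root $r_T$ running the matching algorithm.

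Next I would show that, whatever intermediate state $r_T$ occupies in the arbitrary (reset-free) configuration $G$, within $\mathcal{O}(\log N)$ rounds $r_T$ either starts a merge (back to the previous case) or executes line~1 and picks a role uniformly at random. Every phase of Algorithm~\ref{algo:match_sketch} that does not begin a merge has length $\mathcal{O}(\log N)$ --- a short-follower search is $4\log N$ rounds, a long-follower search $24\log N$ rounds, and a leader's open/closed-leader sequence is a constant number of $PFC$ waves on $\textsc{Cbt}_T(N)$, each $\Theta(\log N)$ rounds --- after which control returns to line~1; and by the local checkability of the cluster and of the matching protocol (cf.\ Lemma~\ref{lemma:reset} and the lemma following it), a corrupted $PFC$ or role variable inside $T$ is repaired, or triggers a fresh selection, in $\mathcal{O}(\log N)$ rounds. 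Let $t_0 = \mathcal{O}(\log N)$ be the round of this selection; with probability exactly $1/2$, $r_T$ picks the leader role.

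Conditioned on the leader choice, $r_T$ starts a $PFC$ wave on $\textsc{Cbt}_T(N)$ marking every node an \emph{open leader}. Since $\textsc{Avatar}_{\textsc{Cbt}}$ realizes a dilation-$1$ embedding, the host of each guest node simulates one wave step per round (a host carrying guest nodes from several levels merely sends more messages, still within its $\mathcal{O}(\log N)$ degree bound from Theorem~\ref{theorem:max_degree}), so open-leader status reaches the host of every guest node within $D = \mathcal{O}(\log N)$ rounds of $t_0$. A node leaves the open-leader state only when the \emph{subsequent} ``closed leader'' wave reaches it, at least one full $PFC$-wave duration $\Theta(D)$ later; as $\Theta(D) > D$, there is a nonempty interval of rounds in which the hosts of all guest nodes of $T$ are open leaders simultaneously, so in particular every node of $T$ is a potential leader for at least one round, all within $\mathcal{O}(\log N)$ rounds of $G$. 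Combining the cases gives probability at least $1/2$, hence the stated $1/4$; the bound $1/4$ is conservative and leaves slack for boundary effects (discretization of the round count, or intersecting with a probability-$\ge 1/2$ event that the $\mathcal{O}(\log N)$-round repair/phase-completion window has elapsed cleanly).

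The main obstacle I anticipate is the timing argument of the third step: I must verify that the open-leader and closed-leader waves overlap in time across \emph{all} levels of $\textsc{Cbt}_T(N)$ at once, which means accounting for the separate propagation, feedback, and cleaning phases of a $PFC$ wave against the tree's depth, and checking that the host-side simulation on $\textsc{Avatar}_{\textsc{Cbt}}$ neither slows the wave nor desynchronizes levels when $\Phi^{-1}$ is large. A secondary subtlety is making the ``fresh role selection within $\mathcal{O}(\log N)$ rounds'' claim robust against an adversarial $G$ --- ruling out a corrupted role or phase state that could stall $r_T$ longer than the phase bounds alone suggest --- which is precisely where local checkability of the matching protocol is essential.
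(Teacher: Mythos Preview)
Your argument is correct, and its skeleton matches the paper's: dispose of the merging case, then argue that the remaining cases lead within $\mathcal{O}(\log N)$ rounds to either a merge or a fresh random role selection, and finish with the leader coin flip and the open-leader $PFC$ wave. The decomposition differs in one useful way. The paper does a four-way case split on $T$'s current state (open leader, closed leader, follower, merging) and, in the follower case, chains \emph{two} coin flips---first that $T$ is a \emph{short} follower (probability $1/2$), so its search ends within $4(\log N+1)+4$ rounds, and then that the subsequent reselection yields \emph{leader} (probability $1/2$)---which is where the $1/4$ actually comes from. The reason the paper needs the short-follower coin is that it is tracking an explicit constant ($21(\log N+1)+20$ rounds), and a long-follower phase ($24(\log N+1)+24$ rounds) does not fit inside it. You instead absorb the long-follower phase into the generic $\mathcal{O}(\log N)$ bound and then need only the single leader/follower coin, giving $1/2$; your remark that $1/4$ is ``conservative slack'' is exactly right relative to your argument, though in the paper's accounting the $1/4$ is tight for the stated constant.

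One minor over-complication: your third paragraph works to establish a round in which \emph{all} nodes of $T$ are simultaneously open leaders. The lemma (and its use downstream) only needs that each node of $T$ is a potential leader for at least one round in the window, which follows immediately once the open-leader wave reaches that node and before the closed-leader wave does; the paper just says ``all nodes are potential leaders after at most $\log N+1$ additional rounds'' and moves on. Your synchronization argument is not wrong, but it is not needed, so the ``main obstacle'' you flag is not actually an obstacle.
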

\proofsketch There are four cases to consider based upon the configuration of $T$: $T$ is currently an open leader, $T$ is currently a closed leader, $T$ is a follower, and $T$ is currently merging.  If $T$ is an open leader, our claim holds.  If $T$ is a closed leader, in $\mathcal{O}(\log N)$ rounds $T$ will either begin a merge or select the new role of leader with probability $1/2$.  If $T$ is a follower, $T$ is a short follower with probability $1/2$, and after $4(\log N + 1) + 4$ rounds $T$ will select a new role of leader with probability $1/2$, or begin a merge and become a potential leader.  If $T$ is currently merging, then every node will be a potential leader for at least one round during the merging process, which will complete in $\mathcal{O}(\log n)$ rounds.

\begin{lemma}
Consider a cluster $T$ in configuration $G$.  With probability at least $1/16$, $T$ is assigned at least one merge partner over $\mathcal{O}(\log N)$ rounds.
\end{lemma}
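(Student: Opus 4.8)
The plan is to lower-bound the probability of the good event by exhibiting a single sufficient scenario: that $T$ becomes a \emph{long follower} at about the same time that some neighbouring cluster is an available potential leader, and stays a long follower long enough for a follow-edge to be routed from $T$ to that neighbour and fed back to $r_T$. First I would dispose of the degenerate case: in a reset-free configuration of valid clusters, if $T$ has no neighbour in the cluster graph $G_c$ then the whole system is a single cluster, which is the legal configuration, and there is nothing to prove. So assume $T$ has a neighbour $T'$ in $G_c$ and fix nodes $u \in T$ and $u' \in T'$ with $(u,u') \in E(G)$.

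Next I would introduce two events. If within the next $\mathcal{O}(\log N)$ rounds $T$ reaches a matched or merging state then it has been assigned a merge partner and we are done; otherwise, by the case analysis used in the preceding lemma, $r_T$ re-rolls a fresh role within $\mathcal{O}(\log N)$ rounds. Condition on this, and let event $B$ be that at this re-roll $r_T$ chooses \emph{follower} and then \emph{long follower}; then $\Pr[B] = (1/2)(1/2) = 1/4$, and $B$ launches a search of length $24\log N$. Let event $A$ be that every node of $T'$ is an available potential leader for at least one round during a window contained in $T$'s search window; applying the preceding lemma to $T'$ at the round $T$'s search begins gives $\Pr[A] \ge 1/4$, provided $24\log N$ dominates the $\mathcal{O}(\log N)$-round horizon of that lemma --- which is precisely why a long follower searches for a dozen $\mathit{PFC}$ waves rather than two. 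Since the role chosen by $r_T$ and the role chosen by $r_{T'}$ are functions of disjoint parts of $\Psi$, the events $A$ and $B$ are independent, so $\Pr[A \cap B] \ge 1/16$; and in the branch where $T$ was already matched the probability of a merge partner is $1$, so the bound $1/16$ holds unconditionally.

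It remains to check that $A \cap B$ forces $T$ to be assigned a merge partner. When both events hold there is a round during $T$'s search at which $u'$ is an available potential leader and $u$ performs its neighbourhood check, so $u$ marks $u'$ as followed. By the structure of the feedback phase of the $\mathit{PFC}$ wave --- a node forwards at least one incident follow-edge to its parent whenever its subtree contains one --- a follow-edge reaches $r_T$, which then waits to be matched. Because $T'$ was followed while open (or while merging), when $T'$ closes its leader wave (respectively, when its merge completes and it becomes a leader, per the rule that a cluster followed during a merge becomes a leader) it connects every incident follower as a merge partner, so $T$ is assigned one.

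The step I expect to be the main obstacle is the timing alignment hidden inside event $A$: $T$'s search window has a \emph{fixed} length, whereas a neighbour $T'$ may need up to $\mathcal{O}(\log N)$ rounds --- with a constant that depends on which component $T'$ is currently running, possibly including its own long-follower search --- before it next becomes a potential leader, and moreover one must ensure $T'$ stays available for $\Omega(\log N)$ consecutive rounds so that it is still available during some round in which $u$ actually checks $N(u)$. Making this precise is a matter of careful bookkeeping of the lengths of the various $\mathit{PFC}$ waves and of choosing the long-follower duration large enough to swallow the worst-case delay; the probabilistic content is the easy part.
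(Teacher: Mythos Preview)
Your proposal is correct and takes essentially the same approach as the paper: both argue that with probability $1/4$ the cluster $T$ becomes a long follower at its next role selection, and with probability at least $1/4$ (via the preceding lemma applied to a neighbouring cluster $T'$) a potential leader appears during $T$'s search window, then multiply to obtain $1/16$. You in fact supply more detail than the paper's own sketch---disposing of the degenerate single-cluster case, arguing independence of the two coin flips, and verifying that $A\cap B$ actually forces a merge-partner assignment---but the skeleton is identical, and the timing concern you flag in your final paragraph is exactly what the paper handles by making the long-follower duration large enough to absorb the worst-case delay of the preceding lemma.
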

\proofsketch This proof combines the previous lemma with the fact that a cluster has probability $1/4$ of being a long follower, which will ensure the cluster searches sufficiently long to detect at least one potential leader in a neighboring cluster.

\subsection{Merging}
\label{section:merge}
Our merging algorithm adds edges in a systematic fashion so that (i) there is enough ``bandwidth'' for two clusters to merge quickly while (ii) limiting degree increases.  Our single merging algorithm can be discussed from two points of view: one as two $N$-node clusters in the guest network merging into a single $N$-node cluster, and another as two clusters of host nodes systematically updating their cluster successors and predecessors.  Below, we present a discussion from both viewpoints for clarity.  Note these are simply different ways of thinking about the same algorithm.

From the point of view of the guest network, merging can be thought of as (i) connecting guest nodes with identical identifiers from the two clusters, beginning with the roots, (ii) determining which of these guest nodes will remain in the new network (using the ranges of their hosts), and (iii) transferring the links from the ``deleted'' node to the ``winning'' node (the node remaining in the single merged $N$-node cluster).  The remaining node can then connect its children, and they can repeat the ``merge'' process.  This proceeds level-by-level until only a single $N$-node cluster remains.

From the point of view of the host network, the merge involves (i) connecting the two hosts of guest nodes with the same identifiers, and then (ii) updating the cluster ranges of these hosts, transferring any links from the ``lost range'' of one host node to another.  To see this, note that initially the hosts of the roots are connected, and clearly the cluster ranges of the two root hosts overlap.  The two hosts will update their cluster successor or predecessor (if needed), and the host whose cluster range was reduced will transfer any outgoing intra-cluster links to its new successor/predecessor, which has ``taken over'' that range.  This change in the cluster range corresponds to the ``deleting'' of a guest node discussed above.  Finally, the hosts of the children of the root are connected, and the merge can begin on level 1 of the tree.  This repeats recursively until all cluster successors, predecessors, and (by implication) Type 2 edges are updated in the host network, and a new cluster is formed.  The merge algorithm is sketched in Algorithm \ref{algo:merge_sketch}.

\begin{algorithm}
\begin{tabbing}
........\=....\=....\=....\=....\=....\=....\kill
\>// \emph{Cluster $T$ has been assigned merge partner $T'$}\\
1.\>Root $r_T$ notifies all nodes of merge partner $T'$ and\\
\>\>its view of the random sequence $\Psi_r$\\
2.\>Edges between $T$ and $T'$ are removed if $\Psi_r = \Psi$\\
3.\>Beginning with the roots $r_T$ and $r_{T'}$:\\
4.\>\>Node $r_T$ updates its range based upon the identifier of $r_{T'}$ (if needed)\\
5.\>\>Node $r_T$ sends any edges not in its new range to $r_{T'}$,\\
\>\>and receives edges from $r_{T'}$\\
6.\>\>Children of $r_T$ and $r_T'$ are connected, and process repeats concurrently\\
7.\>Once process reaches leaves, pass feedback wave to new root\\
8.\>New root $r_{T''}$ sends $PFC$ wave to update nodes in $T''$ of new cluster identifier.
\end{tabbing}
\caption{The Merging Algorithm for Cluster $T$}
\label{algo:merge_sketch}
\end{algorithm}

As a final note, every merge begins with a pre-processing stage which removes all links between merge partners $T$ and $T'$ besides the edge between roots.  To prevent this from disconnecting the network, no edge is deleted unless both incident nodes receive from their respective cluster roots a message matching the shared random sequence $\Psi$.  As this sequence is unknown to the adversary, we can prevent, with high probability, the network from being disconnected in the self-stabilizing setting.
\subsubsection{Analysis of Merge}
\begin{lemma}
Consider two clusters $T$ and $T'$ that are merge partners.  In $\mathcal{O}(\log N)$ rounds, $T$ and $T'$ have formed a single cluster $T''$ consisting of all nodes in $T \cup T'$.
\end{lemma}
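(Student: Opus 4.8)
The plan is to prove the lemma by following the merge down the common guest tree $\textsc{Cbt}(N)$, which has only $\mathcal{O}(\log N)$ levels. Since $T$ and $T'$ embed the \emph{same} guest network, every guest identifier $d \in [N]$ has a unique host $h_T(d) \in V(T)$ and a unique host $h_{T'}(d) \in V(T')$ whose cluster range contains $d$, and in $\textsc{Avatar}_{\textsc{Cbt}}(N, V(T)\cup V(T'))$ the host of $d$ is exactly one of $h_T(d), h_{T'}(d)$ --- the one whose range in the combined identifier order contains $d$ (generically the host with the larger identifier, with a boundary case for the cluster's minimum host, whose range extends down to $0$). I would first verify that this ``winner'' is precisely what the per-node step of Algorithm~\ref{algo:merge_sketch} selects from the two identifiers it has in hand, that the ``loser'' correctly shrinks its cluster range and forwards to its new successor/predecessor every incident intra-cluster edge that has fallen outside its new range, and that the winner then connects the host pairs of $d$'s (at most two) children so the recursion can proceed one level deeper. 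Throughout I use the standing assumption that the configuration is reset-free and that $T, T'$ are valid \textsc{Cbt} clusters whose roots have been joined by the matching.

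The core of the argument is an induction on the level $j = 0, 1, \ldots$ of $\textsc{Cbt}(N)$ with the invariant: within $\mathcal{O}(\log N + j)$ rounds of the start of the merge, (i) every guest node at level $\le j$ is held by a single host, namely the host prescribed by $\textsc{Avatar}_{\textsc{Cbt}}(N, V(T)\cup V(T'))$; (ii) all Type~1 and Type~2 edges of $\textsc{Avatar}_{\textsc{Cbt}}(N, V(T)\cup V(T'))$ incident only on hosts of level-$\le j$ guest nodes are present, and every stale edge among those hosts has been removed, with $clusterPred_u$ and $clusterSucc_u$ updated accordingly; and (iii) for each level-$j$ guest node $d$, its final host is adjacent to the final hosts of $d$'s children, so level $j+1$ can begin. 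The base case $j=0$ is handled by lines~1--2 together with the already-present root--root edge: a $\mathit{PFC}$ wave distributes the partner identity and the root's view $\Psi_r$ of the random sequence, stale $T$--$T'$ edges are deleted (only once \emph{both} endpoints confirm the value against $\Psi$), and $r_T, r_{T'}$ perform the range-and-edge-transfer step. For the inductive step I would show that line~6 (``connect children, repeat concurrently'') makes each pair of child hosts adjacent and lets the range updates and edge forwards re-establish (i)--(ii) at level $j+1$; because a child host in $T$ lies at distance at most $3$ from the corresponding child host in $T'$ (through $h_T(d)$ and $h_{T'}(d)$), this needs a short fixed chain of distance-$2$ edge additions followed by cleanup, but it still costs only $\mathcal{O}(1)$ rounds. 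Summing over the $\mathcal{O}(\log N)$ levels, then adding the $\mathcal{O}(\log N)$-round feedback wave of line~7 and the $\mathcal{O}(\log N)$-round closing $\mathit{PFC}$ wave of line~8 that rewrites $cluster_u$ at every node from the new root (the winner of the root--root comparison), yields the $\mathcal{O}(\log N)$ bound, and invariant (ii) at the maximum level, together with the $clusterPred_u, clusterSucc_u$ and $cluster_u$ updates, is exactly the statement that the resulting graph on $V(T)\cup V(T')$ is a valid \textsc{Cbt} cluster.

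The step I expect to be the main obstacle is invariant (ii): showing that the purely local range updates and the ``forward the excess edges'' rule reconstruct \emph{exactly} the Type~1 and Type~2 edge sets of $\textsc{Avatar}_{\textsc{Cbt}}(N, V(T)\cup V(T'))$ --- that every edge the merge creates is justified either by an adjacency in the merged identifier order or by a $\textsc{Cbt}(N)$ guest edge, and, conversely, that no required edge is missed even when, after the two clusters interleave, a host from $T$ becomes adjacent in the merged order to a host from $T'$ it was never directly connected to. This requires a careful accounting of which guest nodes, and hence which host endpoints, can fall ``between'' two hosts once the identifiers are merged, and it is where the boundary behaviour of the cluster-minimum host has to be treated separately; Theorem~\ref{theorem:max_degree} is used here to bound by $\mathcal{O}(\log N)$ the number of edges any host transfers at a single level, so that all transfers at that level fit into $\mathcal{O}(1)$ rounds. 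A secondary obstacle is the connectivity guarantee underlying the pre-processing stage: since stale $T$--$T'$ edges are cut only after both endpoints validate against the adversary-independent sequence $\Psi$, I would argue that with high probability no such cut disconnects the network, so the merge actually runs to completion rather than stalling on a severed cluster.
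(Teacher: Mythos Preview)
Your proposal is correct and follows essentially the same approach as the paper: an $\mathcal{O}(\log N)$ pre-processing $\mathit{PFC}$ wave, a level-by-level descent through $\textsc{Cbt}(N)$ in which each level is resolved in $\mathcal{O}(1)$ rounds, and a final $\mathcal{O}(\log N)$ feedback/identifier-update phase. The paper's argument is terser---it records only the timing recurrence $T(i)=2+T(i+1)$ over the $\log N+1$ levels and leaves the structural correctness you encode in invariant~(ii) to the algorithm description---and your distance-$3$ worry for connecting child hosts dissolves there because $\mathit{ReplaceNode}$ first attaches the loser's children to the winner, who is then adjacent to all four children and connects the matching pairs in a single distance-$2$ step.
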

\proofsketch The proof here follows from the fact that the merge process requires $\mathcal{O}(\log N)$ rounds of pre-processing, and then it resolves at least one level of the guest network in a constant number of rounds.  Since there are $\log N + 1$ levels, our lemma holds.

\begin{lemma}
The degree of a node $u \in T$ will increase by $\mathcal{O}(\log^2 N)$ during a merge, and will return to within $\mathcal{O}(\log N)$ of its initial degree when the merge is complete.
\label{lemma:merge_degree}
\end{lemma}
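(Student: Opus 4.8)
The plan is to bound $u$'s degree over the course of the merge by partitioning its incident edges according to their ``provenance,'' and to handle the post-merge claim separately via Theorem~\ref{theorem:max_degree}.

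First I would dispose of the second clause. Once the merge of $T$ and $T'$ completes, $u$ lies in the valid cluster whose induced subgraph is $\textsc{Avatar}_{\textsc{Cbt}}(N, V(T)\cup V(T'))$, so by Theorem~\ref{theorem:max_degree} its intra-cluster degree is at most $2\log N + 2$. The pre-merge intra-cluster degree was likewise at most $2\log N + 2$ (the pre-merge configuration is reset-free, so $T$ was a valid cluster), and $u$'s edges to clusters other than $T'$ are untouched by the merge, since the pre-processing stage deletes only edges between the two merge partners (and, by the use of $\Psi$, does so without disconnecting the network). The only remaining edges to account for are transient: the root--root edge and the host-to-host edges created by the level-by-level connection phase; these are removed during the clean-up (the feedback wave and the final $PFC$ wave of Algorithm~\ref{algo:merge_sketch}), and because the merge is locally checkable any stray ones are detected and removed within $O(\log N)$ further rounds. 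Hence $u$'s degree when the merge finishes is within $O(\log N)$ of its degree when the merge began.

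For the first clause, fix any round during the merge and split $u$'s incident edges into: (i) edges to clusters other than $T$ and $T'$ --- unchanged throughout, and $O(\log N)$ in number by Theorem~\ref{theorem:max_degree} applied to the reset-free pre-merge configuration; (ii) edges to nodes of $T$ --- at every intermediate step the not-yet-merged portion of $T$ is still a $\textsc{Cbt}$-structured sub-configuration, so this is again $O(\log N)$; and (iii) edges created between $u$ and hosts of $T'$. Everything reduces to bounding group (iii) by $O(\log^2 N)$. I would do this level by level: the merge resolves the $\log N + 1$ levels of the guest $\textsc{Cbt}$ in turn, and while level $j$ is being resolved I would show $u$ holds only $O(\log N)$ edges into $T'$ --- namely the edges to the hosts of the (at most two, by the counting argument behind Theorem~\ref{theorem:max_degree}) level-$j$ guest nodes sitting on the boundary of $u$'s current cluster range, plus the $O(\log N)$ Type-2 edges that are momentarily transferred onto $u$ when a sub-range changes hands (step~5 of Algorithm~\ref{algo:merge_sketch}). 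Since $u$'s range touches only $O(\log N)$ levels of the tree and the connection/transfer phase forwards away each level's edges before the next level's connections are made, these per-level bounds add, giving $O(\log N)\cdot O(\log N) = O(\log^2 N)$.

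The step I expect to be the real obstacle is the per-level bound on group (iii). A host $u$ of $T$ can be responsible for a range as wide as $\Theta(N)$ and hence host $\Theta(N)$ guest nodes, so the naive ``one new edge per guest node shared with $T'$'' estimate is hopelessly weak; the content of the argument is that only polylogarithmically many of those shared guest nodes ever induce an edge on $u$ at the same time. Making this precise requires (a) pinning down exactly which host of $T'$ a given guest node's copy lives on as ranges are re-partitioned, (b) re-running the level-wise $O(1)$-boundary-edges count behind Theorem~\ref{theorem:max_degree} on the shifting cluster ranges, and (c) verifying that the level-by-level schedule serialises the levels sufficiently --- or overlaps them by only $O(1)$ rounds --- so that edges from already-resolved levels have been cleared before later levels add theirs. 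Groups (i) and (ii), and the clean-up argument, are routine consequences of the earlier lemmas and of Theorem~\ref{theorem:max_degree}.
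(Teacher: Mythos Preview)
Your approach is correct and rests on the same two ingredients as the paper's proof: the level-by-level structure of the merge and the $\mathcal{O}(\log N)$ intra-cluster degree bound of Theorem~\ref{theorem:max_degree}. The second clause is handled identically.

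Where you and the paper diverge is in the bookkeeping for the first clause. You try to bound the number of edges from $u$ into $T'$ \emph{at each moment}, which forces you to worry about item~(c): whether level~$j$'s edges are cleared before level~$j{+}1$'s arrive. The paper sidesteps this entirely by bounding the \emph{cumulative} number of edges ever handed to $u$ over the whole merge. The key observation is that at each level $i$ there is \emph{exactly one} host in $T'$ that updates its cluster successor (or predecessor) to $u$ and hence transfers a contiguous sub-range of virtual nodes to $u$; by (the corollary of) Theorem~\ref{theorem:max_degree} that sub-range carries at most $2\log N$ real edges. Summing over $\log N + 1$ levels gives the $\mathcal{O}(\log^2 N)$ bound directly, with no need to argue anything about serialisation or clearing. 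Your concern~(c) is therefore a self-imposed obstacle, not a real one.

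One small slip: in your group~(i) you invoke Theorem~\ref{theorem:max_degree} to claim that $u$ has $\mathcal{O}(\log N)$ edges to clusters other than $T,T'$. That theorem bounds only the \emph{intra-cluster} degree in a legal $\textsc{Avatar}_{\textsc{Cbt}}$ network; it says nothing about inter-cluster edges, whose number can be arbitrary in the adversarial initial configuration. Fortunately this does not damage your argument, since you correctly note these edges are untouched by the merge and hence contribute nothing to the degree \emph{increase}.
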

\proofsketch This follows from Theorem \ref{theorem:max_degree}.  For any set of nodes that forms a correct \Avatar (including a cluster), a node $u$ has at most $\mathcal{O}(\log N)$ edges in the host network.  As merging involves transferring all edges from a contiguous portion of $range(u)$ to some node $v$, and this occurs at most once per level of the guest network, no node will receive more than $\mathcal{O}(\log^2 N)$ edges.  Once the merge is completed, any node in the new cluster $T''$ has at most $\mathcal{O}(\log N)$ edges in $T''$, again by Theorem \ref{theorem:max_degree}.

\subsection{Termination Detection}
In our algorithm, the root of a cluster repeatedly executes the matching algorithm.  To ensure silent stabilization, we must inform the root when a legal configuration has been reached.  For this, we add a ``faulty bit'' to the feedback wave sent after a merge is complete.  If a node (i) detects the configuration is faulty, or (ii) received a faulty bit of $1$ from its children, the node sets its faulty bit to $1$ and appends it to its feedback message.  When the root receives a feedback wave without the faulty bit set (i.e. a value of $0$), it stops executing our algorithm.  A node $u$, upon completing this wave, can detect a ``reset fault'' whenever it finds its faulty bit is $0$ and it either (i) detects a faulty configuration, or (ii) detects a neighbor with a reset bit not equal to 0.  This ensures our algorithm is silent while remaining locally checkable.

\begin{lemma}
When our algorithm builds a legal \Avatar network, the faulty bit will be set to $0$, and remain $0$ until a transient fault again perturbs the system.
\end{lemma}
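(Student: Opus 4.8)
The plan is to combine two ingredients: (i) a legal \Avatar configuration has no detectors, which follows from Theorem~\ref{theorem:locally_checkable_mutable} together with the local checkability of clusters; and (ii) the feedback wave onto which the faulty bit is piggybacked is sent precisely when the system first settles into such a configuration, namely upon completion of the last merge. First I would argue that when the algorithm builds a legal \Avatar network it does so by completing a merge whose resulting cluster $T''$ satisfies $V(T'') = V$: by the convergence lemmas (Lemma~\ref{lemma:reset} and the lemma following it) we may assume the system is reset-free and consists only of valid and merging clusters, and a legal configuration is a single cluster spanning $V$ with topology \Avatar$(N,V)$; hence the final transition into this configuration is the completion of such a merge, which (by the merge lemmas) is confirmed by a feedback wave up to the new root $r_{T''}$, and the termination-detection mechanism appends a faulty bit to this wave.

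Next I would show, by induction on this feedback wave from the leaves of the guest \textsc{Cbt} up to its root, that every node appends the value $0$. For the base case, a guest leaf sees only its state and that of its neighbors; since the configuration is the legal \Avatar$(N,V)$ and therefore detector-free, it detects no faulty configuration, so its bit is $0$. For the inductive step, an internal node receives $0$ from each of its children (by the inductive hypothesis, transported through the host simulation of the $PFC$ wave as in Section~\ref{section:pfc}) and itself detects no faulty configuration, so it appends $0$ as well. Consequently $r_{T''}$ receives the feedback wave with faulty bit $0$ and halts, and from that point the messages exchanged between nodes are fixed, so the algorithm is silent.

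For the persistence claim I would argue that, once the root has halted in the legal configuration, no node executes any configuration-changing action: the matching and merging subroutines are no longer running, and no node detects a reset fault — here I would verify the reset-fault definition point by point, using that each node's faulty bit is $0$, that no node detects a faulty configuration since the configuration is legal and hence has no detector, and that no neighbor carries a nonzero reset bit. Therefore the configuration remains \Avatar$(N,V)$, stays detector-free, and every node continues to evaluate its faulty bit as $0$; this holds until an external transient fault perturbs the system, at which point a detector may appear and the faulty bit may again be set to $1$.

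The hard part will be ruling out ``spurious'' detections left over from the pre-convergence phase — a stale reset bit, a half-completed $PFC$ wave, or an inconsistent merge flag stored in some node's local state when the legal topology is first assembled — since any of these could cause a node to set its faulty bit to $1$ or to detect a reset fault even though the edge set is correct. The plan to handle this is to observe that such artifacts are themselves locally checkable (the $PFC$ mechanism and the merge process are locally checkable, as discussed in Sections~\ref{section:pfc} and~\ref{section:merge}), so by the $\mathcal{O}(\log N)$ bounds of Lemma~\ref{lemma:reset} and the subsequent lemma they are all cleaned — either corrected or resolved via resets — before the algorithm can be said to have built a legal \Avatar network; thus the only configuration consistent with the hypothesis of the lemma is the quiescent one in which every node's faulty and reset bits are $0$.
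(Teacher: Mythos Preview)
Your proposal is correct and follows essentially the same approach as the paper's (very brief) proof sketch: local checkability of \Avatar implies a legal configuration has no detectors, so after the final merge every node's faulty bit is $0$ along the feedback wave, and persistence follows because no configuration-changing action is enabled thereafter. You supply considerably more detail than the paper --- the induction on the feedback wave and the explicit handling of stale $PFC$/merge state are yours --- while the paper's sketch additionally notes the complementary direction (a faulty configuration forces some detector to set its bit to $1$, which then spreads in $\mathcal{O}(\log N)$ rounds and restarts the algorithm), a point not strictly required by the lemma as stated.
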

\proofsketch Since \Avatar is locally checkable, a faulty configuration has at least one detector which will set its faulty bit to $1$.  By similar argument to Lemma \ref{lemma:reset}, in $\mathcal{O}(\log N)$ rounds, all nodes will have their faulty bit set to $1$ and begin executing our algorithm.  Once the last merge occurs, no node will detect a fault, and all faulty bits will remain $0$ until another fault occurs.

\subsection{Combined Analysis}

\begin{theorem}
The algorithm in Section \ref{section:avatar_algo} is a self-stabilizing algorithm for the \Avatar network with expected convergence time of $\mathcal{O}(\log^2 N)$.
\end{theorem}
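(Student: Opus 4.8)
The plan is to prove the theorem in two halves, a \emph{safety} half and a \emph{timing} half. For safety one must show that whenever the algorithm halts it has reached a legal $\textsc{Avatar}_{\textsc{Cbt}}(N,V)$ configuration, and that such configurations are closed. The workhorse is Theorem~\ref{theorem:locally_checkable_mutable}: because $\textsc{Avatar}_{\textsc{Cbt}}(N,V)$ is locally checkable and every node knows $N$, any non-legal configuration contains a detector; by the termination-detection mechanism that detector sets its faulty bit, the bit propagates up its cluster, the root never receives a clean feedback wave, and the algorithm does not halt. Conversely a legal configuration has no detectors, so the faulty bits settle to $0$, and since the algorithm is silent no node then takes a further action, which gives closure. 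One also has to argue the network stays weakly connected throughout: the only edge deletions are the $\Psi$-guarded removals of redundant inter-cluster edges during a merge, and those are re-established (as Type-1 and Type-2 edges of the merged cluster) once the merge completes, so connectivity is preserved except on the $o(1)$-probability event that the adversary's initial state happens to coincide with $\Psi$.

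For the timing bound I would first use Lemma~\ref{lemma:reset} and the lemma following it to conclude that after $\mathcal{O}(\log N)$ rounds the system is in a \emph{reset-free} configuration built from valid clusters and merging clusters, and remains so. Partition the rest of the execution into \emph{phases} of $\Theta(\log N)$ rounds, with the constant chosen (after the matching lemmas and the lemma that merge partners form a single cluster in $\mathcal{O}(\log N)$ rounds) large enough that any cluster assigned a merge partner at the start of a phase has completed its merge, including the $\mathcal{O}(\log N)$-round pre-processing stage, by the end of that phase. Let $k_i$ be the number of clusters after phase $i$. In a reset-free execution the family of clusters only changes through merges, each decreasing the count by one, so $k_i$ is non-increasing; and $k_i=1$ is exactly the legal configuration, since a single cluster spanning all of $V$ is $\textsc{Avatar}_{\textsc{Cbt}}(N,V)$ itself.

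The heart of the proof is a drift bound on $k_i$. By the lemma that each cluster is assigned a merge partner with probability at least $1/16$ over $\mathcal{O}(\log N)$ rounds, and linearity of expectation (no independence among clusters is needed, as the only randomness involved is each root's own fresh role selection), the expected number of clusters assigned a partner during a phase is at least $k_i/16$; matched clusters come in pairs, so in expectation at least $k_i/32$ merges complete, giving $\mathbb{E}[k_{i+1}\mid k_i]\le \frac{31}{32}\,k_i$ whenever $k_i\ge 2$. Since $\log$ is concave, Jensen's inequality then gives $\mathbb{E}[\log k_{i+1}\mid k_i]\le \log k_i-c$ with $c=\log(32/31)>0$, so $\log k_i$ is a non-negative supermartingale with drift bounded away from $0$ until absorption at $\log 1=0$. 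A standard optional-stopping computation bounds the expected number of phases until $k_i=1$ by $(\log k_0)/c=\mathcal{O}(\log N)$. Multiplying by the $\Theta(\log N)$ rounds per phase, and adding the $\mathcal{O}(\log N)$ rounds of initial clustering and the $\mathcal{O}(\log N)$ rounds needed for the termination-detection wave to register, yields an expected convergence time of $\mathcal{O}(\log^2 N)$, and this holds uniformly over the adversary's choice of initial configuration.

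I expect the main obstacle to be this last step. A naive Markov or union bound on $k_i$ gives only $\mathcal{O}(n)$ phases, so it is essential to switch to the logarithmic potential $\log k_i$ and to check that the ``decrease by a constant factor in expectation'' property persists all the way down to $k_i=2$; it is this, together with the requirement that an entire merge (pre-processing included) fit inside one phase, that forces the phase-length constant to be fixed only after the matching and merge lemmas are in hand. A secondary, more bookkeeping-heavy obstacle is the safety half: in the self-stabilizing model one has to verify that no configuration reachable from the adversary's start (outside the negligible bad event) ever cuts the last path between two parts of the network, which is precisely where the assumption that $\Psi$ is unknown to the adversary is used.
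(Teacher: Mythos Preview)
Your proposal is correct and follows the same skeleton as the paper's proof sketch: an initial $\mathcal{O}(\log N)$-round window to reach a reset-free configuration of valid clusters, then $\Theta(\log N)$-round phases in which the matching lemma gives each cluster a constant probability of merging, yielding $\mathcal{O}(\log N)$ phases in expectation. Where the paper simply asserts ``in expectation every cluster has merged \ldots\ halving the number of clusters,'' you supply the missing rigor via linearity of expectation to get $\mathbb{E}[k_{i+1}\mid k_i]\le\tfrac{31}{32}k_i$ and then a Jensen/supermartingale drift argument on $\log k_i$; this is a legitimate (and arguably necessary) completion of the paper's informal halving step rather than a different route.
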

\proofsketch All nodes are members of a cluster in $\mathcal{O}(\log N)$ rounds, at which point the number of clusters will only decrease.  Each time a merge occurs, the number of clusters is reduced by 1, and the probability that a cluster merges over a span of $\mathcal{O}(\log N)$ rounds is constant ($1/16$).  In expectation, then, every cluster has merged in $\mathcal{O}(\log N)$ rounds, halving the number of clusters.  After $\mathcal{O}(\log^2 N)$ rounds, we are left with a single cluster, which is the legal configuration.

\Avatar also converges quickly in terms of space, which we show next.

\begin{theorem}
The degree expansion of the self-stabilizing $\textsc{Avatar}_{\textsc{Cbt}}$ algorithm from Section \ref{section:avatar_algo} is $\mathcal{O}(\log^2 N)$ in expectation.
\label{theorem:avatar_degree}
\end{theorem}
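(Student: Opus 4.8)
The plan is to bound the peak degree reached by any node over the entire execution and divide by the normalizer. Recall that the normalizer $\max(\Delta_G, \Delta_{\mathit{ON}})$ (where $\Delta_{\mathit{ON}}$ is the maximum degree of the target $\textsc{Avatar}_{\textsc{Cbt}}(N,V)$) is at least $1$ and at least $\Delta_G$, and that by Theorem~\ref{theorem:max_degree} we have $\Delta_{\mathit{ON}} = \mathcal{O}(\log N)$. Hence it suffices to establish the claim that at every round and for every node $u$, $\deg(u) \le \Delta_G + \mathcal{O}(\log^2 N)$: dividing by the normalizer then gives $\mathit{DegExp}_{\mathcal{A},G} = \mathcal{O}(\log^2 N)$ for each $G$, and taking the maximum over $G$ yields the theorem. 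To prove the claim I would decompose the edges incident on $u$ at any time into three groups --- (a) \emph{intra-cluster} edges to nodes sharing $u$'s current cluster identifier, (b) \emph{inter-cluster} edges to nodes in other clusters, including leftover edges from the adversarial initial configuration, and (c) edges created transiently by the matching and merging subroutines --- and bound each group separately.

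For group (a): once the system is in a reset-free configuration (guaranteed after $\mathcal{O}(\log N)$ rounds by Lemma~\ref{lemma:reset} and the lemma following it), every cluster $T \ni u$ is valid, so the subgraph induced on $V(T)$ is exactly $\textsc{Avatar}_{\textsc{Cbt}}(N,V(T))$ and Theorem~\ref{theorem:max_degree} caps $u$'s intra-cluster degree at $2\log N + 2$. For group (b): the algorithm never creates an inter-cluster edge except the single root-to-root edge added when two clusters are matched, the matching search forwards at most $\mathcal{O}(1)$ potential-leader edges to any node before discarding them, and inter-cluster edges are otherwise only ever deleted (notably by the merge pre-processing step); hence $u$ carries at most $\Delta_G + \mathcal{O}(1)$ inter-cluster edges. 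Group (c) is where the $\log^2 N$ term enters: by Lemma~\ref{lemma:merge_degree} a single merge raises $u$'s degree by $\mathcal{O}(\log^2 N)$ and then returns it to within $\mathcal{O}(\log N)$ of its pre-merge value, and because the matching restricts each cluster to at most one merge partner at a time, $u$ is in at most one merge at any instant, so these increases do not compound across the $\Theta(n)$ merges that occur over the whole run.

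It then remains to handle the first $\mathcal{O}(\log N)$ rounds, before the reset-free invariant holds, when the adversary may have installed bogus ``mid-merge'' or ``mid-matching'' states. Here I would argue that any such spurious configuration is locally checkable and is therefore completed or reset within $\mathcal{O}(\log N)$ rounds (as in Lemma~\ref{lemma:reset}), that while it persists it obeys the same per-merge/per-match accounting as above (a fake merge can add at most $\mathcal{O}(\log^2 N)$ edges to a node before detection), and that reset actions only relabel cluster variables and delete edges --- never add them. Combined with the fact that the initial degree is $\Delta_G$, this again gives $\deg(u) \le \Delta_G + \mathcal{O}(\log^2 N)$ throughout this phase. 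The ``in expectation'' qualifier is inherited from the randomized preprocessing and convergence guarantees used elsewhere (e.g.\ the high-probability guarantee that the $\Psi$-based edge deletions do not disconnect the network); the degree bound itself is essentially deterministic once the system is reset-free.

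The step I expect to be the main obstacle is the group-(b)/(c) bookkeeping in the adversarial self-stabilizing setting: one must verify that the matching subroutine's edge-forwarding genuinely cleans up its transient edges each round (so they cannot accumulate to $\omega(1)$ per node) and that ``at most one merge partner at a time'' is a true invariant even in configurations reachable from arbitrary corrupted states, so that the $\mathcal{O}(\log^2 N)$ merge overhead of Lemma~\ref{lemma:merge_degree} never stacks on itself. The intra-cluster bound of group (a), by contrast, is an immediate consequence of Theorem~\ref{theorem:max_degree}.
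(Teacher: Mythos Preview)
Your decomposition into intra-cluster edges, merge-transient edges, and an initial-phase analysis closely mirrors the paper's argument, and your treatment of groups~(a) and~(c) and of the pre-reset phase is essentially what the paper does (via Theorem~\ref{theorem:max_degree} and Lemma~\ref{lemma:merge_degree}). The gap is in your group~(b) analysis and, relatedly, in your explanation of where ``in expectation'' enters.

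Your claim that ``the algorithm never creates an inter-cluster edge except the single root-to-root edge added when two clusters are matched'' is not accurate for the leader side of the matching protocol. When a cluster $T$ plays the leader role, follower roots are attached to nodes of $T$ and then forwarded up $T$'s tree by the $\mathit{ConnectFollowers}$ procedure; at each virtual node the procedure pairs followers and deletes only some of the incident follower edges, so a real node hosting up to $2\log N$ such virtual nodes can retain up to $2\log N$ new inter-cluster edges after one leader round. These retained edges are \emph{not} removed by the merge pre-processing step (which only deletes edges between the two merging partners, not edges to the leader that matched them), so they can accumulate across successive leader rounds. The paper's proof sketch makes exactly this point: ``each time a cluster selects the leader role, a node in the cluster may have its degree increase,'' and the appendix lemma bounds the post-reset degree by $\Delta_u + 2\log N(\log N+1) + 2\log N\cdot T(\mathit{lead})$, where $T(\mathit{lead})$ is the number of leader selections.

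This is also where ``in expectation'' actually comes from. The paper caps $T(\mathit{lead})$ by observing that each leader selection consumes $\Theta(\log N)$ rounds and the algorithm converges in expected $\mathcal{O}(\log^2 N)$ rounds, so there are $\mathcal{O}(\log N)$ leader rounds in expectation, giving an expected $\mathcal{O}(\log^2 N)$ accumulated contribution. Your assertion that the degree bound is ``essentially deterministic once the system is reset-free'' and that the expectation is inherited only from the $\Psi$-based disconnection guarantee is therefore incorrect: without invoking the expected convergence time, nothing in your argument prevents $T(\mathit{lead})$ from being large and the group-(b) contribution from exceeding $\Delta_G + \mathcal{O}(1)$.
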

\proofsketch A node's degree will increase under only a small number of circumstances.  A node will only have $\mathcal{O}(\log N)$ edges to nodes in its cluster (except during merges).  During a merge, the degree can increase to at most $\mathcal{O}(\log^2 N)$ (Lemma \ref{lemma:merge_degree}).  Each time a cluster selects the leader role, a node in the cluster may have its degree increase by $1$.  As the algorithm will terminate in $\mathcal{O}(\log^2 N)$ rounds in expectation, there are an expected $\mathcal{O}(\log N)$ such increases.  Finally, consider an invalid initial configuration which causes a node $u$ to receive many edges while not in a cluster or merging with another cluster.  The only way for a node $u$ to receive additional edges in a round and not detect a fault not caused by merge (and thus execute a reset action) is for only a single node to send $u$ edges.  Furthermore, no node will send more than $\mathcal{O}(\log N)$ edges in a single round, or else a reset fault would be detected.  Since $u$ will be in a cluster in $\mathcal{O}(\log N)$ rounds, and each round might increase a node's degree by $\mathcal{O}(\log N)$ before this, we have shown the degree expansion of our algorithm is $\mathcal{O}(\log^2 N)$.

\section{Discussion and Future Work}
As a final topology, \Avatar suffers from poor load balancing, as it is built from a binary tree.  It can, however, be useful in creating other topologies.  We propose a mechanism we call \emph{network scaffolding} in which $\textsc{Avatar}_{\textsc{Cbt}}$ is used as an intermediate topology for stabilization from which another network is built on (much like a scaffold is used for construction).  Our technique has already been successful in building a self-stabilizing \textsc{Chord} network with polylogarithmic running time and degree expansion~\cite{berns_diss_12}.

Furthermore, we would like to relax the requirement that all nodes know $N$, perhaps using another self-stabilizing protocol.  We would also like to investigate bounds for the degree expansion to help determine if our algorithm (or any algorithm) is as efficient as can be expected in this setting.  We also are examining how much state nodes must exchange to guarantee local checkability (with mutable state, unlike proof labels~\cite{korman_prooflabel05}), exploring more-efficient ``stable'' configurations.

\noindent\emph{Acknowledgments:} I would like to thank my advisors, Dr. Sriram V. Pemmaraju and Dr. Sukumar Ghosh, for their guidance and discussions on this paper.

\bibliographystyle{splncs03}
\bibliography{sson}
%
%
%
%
%
%
%
%
%
%
%
%
%
%
%
%
%
%
%
%
%
%
%
%
%
%
%
%
%
%
%
%
%
%
%
%
%
%
%
%
%
%
%
%
%
%
%
%
%
%
%
%
%
%
%
%
%
%
%
%
%
%
%
%
%
%
%
%
%
%
%
%
%
%
%
%
%
%
%
%
%
%
%
%
%
%
%
%
%
%
\appendix
\appendixpage
\let\cleardoublepage\clearpage
\section{Avatar: Additional Details}
Below we give the full proof that $\textsc{Avatar}_{\textsc{Cbt}}$ has a logarithmic degree, regardless of the identifiers in $V$.

\begin{figure}
\centering
\includegraphics[scale=0.5]{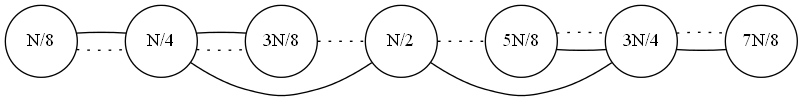}
\caption{A linear representation of levels 0-2 of a complete binary search tree.}
\label{dfig:cbt_edges}
\end{figure}

\begin{theorem}
For any node set $V \subseteq [N]$, the maximum degree of any node $u_i \in V$ in $\textsc{Avatar}_{\textsc{Cbt}}(N, V)$ is at most $2 \cdot \log N$.
\label{dtheorem:max_degree}
\end{theorem}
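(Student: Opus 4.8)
\noindent\emph{Proof plan.} Fix a node $u = u_i \in V$ and let $I = \mathit{range}(u_i)$, a contiguous block of $\{0,\dots,N-1\}$ with left endpoint $u_i$ and (when $0 < i < n-1$) right endpoint $u_{i+1}-1$. By Definition~\ref{defn:avatar} every neighbour of $u$ in $\textsc{Avatar}_{\textsc{Cbt}}(N,V)$ is either a Type-1 neighbour (one of $u_{i-1},u_{i+1}$) or a Type-2 neighbour $u_j$, and each Type-2 neighbour $u_j$ is witnessed by an edge of $\textsc{Cbt}(N)$ with one endpoint in $I$ and the other in $\mathit{range}(u_j)\subseteq\{0,\dots,N-1\}\setminus I$; call such an edge a \emph{boundary edge of $I$}. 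Hence $\deg(u)\le B+\varepsilon$, where $B$ is the number of boundary edges of $I$ and $\varepsilon\in\{0,1,2\}$ counts the Type-1 neighbours of $u$ that are not also Type-2 neighbours. The plan is to bound $B$ tightly and then show the $\varepsilon$-term is always absorbed.

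To bound $B$ I would argue level by level. From the balanced recursion of Definition~\ref{defn:cbt} one checks that $\textsc{Cbt}(N)$ has $\lceil\log_2(N+1)\rceil$ levels, hence $\lfloor\log_2 N\rfloor$ pairs of consecutive levels, and every edge of $\textsc{Cbt}(N)$ joins two consecutive levels $\ell,\ell+1$. It therefore suffices to show that for each such pair there is at most one boundary edge across the endpoint $u_i$ and at most one across the endpoint $u_{i+1}$. Fix a cut point $c\in\{u_i,u_{i+1}\}$ and consider the search path $s_0,s_1,\dots$ for $c$ in $\textsc{Cbt}(N)$ (at a node of value $\ge c$ descend left, otherwise right). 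Every level-$\ell$ node other than $s_\ell$ has its whole subtree on one side of the cut at $c$ (its subtree is one of the value-intervals into which levels $<\ell$ partition $\{0,\dots,N-1\}$, and only the interval containing $c$ meets both sides), so it is incident to no $\textsc{Cbt}(N)$ edge crossing that cut; and $s_\ell$ has at most one child on the side of $c$ opposite to $s_\ell$. Thus at most one $\textsc{Cbt}(N)$ edge between levels $\ell$ and $\ell+1$ crosses the cut at $c$, a fortiori at most one boundary edge of $I$. Summing over the $\lfloor\log_2 N\rfloor$ level-pairs and the two endpoints gives $B\le 2\lfloor\log_2 N\rfloor\le 2\log N$, which already reproves and mildly sharpens the counting behind Theorem~\ref{theorem:max_degree} whenever $\varepsilon=0$.

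The remaining — and hardest — step is to absorb $\varepsilon$. The bound follows from the claim, applied at each endpoint: \emph{if no edge of $\textsc{Cbt}(N)$ joins $\mathit{range}(u_i)$ to $\mathit{range}(u_{i-1})$ (so $u_{i-1}$ contributes to $\varepsilon$), then the number of boundary edges of $I$ crossing the cut at $u_i$ is at most $\lfloor\log_2 N\rfloor-1$}, and symmetrically at $u_{i+1}$. Granting this: if $\varepsilon=2$ then $B\le 2(\lfloor\log_2 N\rfloor-1)$ so $\deg(u)\le 2\lfloor\log_2 N\rfloor\le 2\log N$; if $\varepsilon=1$ then $B\le 2\lfloor\log_2 N\rfloor-1$ and the same bound holds; and $\varepsilon=0$ is the previous paragraph. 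To prove the claim I would revisit the search path for $c=u_i$: a boundary edge crossing this cut has its outside endpoint of value $<u_i$, and were that value also $\ge u_{i-1}$ the edge would witness the forbidden Type-2 edge $(u_{i-1},u_i)$; so every such edge reaches strictly below $\mathit{range}(u_{i-1})$, which I expect forces the search path to consume at least one extra level before it can first produce a crossing edge of the relevant kind, wasting one of the $\lfloor\log_2 N\rfloor$ potential level-pairs.

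\medskip
\noindent\textbf{Main obstacle.} The easy counting yields $2\log N+2$; the real work is the last paragraph — identifying exactly which level-pair is ``wasted'' when a Type-1 neighbour has no $\textsc{Cbt}(N)$ witness, and verifying it in the corner cases ($u_i$ at or above the root of $\textsc{Cbt}(N)$, $I$ a singleton, $i\in\{0,n-1\}$, and the possibility of a $\textsc{Cbt}(N)$ edge running from $\{0,\dots,u_i-1\}$ directly to $\{u_{i+1},\dots,N-1\}$ over $I$, which is not a boundary edge of $I$ but does perturb the subtree-side bookkeeping). I expect everything else to be routine.
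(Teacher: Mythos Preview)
Your bound on $B$ is essentially the paper's argument, restated in different language: the paper observes that edges of $\textsc{Cbt}(N)$ of a given \emph{span size} have pairwise disjoint spans, so at most two of each size can leave any contiguous segment, and then sums over the $\log N$ span sizes. Your level-pair count and the paper's span-size count are the same partition of the edge set, and both yield $B \le 2\log N$.

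Where you diverge is the absorption of $\varepsilon$. The paper does not do this: the final sentence of its proof reads ``the degree of any node $u_i \in \textsc{Avatar}_{\textsc{Cbt}}(N,V)$ is at most $2\cdot\log N + 2$ (at most two edges of type (1)),'' matching the main-text statement (Theorem~\ref{theorem:max_degree}). The omission of the ``$+\,2$'' in the appendix statement appears to be a typo, not a sharper claim. So the step you flag as the main obstacle --- locating a wasted level-pair when a Type-1 neighbour has no $\textsc{Cbt}$ witness --- is something the paper neither attempts nor needs; you can stop after your first counting paragraph and you have matched the paper's proof. If you do want to pursue the sharper $2\log N$ bound, be aware that it is not what the paper actually establishes, and your heuristic about the search path ``consuming an extra level'' would have to be made precise across all the corner cases you list.
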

\begin{proof}
Consider $\Phi^{-1}(u_i)$, the subset of nodes from $[N]$ mapped to node $u_i$.  Let $[N]_j$ be the set of all nodes at level $j$ of $\textsc{Cbt}(N)$.  We show that, for any $0 \leq j \leq \log N + 1$, there are at most 2 nodes in $\Phi^{-1}(u_i) \cap [N]_j$ with a neighbor not in $\Phi^{-1}(u_i)$.

First, consider the edges in $\textsc{Cbt}(N)$.  Let the \emph{span of an edge $(a,b) \in E[\textsc{Cbt}(N)]$} ($a < b$) be all nodes from $[N]$ in the interval $(a,b)$, and let the \emph{size of a span} be $b - a$.  
Let a \emph{node segment} $S[a,b]$ be the contiguous set of nodes in the interval $[a,b]$ (that is, $S = \{c : a \leq c \leq b\}$).  Notice that, by definition of $\textsc{Cbt}(N)$, the spans of any two edges with the same span size are disjoint.  Therefore, for any segment $S$, there can be at most two edges in $E(\textsc{Cbt}(N))$ with the same span size going from a node $b \in S$ to a node $b' \notin S$.  Since there are $\log N$ span sizes in $\textsc{Cbt}(N)$, for any segment $S[a,b]$, there are at most $2 \cdot \log N$ edges with one endpoint in $S$ and one endpoint outside of $S$.

For example, consider Figure \ref{dfig:cbt_edges}, which is a linear representation of the first two levels of $\textsc{Cbt}(N)$, where $N$ is a power of 2.  Notice there are exactly 2 edges with spans of size $N/4$ (from $N/4$ to $N/2$ and $N/2$ to $3N/4$), and the span of the edges is disjoint.  Similarly, there are 4 edges that span $N/8$ points ($(N/8, N/4)$, $(N/4, 3N/8)$, $(5N/8, 3N/4)$, $(3N/4, 7N/8)$), also with disjoint spans.

Notice in the $\textsc{Avatar}_{\textsc{Cbt}}$ network, $\Phi^{-1}(u_i)$ is a node segment from $[N]$.  The only edges from nodes in $\Phi^{-1}(u_i)$ that require a Type 2 edge in $\textsc{Avatar}_{\textsc{Cbt}}(N,V)$ are those from some $b \in \Phi^{-1}(u_i)$ to $b' \notin \Phi^{-1}(u_i)$.  As there are most $2 \cdot \log N$ such edges, the degree of any node $u_i \in \textsc{Avatar}_{\textsc{Cbt}}(N,V)$ is at most $2 \cdot \log N + 2$ (at most two edges of type (1)).
\end{proof}

\section{Clustering: Additional Details}
\subsection{Full Algorithms}
\subsubsection{Reset}
Algorithm \ref{algo:reset} presents the simple reset algorithm nodes execute to ensure membership in a cluster.

\begin{algorithm}
\begin{tabbing}
........\=....\=....\=....\=....\=....\=....\kill
1.\>\textbf{if} a \emph{reset fault} is detected \textbf{and}\\
\>\>did not reset in previous round \textbf{then}\\
2.\>\>$\mathit{clusterSucc}_u = \bot$; $\mathit{clusterPred}_u = \bot$; $\mathit{cluster}_u = \mathit{id}_u$;\\
3.\>\textbf{fi}
\end{tabbing}
\caption{The Reset Algorithm}
\label{algo:reset}
\end{algorithm}

\subsubsection{PFC}
The modified $PFC$ algorithm as executed on the guest \textsc{Cbt} network is given in Algorithm \ref{algo:pfc}.  As mentioned in the original text, we assume when host nodes send a message using $PFC$, they append the sender's level from the guest network, allowing this to be simulated on the host network.  We use $I$ to represent the information being passed down, and $F$ to represent the information that should be fed back.  We also introduce the idea of a \emph{feedback action}, which allows us to specify node behavior that should be executed before completing the feedback portion of the $PFC$ wave.  In particular, we will use this action for a leader cluster to ``wait'' for followers to be ready to receive a merge partner, as well as for maintaining the faulty bit after a merge.

\begin{algorithm}
\begin{tabbing}
........\=....\=....\=....\=....\=....\=....\kill
\textbf{Variable for Node $a$:} $\mathit{PFCState}_a$\\
1.\>\textbf{when} node $\mathit{root}$ satisfies $\mathit{PFCState}_{\mathit{Children}(\mathit{root})} = \mathit{PFCState}_{\mathit{root}} = \mathit{Clean}$ \textbf{then}\\
2.\>\>$\mathit{root}_T$ initiates $\mathit{PFC}$ wave by setting $\mathit{PFCState}_{\mathit{root}} = \mathit{Propagate}(I)$\\
3.\>\>Each node $a$ executes the following:\\
4.\>\>\>\textbf{if} $\mathit{PFCState}_a = \mathit{Clean} \wedge \mathit{PFCState}_{\mathit{Parent}(a)} = \mathit{Propagate}(I) \wedge$\\
\>\>\>\>$\mathit{PFCState}_{\mathit{Children}(a)} = \mathit{Clean}$ \textbf{then}\\
5.\>\>\>\>\>$\mathit{PFCState}_a = \mathit{Propagate}(I)$\\
6.\>\>\>\textbf{else if} $\mathit{PFCState}_a = \mathit{Propagate}(I) \wedge \mathit{PFCState}_{\mathit{Parent}(a)} = \mathit{Propagate}(I) \wedge$\\
\>\>\>\>$\mathit{PFCState}_{\mathit{Children}(a)} = \mathit{Feedback}(F)$ \textbf{then}\\
7.\>\>\>\>\>$\mathit{PFCState}_a = \mathit{Feedback}(F')$\\
8.\>\>\>\textbf{else if} $\mathit{PFCState}_a = \mathit{Feedback}(F) \wedge \mathit{PFCState}_{\mathit{Parent}(a)} = \mathit{Feedback}(F) \wedge$\\
\>\>\>\>$\mathit{PFCState}_{\mathit{Children}(a)} = \mathit{Clean}$ \textbf{then}\\
9.\>\>\>\>\>$\mathit{PFCState}_a = \mathit{Clean}$\\
10.\>\>\>\textbf{fi}\\
11.\>\textbf{fi}
\end{tabbing}
\caption{Subroutine for $\mathit{PFC}(I,F)$ for guest network \textsc{Cbt}}
\label{algo:pfc}
\end{algorithm}

We can use the full definition of the $PFC$ mechanism to define the following particular type of cluster.

\begin{definition}
A set of nodes $T$ is called a \emph{proper cluster} when, for each $u \in T$, (i) the neighborhood of $u$ induced by nodes in $T$ matches $u$'s neighborhood in $\textsc{Avatar}_{\textsc{Cbt}}(N,T)$, (ii) cluster successors and predecessors of $u$ are consistent with $u$'s successor and predecessor in the graph induced by nodes in $T$, (iii) all nodes in $T$ have the same correct cluster identifier, (iv) no node in $T$ neighbors a node $v \notin T$ such that $\mathit{cluster}_v = \mathit{cluster}_u$, and (v) the communication mechanism is not faulty.
\end{definition}

\subsection{Clustering Analysis}
Given the definitions and algorithms above, we provide the full analysis of the clustering portion of our algorithm.

\begin{lemma}
If the root of a proper cluster $T$ initiates the $\mathit{PFC}(I, F)$ wave, the $\mathit{PFC}$ wave is complete (all nodes receive the information, the root receives the feedback, and nodes are ready for another $\mathit{PFC}$ wave) in $2 \cdot (\log N + 1) + 2$ rounds.
\label{dlemma:pfc}
\end{lemma}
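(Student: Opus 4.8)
The plan is to analyze the $PFC$ wave as it traverses the guest \textsc{Cbt} network level by level, tracking the $\mathit{PFCState}$ of each node through its three phases ($\mathit{Propagate}$, $\mathit{Feedback}$, $\mathit{Clean}$). Since $T$ is a \emph{proper cluster}, the guest network $\textsc{Cbt}_T(N)$ is a well-formed tree of depth at most $\log N + 1$, the parent/child pointers used in Algorithm \ref{algo:pfc} are consistent, and the communication mechanism is not faulty, so the level-tagged simulation on the host network faithfully implements the guest-network rules with no loss of a round. First I would establish the propagation phase: by induction on the level $\ell$, a node at level $\ell$ sets $\mathit{PFCState} = \mathit{Propagate}(I)$ exactly in round $\ell + 1$ after the root initiates (the root at round $1$, using line 2; a level-$\ell$ node fires via lines 4--5 one round after its parent, since its children are still $\mathit{Clean}$ at that point). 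Hence after $\log N + 1$ rounds every node has received $I$.

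Next I would handle the feedback phase, working from the leaves upward. A leaf has no children, so (reading the rule in line 6 with the convention that the children-condition is vacuously the feedback/clean pattern) it transitions to $\mathit{Feedback}(F')$ in the round immediately after it entered $\mathit{Propagate}$; by a second induction, a node at level $\ell$ enters $\mathit{Feedback}$ one round after all of its children have, so the feedback wave climbs one level per round and reaches the root within another $\log N + 1$ rounds (plus the one-round turnaround at the deepest level). This is where the \emph{feedback action} matters: before a node completes line 7 it may need to execute its specified feedback action (e.g., a leader waiting for followers, or faulty-bit maintenance); I would note that in the variant used here this action is assumed to complete within the round, or otherwise account for it explicitly — this bookkeeping is the main place the constant ``$+2$'' is absorbed, together with the single turnaround round at the leaves. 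Finally, the cleaning phase (lines 8--9) propagates $\mathit{Clean}$ back down from the root, again one level per round, taking at most another $\log N + 1$ rounds, after which $\mathit{PFCState}_{\mathit{root}} = \mathit{PFCState}_{\mathit{Children}(\mathit{root})} = \mathit{Clean}$ and the cluster is ready for the next wave.

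Summing the three phases gives $3(\log N + 1)$ rounds, which is looser than the claimed $2(\log N + 1) + 2$, so the real work is to observe that the cleaning phase overlaps with — or is absorbed into — the feedback phase: a node can go $\mathit{Feedback} \to \mathit{Clean}$ as soon as its parent is already $\mathit{Clean}$, and since feedback reaches the root first, cleaning chases it down without waiting a full extra depth traversal; more carefully, the last node to reach $\mathit{Clean}$ does so at most $2(\log N+1)$ rounds after initiation for the down-and-up passes plus a small additive constant for the leaf turnaround and any feedback action. The main obstacle, then, is precisely this pipelining argument — proving that the $\mathit{Clean}$ front and the $\mathit{Feedback}$ front do not each cost an independent depth-$d$ traversal, so that the total is $2(\log N+1)+\Theta(1)$ rather than $3(\log N+1)$ — which I would make rigorous by defining, for each node $v$ at depth $\mathit{lev}(v)$, the round at which it enters each of the three states as an explicit function of $\mathit{lev}(v)$ and the tree depth $d \le \log N + 1$, then verifying these satisfy the transition rules of Algorithm \ref{algo:pfc} and bounding the maximum.
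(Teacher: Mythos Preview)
Your overall plan---analyze the propagate, feedback, and clean phases level by level on the guest tree of depth at most $\log N + 1$, then argue that cleaning is pipelined rather than a separate full-depth pass---is exactly the paper's approach. However, you misread the transition rule for $\mathit{Clean}$ in Algorithm~\ref{algo:pfc}, and this misreading propagates through the last third of your argument.

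The rule (line~8) is that node $a$ moves $\mathit{Feedback}\to\mathit{Clean}$ when $\mathit{PFCState}_{\mathit{Parent}(a)}=\mathit{Feedback}$ \emph{and} $\mathit{PFCState}_{\mathit{Children}(a)}=\mathit{Clean}$. So the clean wave does \emph{not} ``propagate back down from the root,'' nor does a node wait for its parent to be $\mathit{Clean}$ as you write; rather, cleaning climbs \emph{up} from the leaves, chasing the feedback wave at a fixed lag. The paper's key observation---which replaces your pipelining sketch with a one-line invariant---is that every node enters $\mathit{Clean}$ exactly two rounds after it enters $\mathit{Feedback}$: one round for its parent to become $\mathit{Feedback}$ (and simultaneously for its children, who entered $\mathit{Feedback}$ one round earlier, to become $\mathit{Clean}$), and one round for the node to observe both conditions. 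Since the root is the last to enter $\mathit{Feedback}$, at round $2(\log N+1)$, it (and hence every node) is $\mathit{Clean}$ by round $2(\log N+1)+2$. Your explicit-round-function plan would eventually uncover this, but as written the direction and the guard condition for $\mathit{Clean}$ are both wrong, so the argument in the proposal does not yet go through.
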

\begin{proof}
After the root initiates the $\mathit{PFC}(I,F)$ wave, in every round the information $I$ moves from level $k$ to $k+1$ until reaching a leaf.  As the cluster has at most $\log N + 1$ levels, after at most $\log N + 1$ rounds, all nodes have received the propagation wave and associated information.  Upon receiving the propagation wave, leaves set their $\mathit{PFC}$ states to $\mathit{Feedback}$ to begin the feedback wave.  Again, in every round the feedback wave moves one level closer to the root, yielding at most $\log N + 1$ rounds before the root has received the feedback wave.  Consider the transition to $\mathit{PFC}$ state $\mathit{Clean}$.  When a leaf node $b$ sets its $\mathit{PFC}$ state to $\mathit{Feedback}$, in one round the parent of $b$ will set its state to $\mathit{Feedback}$, and in the second round, $b$ will set its state to $\mathit{Clean}$.  The process then repeats for the parent of $b$.  In general, two rounds after a node $b$ is in state $\mathit{Feedback}$, it transitions to state $\mathit{Clean}$.  Therefore, $2 \cdot (\log N + 1) + 2$ rounds after the root initiates a $\mathit{PFC}$ wave, all nodes receive the propagation wave, return the feedback wave, and move back to a clean state, ready for another $\mathit{PFC}$ wave.
\end{proof}

\begin{lemma}
Let node $b$ be a member of a proper cluster $T$.  Node $b$ can only execute a reset action if $T$ begins the merging process from Algorithm \ref{algo:merge}.
\label{dlemma:merge_reset}
\end{lemma}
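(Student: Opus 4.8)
The plan is to show that, starting from a proper cluster $T$, every action available to node $b$ either preserves $b$'s membership in a valid (or proper) cluster, or else corresponds precisely to the start of the merge routine from Algorithm \ref{algo:merge}; consequently, since a reset action is only triggered by a detected \emph{reset fault}, and a reset fault by definition excludes configurations reachable from a legal merge, no reset can fire unless the merge has begun. First I would enumerate the variables of $b$ whose change is relevant to cluster membership: $\mathit{cluster}_b$, $\mathit{clusterPred}_b$, $\mathit{clusterSucc}_b$, and the incident edge set of $b$. For each of the algorithm components that $b$ might be executing in a proper cluster --- the $PFC$ subroutine (Algorithm \ref{algo:pfc}), the matching algorithm (Algorithm \ref{algo:match_sketch}), and termination detection --- I would argue that none of these actions modifies the cluster identifier, the cluster predecessor/successor, or the induced $\textsc{Avatar}_{\textsc{Cbt}}(N,T)$ edges of $b$. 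The matching algorithm, in particular, only adds a single inter-cluster ``followed'' edge to a node in a neighboring cluster and propagates roles via $PFC$; it does not touch intra-cluster topology or cluster labels. Hence, as long as $T$ has not begun merging, $b$ remains in a configuration where, together with its neighbors in $T$, it satisfies all five conditions of a proper cluster (or at least the local-checkability predicate of a valid cluster).

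Next I would invoke local checkability of clusters (established informally after the cluster definition and relied on in Lemma \ref{lemma:reset}): since $b$ and all its neighbors with matching cluster identifier still present a locally consistent view of $\textsc{Avatar}_{\textsc{Cbt}}(N,T)$, node $b$ detects no faulty configuration. Then I would separately rule out the other route to a reset fault: $b$ detecting that it is ``not a member of a cluster''. By the preceding paragraph, $b$'s cluster variables still describe a valid cluster range consistent with its neighbors, so $b$ does register as a cluster member. The one subtlety --- and the main obstacle --- is condition (v) of a proper cluster, that the $PFC$ communication mechanism is not faulty: I must confirm that once $T$ starts out proper, the $PFC$ state transitions of Algorithm \ref{algo:pfc} keep it proper, i.e. that a legitimately-initiated $PFC$ wave never passes through an intermediate configuration that some node misreads as corrupted and hence as a reset fault. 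This requires checking that every intermediate $PFC$-state assignment among $\{\mathit{Clean}, \mathit{Propagate}(I), \mathit{Feedback}(F)\}$ along the tree is one of the ``reachable-from-legal-wave'' configurations, which follows from Lemma \ref{dlemma:pfc}'s analysis of how the wave sweeps down and back up level by level; I would phrase the reset-fault detection predicate for $PFC$ corruption to accept exactly these transient states.

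Finally, I would assemble the pieces: the only modification to $b$'s cluster variables that can take $b$ out of $T$ is the range-update and link-transfer in lines 4--6 of Algorithm \ref{algo:merge}, which occurs only after the root $r_T$ has been assigned a merge partner and propagated the merge notification (line 1). Until that propagation reaches $b$, $b$ is in a proper cluster, detects neither a faulty configuration nor non-membership, hence detects no reset fault, hence (by Algorithm \ref{algo:reset}) does not execute a reset. Once the merge notification has reached $b$, the configuration $b$ is in is by definition ``reachable from a legal merge'', so even though $b$ may now fail the proper-cluster predicate, the exclusion clause in the definition of reset fault prevents a reset. Therefore $b$ executes a reset action only if $T$ begins the merging process of Algorithm \ref{algo:merge}, as claimed.
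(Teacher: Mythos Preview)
Your proposal is correct and follows essentially the same approach as the paper's proof: argue that no non-merge action (PFC, matching, termination detection) alters the cluster structure or produces a PFC-inconsistent state in a proper cluster, hence no reset fault can be detected unless a merge has begun. The paper compresses this into two sentences---cluster structure is unchanged absent a merge action, and the PFC algorithm never causes a reset fault in a proper cluster---whereas you unpack the same reasoning into an explicit case analysis over the relevant variables and algorithm components.
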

\begin{proof}
To begin, notice that the cluster structure is not changed unless a node executes a merge action.  Furthermore, the $\mathit{PFC}$ algorithm will never cause a reset fault in a proper cluster.  Therefore, no reset is executed unless a merge action is executed.
\end{proof}

\begin{lemma}
Let a guest node $r_T$ be a node whose immediate neighborhood matches the neighborhood of a correct root node ($r_T$ has a consistent $\mathit{PFC}$ state and correct cluster neighborhood).  If $r_T$ initiates a $\mathit{PFC}(I, F)$ wave and later receives the corresponding feedback wave, then $\mathit{r}_T$ is the root of a proper cluster.
\label{dlemma:pfc_works}
\end{lemma}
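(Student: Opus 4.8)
This is the soundness counterpart of Lemma \ref{dlemma:pfc}: a completed wave should certify that the structure it traversed really is a proper cluster. The plan is to argue by contradiction. Suppose $r_T$ initiates a $\mathit{PFC}(I,F)$ wave and later receives the corresponding feedback, but $r_T$ is not the root of a proper cluster. I would first identify exactly which nodes the wave touches. By the transition rules of Algorithm \ref{algo:pfc}, a node $a$ sets $\mathit{PFCState}_a=\mathit{Propagate}(I)$ only once its guest parent is in $\mathit{Propagate}(I)$ and all its guest children are $\mathit{Clean}$, and a node's guest parent and children are determined solely by its own cluster predecessor/successor and those of its cluster-neighbors. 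Hence the set $T$ of nodes that ever enter $\mathit{Propagate}(I)$ in this wave, with the parent/child relation witnessed by the wave, is a subtree rooted at $r_T$ every node of which carries $r_T$'s cluster identifier. Moreover $T$ must be downward-closed and bottom out at genuine leaves: a non-leaf node only moves to $\mathit{Feedback}$ after all its computed children are in $\mathit{Feedback}$ (line 7 of Algorithm \ref{algo:pfc}), so if any computed child were absent from $T$ the wave would stall there and $r_T$ would never get feedback. Together with consistency of the cluster predecessor/successor pointers along $T$ (addressed next), this forces the host subgraph induced by $T$ to be exactly $\textsc{Avatar}_{\textsc{Cbt}}(N,T)$ with its guest root at $r_T$, which is clauses (i)--(iii) and (v) of the definition of a proper cluster.

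The remaining work is to rule out clause (iv) and, underlying everything, to show that a faulty configuration cannot survive long enough for $r_T$ to complete the wave. Completing the wave takes about $2(\log N+1)+2$ consecutive rounds (the level-by-level count of Lemma \ref{dlemma:pfc}) during which the parent/children states of every node the wave touches evolve strictly by Algorithm \ref{algo:pfc}. If some node $b$ in the subtree had an inconsistent cluster view at initiation, then within $\mathcal{O}(1)$ rounds $b$ detects a reset fault and overwrites $\mathit{cluster}_b$ (Algorithm \ref{algo:reset}); since the propagation wave reaches a node at guest level $\ell\ge 1$ no earlier than round $\ell$, this change happens before (or together with) the wave's arrival, so $b$ is no longer recognized as a child, the wave stalls above $b$, and $r_T$ never receives feedback from that branch. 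For clause (iv): if $v\notin T$ shares $u$'s cluster identifier and is adjacent to some $u\in T$, then $(u,v)$ is neither a Type-1 nor a Type-2 edge of $\textsc{Avatar}_{\textsc{Cbt}}(N,T)$ (Type-2 edges of that graph join only nodes of $T$), so $u$ sees a same-cluster neighbor incompatible with its cluster topology, detects a reset fault, and the same timing argument prevents $r_T$ from completing the wave. A same-identifier component not adjacent to $T$ violates no clause, so in every case the assumed non-properness yields, via local checkability of clusters, a detector whose action contradicts the completion of the wave.

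The step I expect to be the main obstacle is the timing bookkeeping in the second paragraph: one must argue precisely that any reset fault present when the wave is launched is detected and acted upon before the corresponding portion of the wave can pass through, so that a faulty configuration cannot impersonate a completed wave. The delicate cases are a detector sitting at guest level $1$ (where the race between its reset and the arriving wave is tightest, and one must fall back on $r_T$ itself detecting the resulting neighborhood change) and a detector $v\notin T$ whose own reset does not directly interrupt the wave but whose presence is forced to coincide with a detector inside $T$ using the range-partition structure of Definition \ref{defn:avatar}. Once the timing is nailed down, the remaining steps are routine, since $\textsc{Cbt}(N)$ is deterministic given $N$ and each node re-checks its local cluster view every round.
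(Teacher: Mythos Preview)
Your route diverges from the paper's in a way that creates the very obstacle you flag. The paper's proof is essentially one line: as part of the $\mathit{PFC}$ transition, a node refuses to forward either the propagate or the feedback wave unless it has the appropriate cluster neighbors and a $\mathit{PFC}$ state consistent with its parent and children. Hence if the feedback ever returns to $r_T$, every node along the way passed its local check, and properness follows immediately. The correctness test is folded into the wave-forwarding guard, so there is no race between detection and propagation to analyze at all.

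You instead argue through the separate reset mechanism: a node with a bad cluster view detects a reset fault, overwrites $\mathit{cluster}_b$, and this change stalls the wave before it can return. That detour is what generates the timing bookkeeping you identify as the main obstacle, and it also opens a gap you do not address: by the definition of a reset fault (Section~\ref{section:avatar_algo}), a node whose inconsistent neighborhood is indistinguishable from a legal in-progress merge does \emph{not} detect a reset fault and does \emph{not} overwrite its cluster identifier, so your ``within $\mathcal{O}(1)$ rounds $b$ resets'' claim fails for exactly those configurations. To patch this you would need a second case showing that a node in merge mode also does not forward the $\mathit{PFC}$ wave---but that is just the paper's guard argument reappearing. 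Once you grant that the forwarding rule itself checks the cluster neighborhood, the reset-and-timing machinery becomes unnecessary and the level-$1$ race and the $v\notin T$ coincidence argument both evaporate.
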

\begin{proof}
Every node will only continue to forward the propagate and feedback waves if (i) they have the appropriate cluster neighbors, and (ii) their $\mathit{PFC}$ states are consistent.  Therefore, if $r_T$ receives a feedback wave, $T$ must be a proper cluster.
\end{proof}

We classify clusters based upon their $\mathit{PFC}$ state next.

\begin{definition}
Let $T$ be a proper cluster.  $T$ is a \emph{proper clean cluster} if and only if all nodes in $T$ have a $\mathit{PFC}$ state of $\mathit{Clean}$.
\end{definition}

Let $\mathcal{F}(G_i)$ be all configurations reached by executing our algorithm beginning in configuration $G_i$ (that is, the set of future configurations).  We now bound the  occurrences of reset after a given configuration.

\begin{lemma}
If node $b$ is a member of a proper unmatched clean cluster $T$ in configuration $G_i$, then $b$ will never execute a reset in any configuration $G_j \in \mathcal{F}(G_i)$.
\label{dlemma:clean_merge}
\end{lemma}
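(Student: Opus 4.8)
The plan is to prove the lemma by induction on the configuration index, showing that a node that starts in a proper unmatched clean cluster is ``trapped'' in a sequence of proper clusters, interrupted only by legal merges, and therefore never meets the precondition of the reset action in Algorithm \ref{algo:reset}. Concretely, I would maintain the following invariant at every configuration $G_j \in \mathcal{F}(G_i)$: either (a) $b$ is a member of a proper cluster $T_j$ with $V(T) \subseteq V(T_j)$, or (b) $b$ is a node of a cluster currently executing the merge protocol of Algorithm \ref{algo:merge}, where that merge was initiated by the root of a proper cluster containing $V(T)$ and, by the merge analysis, terminates within $\mathcal{O}(\log N)$ rounds in a proper cluster that still contains $V(T)$; and in both cases $b$ has not executed a reset through round $j$. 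The base case is immediate: in $G_i$, $b$ belongs to the proper cluster $T$, so $b$ is a member of a cluster and clause (i) of the reset-fault definition fails, hence $b$ does not reset.

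For the inductive step, suppose the invariant holds at $G_j$. If $b$ is in the ``mid-merge'' case (b), the merge analysis already guarantees no reset during the merge window and a proper cluster afterward, so the invariant carries over. Otherwise $b$ is in a proper cluster $T_j$. If no node of $T_j$ executes a merge action in round $j$, then Lemma \ref{dlemma:merge_reset} says no node of $T_j$ resets, and because the cluster structure changes only through merge actions while the $\mathit{PFC}$ mechanism never makes a proper cluster inconsistent (Lemma \ref{dlemma:pfc}), $T_{j+1}=T_j$ is again a proper cluster and the invariant persists. If instead some node of $T_j$ executes a merge action, $T_j$ begins Algorithm \ref{algo:merge} with a partner $T'$; here I invoke the ``unmatched clean'' hypothesis to guarantee that $T$ is at a fresh starting point of the matching algorithm with no pending commitment to an ill-formed partner inherited from the adversarial initial state, so that this merge is triggered by $T_j$'s own root through the matching protocol and is therefore a \emph{legal} merge. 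Being legal, every intermediate state that each node of $T_j$ passes through is reachable from a legal merge, so clause (ii) of the reset-fault test fails and no node of $T_j$ resets; and by the merge lemma the merge completes in $\mathcal{O}(\log N)$ rounds in a proper cluster $T''$ with $V(T_j) \subseteq V(T'')$, re-establishing the invariant (transitioning through case (b) for the duration of the merge). Running the induction over all of $\mathcal{F}(G_i)$ yields that $b$ never resets.

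The main obstacle is the merge case. One must (i) make precise the set of states ``reachable from a legal merge'' and verify that a proper cluster beginning Algorithm \ref{algo:merge} keeps every one of its nodes inside that set for the entire duration of the merge, so that clause (ii) of the reset-fault test always fails even while the cluster exchanges messages with a partner whose behaviour is only partially controlled; and (ii) show that the partner a proper unmatched cluster acquires via the matching protocol is itself well enough behaved for the merge to terminate in a proper cluster --- for this one leans on Lemma \ref{dlemma:pfc_works} (a cluster is certified proper by the $\mathit{PFC}$ wave that made it an open leader or a searching follower) together with local checkability of clusters, which forces an ill-formed ``leader'' cluster to be detected and reset before it can be committed to. The $\Psi$-protected pre-processing stage, the level-by-level structure of the merge, and the local checkability of the merge process are the tools here; once these are in place, the reduction to Lemmas \ref{dlemma:merge_reset} and \ref{dlemma:pfc} is routine.
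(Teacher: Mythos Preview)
Your inductive skeleton is the same as the paper's implicit structure: outside merges nothing touches the cluster (Lemma \ref{dlemma:merge_reset}), and inside a merge you must show the partner is proper so that Lemma \ref{dlemma:merge} applies and the merge ends in a proper cluster. The difference is in how you discharge the merge case, and there the mechanism you name does not actually do the job.

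You try to certify the partner $T'$ as proper via the matching protocol: ``a cluster is certified proper by the $\mathit{PFC}$ wave that made it an open leader or a searching follower.'' But that wave is internal to $T'$; $T$ never observes it. The adversary can perfectly well initialize a malformed $T'$ mid-wave in a state indistinguishable (to $T$) from a proper leader, and have it get matched to $T$. Local checkability of $T'$ only guarantees that some node of $T'$ eventually detects the fault and resets --- it does not prevent $T$ from committing to $T'$ first and beginning to modify its own cluster edges. So the ``unmatched clean'' hypothesis plus the matching-side $\mathit{PFC}$ waves are not enough to rule out a bad partner.

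The paper's argument instead uses the \emph{merge-side} gate: step~1 of Algorithm \ref{algo:merge} is a $\mathit{PFC}(\mathit{Prep}(T,T'),\bot)$ wave, and $T$ does not touch any cluster edge until \emph{both} $T$ and $T'$ have completed their respective $\mathit{Prep}$ waves. Now Lemma \ref{dlemma:pfc_works} bites directly: if $T'$ were not proper, its $\mathit{Prep}$ wave would never complete, so $T$ would never start modifying edges, hence no reset. If $T'$'s wave does complete, $T'$ is proper, and Lemma \ref{dlemma:merge} finishes the merge in a proper cluster $T''$. This is the single clean observation that replaces your items (i) and (ii); once you have it, the induction you set up goes through without needing to reason about the matching protocol at all. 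Your closing paragraph mentions ``the $\Psi$-protected pre-processing stage'' among the tools, which is close, but the point is not the $\Psi$-protection (that only guards against disconnection) --- it is that the pre-processing is itself a full $\mathit{PFC}$ wave on $T'$ whose completion $T$ can observe before acting.
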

\begin{proof}
By Lemma \ref{dlemma:merge_reset}, only a merge can cause node $b$ to execute a reset fault.  We show that any merge $b$ participates in must be between two proper clusters, and therefore completes correctly (Lemma \ref{dlemma:merge}).

Suppose the root of $T$ has been matched with the root of another cluster $T'$.  $T$ cannot begin modifying its cluster edges for the merge until both $T$ and $T'$ have successfully completed the $\mathit{PFC}(Prep(T,T'),\bot)$ wave.  Suppose $T'$ was not a proper cluster.  In this case, $T'$ would not successfully complete the $\mathit{PFC}$ wave (Lemma \ref{dlemma:pfc_works}), and $T$ would not begin a merge with $T'$.  Therefore, if $T$ and $T'$ merge together, both must be proper clusters, implying the merge completes successfully and $b$ is again a member of a clean proper cluster $T''$ (see Lemma \ref{dlemma:merge}).
\end{proof}

\begin{lemma}
Consider a node $b$ that is not a member of a proper cluster in configuration $G_i$.  In $\mathcal{O}(\log N)$ rounds, $b$ is a member of a proper unmatched clean cluster.
\label{dlemma:reset_to_clean}
\end{lemma}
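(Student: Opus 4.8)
The plan is to case-split on why $b$ is not yet in a proper cluster, refining the cascade argument of Lemma~\ref{lemma:reset} to obtain the stronger conclusion ``proper unmatched clean cluster.'' The key preliminary observation is that when a node executes the reset action (Algorithm~\ref{algo:reset}) it clears its cluster successor, predecessor, identifier (and, we should note, its $\mathit{PFC}$ state), so that one round later $b$ is the root and sole member of a size-$1$ cluster. Such a cluster vacuously satisfies conditions (i)--(v) of a proper cluster, has $\mathit{PFC}$ state $\mathit{Clean}$, and has no merge partner; hence it is a proper unmatched clean cluster. It therefore suffices to show that within $\mathcal{O}(\log N)$ rounds $b$ either executes a reset, or is carried into a proper unmatched clean cluster by the completion of a merge.

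\emph{Case 1: $b$ detects a reset fault in $G_i$ (and did not reset in the previous round).} Then $b$ resets in the next round and we are done by the observation above. \emph{Case 2: $b$ detects no reset fault.} By the definition of a reset fault this leaves two possibilities: (a) $b$ believes it is a member of a cluster, or (b) $b$ believes it is participating in a merge. For case (a): since $b$ is not in a proper cluster, the subgraph induced by the nodes sharing $\mathit{cluster}_b$ violates at least one of conditions (i)--(v); because clusters are locally checkable, some node $v_k$ in $b$'s same-identifier component is a detector. A shortest path from $b$ to the nearest detector has length $\mathcal{O}(\log N)$: if it were longer, some intermediate node would itself see edges or cluster ranges inconsistent with $\textsc{Avatar}_{\textsc{Cbt}}$ (since $\textsc{Avatar}_{\textsc{Cbt}}$ has $\mathcal{O}(\log N)$ diameter) and would be a detector, contradicting minimality. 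When $v_k$ resets it causes its predecessor on the path to detect a reset fault one round later, so the reset propagates to $b$ within $\mathcal{O}(\log N)$ rounds, reducing to the observation above.

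For case (b) I would invoke the merge analysis (local checkability of merges and Lemma~\ref{dlemma:merge}, together with Lemma~\ref{dlemma:pfc}). If the configuration $b$ perceives is reachable from a legal merge, then that merge is between two proper clusters and completes in $\mathcal{O}(\log N)$ rounds, leaving $b$ in a proper cluster $T''$; the post-merge $\mathit{PFC}$ wave installing the new cluster identifier completes in $\mathcal{O}(\log N)$ additional rounds (Lemma~\ref{dlemma:pfc}), at which point every node of $T''$ is $\mathit{Clean}$ and $T''$ has not yet been assigned a new merge partner, so $T''$ is a proper unmatched clean cluster at that moment. If instead the perceived configuration is \emph{not} reachable from a legal merge, then by local checkability of the merge some node detects this, and the same cascade argument as in case (a) forces $b$ to reset within $\mathcal{O}(\log N)$ rounds. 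In every case $b$ reaches a proper unmatched clean cluster in $\mathcal{O}(\log N)$ rounds.

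The main obstacle is case (b): one must characterize precisely which corrupted ``mid-merge'' configurations are distinguishable from legitimate ones, bound how long a node can remain in such a state before a detector appears, and ensure that the $\mathit{PFC}$-based coordination driving the merge cannot itself be corrupted so as to hide the fault for more than $\mathcal{O}(\log N)$ rounds. This is exactly the content that the full merge algorithm and its local-checkability proof (Lemma~\ref{dlemma:merge}) supply, and it is where the argument must lean most heavily on earlier results rather than on a self-contained cascade.
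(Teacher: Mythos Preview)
Your decomposition and cascade argument match the paper's proof almost exactly: the same case split on ``detects a reset fault'' versus ``does not,'' the same propagation of resets along a short path from a detector, and the same appeal to the merge analysis for the remaining case. Two small points where the paper is sharper are worth noting. First, in your case~(a) you assume the detector $v_k$ resets; the paper observes that the distant detector $c$ may itself be in a merging state rather than a reset-fault state, and so treats both possibilities for $c$ (reset now, or resolve the merge in $\mathcal{O}(\log N)$ rounds) before propagating back toward $b$. Second, in your case~(b) you dichotomize on whether $b$'s perceived configuration is ``reachable from a legal merge,'' and infer from the positive case that the merge is between two proper clusters---but $b$'s local view being consistent with a legal merge does not by itself force the global merge to be between proper clusters. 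The paper avoids this by arguing operationally: at each level the merge step either produces correct cluster neighbors in two rounds or triggers a reset, so after $2(\log N+1)$ rounds either the merge has completed or some node has reset, and that reset then climbs back to $b$ in $\log N + 1$ further rounds. This level-by-level ``progress or reset'' argument is exactly what fills the obstacle you flag in your final paragraph.
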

\begin{proof}
If $b$ is not a member of a proper cluster and detects a reset fault in $G_i$, our lemma holds.

Consider the case where $b$ is not a member of a proper cluster but has no reset fault.  If $b$'s cluster neighborhood is not a legal cluster neighborhood, then $b$ must be performing a merge between its own cluster $T$ and neighboring cluster $T'$.  In 2 rounds, $b$ either has the correct cluster neighbors, each with tree identifier of either $T$ or $T'$, and has passed the merge process on to its children, or $b$ has executed a reset.  The children of $b$ now either execute a reset, or are in a state consistent with the merge process, and we repeat the argument.  As there are $\log N + 1$ levels, after $2 \cdot (\log N + 1)$ rounds either a node has reset due to this merging, or the merge is complete.  If a node has reset in round $i$, its parent will reset in round $i+1$, its parent's parent will reset in $i+2$, and so on.  After at most $\log N + 1$ rounds, $b$ resets and becomes part of an unmatched clean proper cluster.

If $b$ does not detect locally that it is not a member of proper cluster $T$, then there must exist a node $c$ within distance $2 \cdot \log N$ such that every node $p_i$ on a path from $b$ to $c$ believes it is part of the same cluster as $b$, and node $c$ detects an incorrect cluster neighborhood or inconsistent $\mathit{PFC}$ state.  As with the above argument, either $c$ detects a reset fault immediately, or $c$ is participating in a merge.  In either case, after $\mathcal{O}(\log N)$ rounds, $c$ has either reset, or $c$ is a member of a proper unmatched clean cluster resulting from a successful merge.
\end{proof}

Combining Lemmas \ref{dlemma:clean_merge} and \ref{dlemma:reset_to_clean} gives us the following lemma.

\begin{lemma}
No node executes a reset action after $\mathcal{O}(\log N)$ rounds.
\end{lemma}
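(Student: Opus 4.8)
The plan is to combine the two preceding lemmas with a simple worst-case accounting of how long a ``bad'' configuration can persist. By Lemma~\ref{dlemma:reset_to_clean}, any node $b$ that is not a member of a proper unmatched clean cluster in the initial configuration $G_0$ becomes a member of such a cluster within $\mathcal{O}(\log N)$ rounds; let $c_1 \log N$ be an explicit bound on this time. By Lemma~\ref{dlemma:clean_merge}, once $b$ is a member of a proper unmatched clean cluster in some configuration $G_i$, it never executes a reset in any configuration in $\mathcal{F}(G_i)$. So after $c_1 \log N$ rounds, every node is permanently reset-free, which is exactly the claim. Since there are no new external faults by assumption (the lemma is about convergence after the last fault), this is the whole argument.

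The key steps, in order, are: first, fix an arbitrary initial configuration $G_0$ and an arbitrary node $b$; second, apply Lemma~\ref{dlemma:reset_to_clean} to conclude that in at most $c_1 \log N$ rounds $b$ lands in a proper unmatched clean cluster $T$ (note the guarantee is for \emph{every} node individually, and the bound $c_1 \log N$ is uniform over nodes, so after $c_1 \log N$ rounds all nodes simultaneously have this property); third, apply Lemma~\ref{dlemma:clean_merge} to each such node to conclude it never resets thereafter; fourth, observe that a node which resets must do so from a state that is \emph{not} a proper unmatched clean cluster (trivially, since resetting changes its cluster variables), so ``never resets after round $c_1 \log N$'' follows for all nodes at once. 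Taking the max of the individual bounds (which is still $\mathcal{O}(\log N)$) gives the stated $\mathcal{O}(\log N)$ round bound.

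One subtlety to address is the interaction between the two lemmas: Lemma~\ref{dlemma:reset_to_clean} only promises membership in a proper unmatched clean cluster at \emph{some} round within the window, not that the node stays there, while Lemma~\ref{dlemma:clean_merge} is precisely the statement that membership is ``sticky'' with respect to resets (a node in a proper unmatched clean cluster in $G_i$ never resets in any future configuration, even though its cluster may change through merges). So the composition is clean: once Lemma~\ref{dlemma:reset_to_clean} deposits $b$ into such a cluster at round $t \le c_1 \log N$, Lemma~\ref{dlemma:clean_merge} takes over from round $t$ onward and forbids all future resets by $b$. I should also note that a node that already begins in a proper unmatched clean cluster is handled immediately by Lemma~\ref{dlemma:clean_merge} with $t = 0$.

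The main obstacle is not in this lemma itself --- which is a two-line corollary --- but in making sure the two cited lemmas genuinely cover every case. In particular I would double-check that Lemma~\ref{dlemma:reset_to_clean}'s case analysis (node detects a reset fault; node is mid-merge; node is in an apparently-valid-but-globally-invalid cluster with a detector within distance $2\log N$) is exhaustive, and that in the mid-merge case the merge really does terminate (via Lemma~\ref{dlemma:merge}) into a proper unmatched clean cluster rather than an arbitrary proper cluster; if a merge could leave a node in a non-clean proper cluster, one would need an additional $\mathcal{O}(\log N)$ slack for the next $PFC$ wave to clean it, which the $\mathcal{O}(\cdot)$ bound absorbs anyway. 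Given that this plumbing checks out, the final lemma follows by the transitivity argument above.
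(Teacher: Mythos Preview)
Your proposal is correct and takes essentially the same approach as the paper: the paper's proof is a one-line remark that the lemma follows by combining Lemmas~\ref{dlemma:clean_merge} and~\ref{dlemma:reset_to_clean}, which is exactly the composition you spell out. Your additional discussion of the ``stickiness'' subtlety and the merge-termination plumbing is more explicit than the paper's treatment but does not deviate from its argument.
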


We call a configuration $G_i$ a \emph{reset-free configuration} if and only if no reset actions are executed in any configuration $G_j \in \mathcal{F}(G_i)$.  For the remainder of our proofs in this appendix, we shall assume a reset-free configuration.

\section{Matching: Additional Details}
\subsection{Full Algorithms}
Below we include the full algorithm for the matching process done by a leader cluster (Figure \ref{algo:lead}), the subroutine used by leaders to create the matching among followers (Figure \ref{algo:connect}), and the algorithm followed by a follower cluster (Figure \ref{dalgo:follow}).

\begin{algorithm}
\begin{tabbing}
........\=....\=....\=....\=....\=....\=....\kill
\>\textit{// The root $r_T$ of $T$ has selected the leader role}\\
1.\>Node $r_T$ uses $PFC$ to inform all nodes in $T$ of\\
\>\>leader role\\
2.\>Node $r_T$ uses $PFC$ to close all nodes in $T$\\
3.\>Node $r_T$ initiates the $\mathit{ConnectFollowers}$ procedure (Algorithm \ref{algo:connect})\\
4.\>\textbf{if} $T$ was not assigned a merge partner \\
\>\>\>during $\mathit{ConnectFollowers}$ \textbf{then}\\
5.\>\>$r_T$ randomly selects a new role\\
6.\>\textbf{fi}
\end{tabbing}
\caption{The Matching Algorithm for a Leader Cluster}
\label{algo:lead}
\end{algorithm}

\begin{algorithm}
\begin{tabbing}
........\=....\=....\=....\=....\=....\=....\kill
1.\>Execute $\mathit{PFC}(\mathit{ConnectFollowers}, \bot)$:\\
2.\>\>\textbf{Feedback Action for $a$:}\\
3.\>\>\>\textbf{while} $\exists b \in N_a : \mathit{role}_b = \mathit{PotentialFollower}(a) \vee $\\
\>\>\>\>$(\mathit{role}_b = \mathit{Follower}(a) \wedge b \neq \mathit{root})$ \textbf{do}\\
4.\>\>\>\>skip;\\
5.\>\>\>\textbf{od}\\
4.\>\>\>Order the $k$ followers in $N_a$ by tree\\
\>\>\>\> identifiers $b_0, b_1, b_2, \ldots, b_{k-1}$\\
5.\>\>\>\textbf{for} $i = 0,2,4,\ldots,\lfloor k/2 \rfloor$ \textbf{do}\\
6.\>\>\>\>Create edge $(b_i, b_{i+1})$;\\
7.\>\>\>\>Set merge partner of $b_i$ to $b_{i+1}$ and vice versa\\
7.\>\>\>\>Delete edge $(a,b_{i+1})$\\
8.\>\>\>\textbf{od}\\
9.\>\>\>\textbf{if} $k \bmod 2 \neq 0$ \textbf{then}\\
10.\>\>\>\>Create edge $(\mathit{parent}_a,b_{k-1})$\\
11.\>\>\>\>Delete edge $(a,b_{k-1})$\\
11.\>\>\>\textbf{fi}
\end{tabbing}
\caption{Subroutine $\mathit{ConnectFollowers}$}
\label{algo:connect}
\end{algorithm}

\begin{algorithm}
\begin{tabbing}
........\=....\=....\=....\=....\=....\=....\kill
\>\textit{// Assume root $r_T$ of $T$ has selected follower role}\\
1.\>Root $r_T$ selects a role of \emph{short} or \emph{long} follower uniformly at random.\\
2.\>Root $r_T$ propagates role of follower to nodes in $T$.\\
3.\>\textbf{if} $T$ is a \emph{short follower} \textbf{then}\\
4.\>\>Root $r_T$ sets $\mathit{pollCnt} = 2$\\
5.\>\textbf{else} $T$ is a \emph{long follower}:\\
6.\>\>Root $r_T$ sets $\mathit{pollCount} = 12$\\
7.\>\textbf{fi}\\
8.\>\textbf{while} $\mathit{pollCnt} > 0$ \textbf{and} no potential leader found \textbf{do}\\
9.\>\>Root $r_T$ queries cluster $T$ (using $PFC$) for a potential leader.\\
10.\>\>\textbf{if} $u \in T$ found a potential leader $v \in T'$ \textbf{then}\\
11.\>\>\>Forward one edge $v$ to parent of $u$ during feedback wave\\
12.\>\>\textbf{fi}\\
13.\>\>Root $r_T$ sets $\mathit{pollCnt} = \mathit{pollCnt} - 1$;\\
14.\>\textbf{od}\\
15.\>\textbf{if} a potential leader is returned to $r_T$ \textbf{then}\\
16.\>\>Root node $r_T$ selects one potential leader $v \in T'$\\
\>\>\>and informs nodes in $T$ of $v$\\
17.\>\textbf{else} \\
18.\>\>$r_T$ randomly selects a new role from $(leader, follower)$\\
19.\>\textbf{fi}
\end{tabbing}
\caption{The Matching Algorithm for a Follower Cluster}
\label{dalgo:follow}
\end{algorithm}

\subsection{Analysis of Matching}
\begin{lemma}
Let $b \in T$ be a follower that has selected a neighbor $c \in T'$ as a potential leader.  In at most $5 \cdot (\log N + 1) + 6$ rounds, the root of $T$ has an edge to some leader cluster $T''$, and all nodes in $T$ know this leader.
\label{dlemma:follow_to_root}
\end{lemma}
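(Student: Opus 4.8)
The plan is to trace the ``follow'' edge as it climbs the guest tree $\textsc{Cbt}_T(N)$ from $b$ up to the root $r_T$, charging each step to a phase of a $\mathit{PFC}$ wave and applying the wave-length bound of Lemma~\ref{dlemma:pfc} ($2 \cdot (\log N + 1) + 2$ rounds per wave on a proper cluster).

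First I would fix the state of $T$ at the moment $b$ marks $c$. Since $T$ is an unmatched follower, its root is either still propagating the follower role (line~2 of Algorithm~\ref{dalgo:follow}) or running consecutive poll waves (lines~8--14 of Algorithm~\ref{dalgo:follow}); either way, by Lemma~\ref{dlemma:pfc} the wave in progress terminates within $2 \cdot (\log N + 1) + 2$ rounds and a fresh poll wave is launched immediately after. In the worst case $b$ marks $c$ just after the current wave's feedback phase has already swept past $b$, so $b$ cannot report in that wave; this wait costs at most $2 \cdot (\log N + 1) + 2$ rounds.

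Next, during the feedback phase of the following poll wave, $b$ forwards an edge to a potential leader toward its parent, and thereafter the edge advances one level per round, reaching $r_T$ by the end of that wave's feedback phase --- a further $2 \cdot (\log N + 1) + 2$ rounds, absorbing the one-round lag between a node's feedback transition and its parent's. At that instant $r_T$ holds an edge to a node $v$ of some leader (or merging) cluster $T''$; it then aborts polling and launches one last $\mathit{PFC}$ wave carrying $\mathit{id}(v)$, and once its propagation phase reaches the leaves --- at most $\log N + 1$ additional rounds --- every node of $T$ knows $v$. Summing these three contributions plus the small constant overheads from the $\mathit{PFC}$ state transitions gives the claimed $5 \cdot (\log N + 1) + 6$.

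The hard part will be the middle step, specifically justifying the words ``a potential leader'': I must ensure that when the feedback phase reaches $b$, there is still a potential leader incident on $b$ (or on one of $b$'s descendants) available to be forwarded --- the danger being that $c$'s cluster $T'$ has in the meantime transitioned from an open leader to a closed leader, so that $c$ is momentarily unavailable. I would resolve this with the leader state machine (Algorithms~\ref{algo:lead} and~\ref{algo:connect}): a closing leader either begins merging --- in which case its nodes remain potential leaders by definition, so $c$ is still reportable --- or, having acquired no follower, re-selects a role and re-opens within $\mathcal{O}(\log N)$ rounds; moreover the $\mathit{ConnectFollowers}$ routine makes a closing leader explicitly wait for and connect exactly the followers (such as $b$, via its $\mathit{Follower}(c)$ label) that marked it while it was open, so $b$'s marking is never silently discarded. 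Turning this dichotomy into a clean case analysis that stays inside the $5 \cdot (\log N + 1) + 6$-round budget is where the real care lies.
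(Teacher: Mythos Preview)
Your approach is essentially correct and follows the same wave-tracing pattern as the paper, but the decomposition is different and you manufacture a difficulty that the paper simply does not face. The paper's proof splits the $5(\log N+1)+6$ rounds into four pieces: (i) at most $2(\log N+1)+2$ rounds until a feedback wave reaches $b$; (ii) a further $(\log N+1)+2$ rounds for that wave to finish and deliver a potential leader to $r_T$; (iii) $\log N+1$ rounds for the propagate phase of the leader-inform wave to reach all nodes; and (iv) a final $(\log N+1)+2$ rounds for the \emph{feedback} of that same inform wave to carry the selected leader's edge back up to $r_T$. You instead use three pieces ($2+2+1$ in units of $\log N+1$), treating the edge as reaching the root during the poll wave itself and then hand-waving the last two constants; this works, but obscures why the root needs a second climb of the tree in the paper's accounting. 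More importantly, the ``hard part'' you flag---whether $c$ is still a potential leader when feedback reaches $b$---is a non-issue in the paper: the proof observes that $b$ marks $c$ \emph{immediately, regardless of the $\mathit{PFC}$ state}, and this marking is what is forwarded; the root then receives ``at least one potential leader (which may or may not be $c$)''. No case analysis on $T'$ closing or re-opening is needed, because the marking persists and the lemma only asserts an edge to \emph{some} leader cluster $T''$.
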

\begin{proof}
When $b$ detects $c$ becomes a potential leader, $b$ marks $c$ as a potential leader immediately, regardless of the $\mathit{PFC}$ state.  After at most $2 \cdot (\log N + 1) + 2$ rounds, a feedback wave will reach $b$, at which point $b$ will forward the information about its potential leader.  In an additional $(\log N + 1) + 2$ rounds, the $\mathit{PFC}$ wave completes, at which point the root of $T$ has at least one potential leader (which may or may not be $c$) returned to it.  The root of $T$ will select one returned leader and execute the leader-inform $\mathit{PFC}$ wave.  In $\log N + 1$ rounds, all nodes know the identity of their leader and its cluster.  Cluster $T$'s selected leader $c'$ will be forwarded up the tree during the feedback wave, reaching the root in an additional $(\log N + 1) + 2$ rounds.
\end{proof}

\begin{lemma}
Let $r_T$ be the root of cluster $T$.  If $r_T$ selects the role of $\mathit{Leader}$, within $9 \cdot (\log N + 1) + 10$ rounds either $T$ has been paired with a merge partner, or $T$ randomly selects a new role.  Furthermore, all followers of $T$ have been assigned a merge partner.
\label{dlemma:lead_max}
\end{lemma}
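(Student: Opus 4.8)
The plan is to march through the leader algorithm (Algorithm~\ref{algo:lead}) phase by phase, bound the duration of each phase, and then read off the two conclusions.

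First I would observe that the cluster $T$ on which $r_T$ runs the leader algorithm is a proper clean cluster: in a reset-free configuration $r_T$ only selects a new role upon completing a $\mathit{PFC}$ wave, so by Lemma~\ref{dlemma:pfc_works} $T$ is a proper cluster with every node in state $\mathit{Clean}$, and since a leader cluster performs no modifications to its own cluster structure it stays proper and clean through all three waves. This lets me invoke Lemma~\ref{dlemma:pfc} for each wave. The three $\mathit{PFC}$ waves are: the ``inform leader role'' wave (line~1), the ``close all nodes'' wave (line~2), and the $\mathit{ConnectFollowers}$ wave of Algorithm~\ref{algo:connect} (line~3). Each of the first two is a plain $\mathit{PFC}$ wave costing $2 \cdot (\log N + 1) + 2$ rounds; the third also costs $2 \cdot (\log N + 1) + 2$ rounds \emph{plus} the maximum time any node $a \in T$ spends in the while-loop of its feedback action.

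The crux is bounding that while-loop. The loop on $a$ blocks while some neighbor $b$ still has $\mathit{role}_b = \mathit{PotentialFollower}(a)$, or is a non-root follower of the leader node $a$. I would argue two facts. First, no neighbor can \emph{begin} following $a$ after $a$ has been closed, i.e.\ after the propagation phase of the ``close'' wave reaches $a$; hence every neighbor that will ever follow $a$ has detected $a$ as a potential leader no later than the two first waves plus one propagation depth into the execution. Second, by Lemma~\ref{dlemma:follow_to_root}, once a follower $b$'s cluster $T'$ has detected a potential leader, within $5 \cdot (\log N + 1) + 6$ rounds the root of $T'$ holds the unique edge to its chosen leader node in $T$ and all of $T'$ knows this leader, at which point no node of $T'$ satisfies the blocking condition for $a$ any longer. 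Combining these with the fact that $a$'s feedback action cannot even start until the $\mathit{ConnectFollowers}$ feedback wave reaches $a$ (already several rounds into the third wave, after the two earlier waves), the residual wait is short enough that the three waves together fit inside $9 \cdot (\log N + 1) + 10$ rounds. The matching work of $\mathit{ConnectFollowers}$ (the for-loop and the odd-follower forwarding) is purely local and adds no rounds; if line~4 finds $T$ unmatched, the new role selection at line~5 is immediate, so the round bound holds on both branches.

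For ``all followers of $T$ have been assigned a merge partner,'' I would argue by induction up the tree from the leaves: when a node $a$ runs its feedback action it pairs all incident followers (setting reciprocal merge partners and deleting the redundant edges) and forwards at most one unmatched follower to its parent, so each node passes up at most one follower; the root $r_T$ therefore receives at most one leftover follower, which it either pairs with another received follower or adopts as $T$'s own merge partner. Thus every follower ends the $\mathit{ConnectFollowers}$ wave with a merge partner, and a leftover reaching $r_T$ means $T$ itself is matched (consistent with line~4 taking the ``matched'' branch). The main obstacle is the first part above: pinning down precisely the latest round a follower can attach to a leader node, and threading Lemma~\ref{dlemma:follow_to_root} through the phase timings so that the accumulated delays still sum to exactly $9 \cdot (\log N + 1) + 10$.
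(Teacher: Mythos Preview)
Your approach is essentially the paper's: walk through the leader algorithm wave by wave, bound the $\mathit{ConnectFollowers}$ feedback delay via Lemma~\ref{dlemma:follow_to_root}, and read off the matching conclusion from the pair-and-forward structure.

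Two small points will resolve the arithmetic you flag as the obstacle. First, the paper's accounting treats the leader procedure as \emph{two} $\mathit{PFC}$ waves, not three: one $\mathit{PFC}(\mathit{Lead},\bot)$ wave that opens all nodes, costing $2(\log N+1)+2$, and then $\mathit{PFC}(\mathit{ConnectFollowers},\bot)$, whose propagation phase is what closes the nodes and whose feedback phase does the pairing; the ``close'' step (line~2) is folded into that second wave rather than standing alone. Second, the paper makes explicit the point you only gesture at: because every node in $T$ is already $\mathit{ClosedLead}$ once the second wave's propagation has passed, no \emph{new} potential followers can attach, so the $5(\log N+1)+6$ bound from Lemma~\ref{dlemma:follow_to_root} is a single global delay to the feedback wave, not a per-node delay that accumulates. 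That gives $2(\log N+1)+2$ plus $5(\log N+1)+6$ equals $7(\log N+1)+8$ for the second wave, and adding the first wave yields exactly $9(\log N+1)+10$. Your argument for ``all followers matched'' matches the paper's.
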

\begin{proof}
First, note that $\mathit{PFC}(\mathit{Lead}, \bot)$ requires $2 \cdot (\log N + 1) + 2$ rounds to complete (Lemma \ref{dlemma:pfc}).  The $\mathit{PFC}(\mathit{ConnectFollowers}, \bot)$ wave requires at most $7 \cdot (\log N + 1) + 8$ rounds.  To see this, notice that the feedback action for this wave cannot advance past a node $b \in T$ until $b$ has no neighbors that are potential followers and all followers are root nodes.  Let $T'$ be a follower cluster that has selected $T$ as a potential leader.  By Lemma \ref{dlemma:follow_to_root}, after at most $5 \cdot (\log N + 1) + 6$ rounds, all potential followers of $b$ are either no longer following $b$, or are connected with a root to $b$.

Notice that the \emph{total} wait for all nodes in $T$ is $5 \cdot (\log N + 1) + 6$, since all nodes in $T$ have a role of $\mathit{ClosedLead}$ and are not assigned any additional potential followers.  Therefore, the feedback wave can be delayed at most $5 \cdot (\log N + 1) + 6$ rounds, leading to a total $7 \cdot (\log N + 1) + 8$ rounds for the $\mathit{PFC}(\mathit{ConnectFollowers}, \bot)$ wave.  When the $\mathit{PFC}(\mathit{ConnectFollowers},\bot)$ wave completes, all followers of $T$ have been assigned a merge partner.  If there were an odd number of followers, $T$ has also been assigned a merge partner, else $T$ will randomly re-select a role.
\end{proof}

\begin{lemma}
Let $T$ be a short follower cluster.  In at most $4 \cdot (\log N + 1) + 4$ rounds, either $T$ has selected a leader, or $T$ randomly re-selects a role.
\label{dlemma:short_follow}
\end{lemma}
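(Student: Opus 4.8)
The plan is to reduce everything to counting propagation-of-information-with-feedback-and-cleaning waves. Working in a reset-free configuration, the cluster $T$ in question is a proper cluster, so Lemma~\ref{dlemma:pfc} applies: any $\mathit{PFC}(I,F)$ wave initiated by the root $r_T$ completes --- information delivered to all nodes, feedback returned to $r_T$, every node back in the $\mathit{Clean}$ state --- in exactly $2\cdot(\log N+1)+2$ rounds. I will count rounds from the moment the short-follower role has been established in $T$ (Algorithm~\ref{dalgo:follow}, lines~1--7), which is the natural reading of ``$T$ is a short follower cluster''; by line~4 such a cluster has $\mathit{pollCnt}=2$, so the polling loop of lines~8--14 runs for at most two iterations, each iteration being one $\mathit{PFC}$ query wave.

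Next I would establish that the dichotomy is forced within those two waves. During any one query wave, if some node $u\in T$ observes a neighbor in another cluster that is a potential leader, $u$ forwards one such edge toward its parent on the feedback portion of the wave (lines~10--12); hence, whenever a potential leader is present in $T$'s neighborhood at some point during a wave, at least one potential leader is returned to $r_T$ by the time that wave finishes. So after the first query wave either a potential leader has reached $r_T$ --- and then, by line~16, $r_T$ has selected a leader and the loop exits early --- or none has, and the loop performs its second and last iteration. After the second wave, if still no potential leader has been returned, the line~15 test fails and $r_T$ randomly re-selects a role (line~18). In all cases the outcome is reached within $2\cdot\bigl(2(\log N+1)+2\bigr)=4\cdot(\log N+1)+4$ rounds, which is the claim.

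The step that needs the most care --- and the only real obstacle --- is verifying that a query wave genuinely fits in the $2(\log N+1)+2$-round budget \emph{regardless of the search outcome}. For this I would appeal to the structure of the $\mathit{PFC}$ subroutine (Algorithm~\ref{algo:pfc}): the propagate phase and the feedback phase advance level by level driven purely by the $\mathit{PFCState}$ values, and ``finding a potential leader'' merely augments the feedback payload rather than gating any state transition, so it never stalls the wave. This is in deliberate contrast with the leader side, where the feedback action of $\mathit{ConnectFollowers}$ busy-waits; the follower's query wave contains no blocking action. I would also note that, since $T$ is a proper cluster, the adversary cannot delay the wave through an inconsistent neighborhood. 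The remainder is bookkeeping: adding the two wave lengths and observing that early termination (a leader found already on the first wave) can only shorten the interval.
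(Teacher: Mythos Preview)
Your proposal is correct and follows essentially the same approach as the paper: a short follower has $\mathit{pollCnt}=2$, each poll is a single $\mathit{PFC}$ wave costing $2(\log N+1)+2$ rounds by Lemma~\ref{dlemma:pfc}, and after the two polls either a leader was returned and selected or the root re-selects a role. Your additional care in checking that the follower's query wave contains no blocking feedback action (so the $\mathit{PFC}$ bound applies cleanly) is a welcome clarification, but the core argument matches the paper's proof exactly.
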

\begin{proof}
A short follower polls its cluster for a leader at most twice, each requiring $2 \cdot (\log N + 1) + 2$ rounds (Lemma \ref{dlemma:pfc}).  If a leader is not returned, $T$ will randomly select a new role.  If at least one leader is returned, $T$ selects it.
\end{proof}

\begin{lemma}
Let $T$ be a long follower cluster.  In at most $24 \cdot (\log N + 1) + 24$ rounds, either $T$ has selected a leader, or $T$ randomly re-selects a role.
\label{dlemma:long_follow}
\end{lemma}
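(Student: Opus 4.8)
The plan is to mirror the argument of Lemma~\ref{dlemma:short_follow} almost verbatim, the only change being the value of the poll counter. Since the analysis assumes a reset-free configuration, $T$ is a proper cluster throughout the relevant interval, so by Lemma~\ref{dlemma:pfc} every $\mathit{PFC}$ wave initiated by $r_T$ completes in $2 \cdot (\log N + 1) + 2$ rounds.

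First I would recall from Algorithm~\ref{dalgo:follow} that a long follower sets $\mathit{pollCount} = 12$ and then repeatedly queries its cluster (via a $\mathit{PFC}$ wave) for a potential leader, decrementing the counter after each query and halting the moment a potential leader is returned to $r_T$. During the feedback portion of each query wave, any node that has marked a neighboring potential leader forwards one such edge toward the root, so if a potential leader is present at the time of the $i$-th query, $r_T$ learns of it by the end of that query's feedback wave. Hence if a leader is returned during the $i$-th poll for some $i \le 12$, then after at most $i \cdot (2 \cdot (\log N + 1) + 2) \le 24 \cdot (\log N + 1) + 24$ rounds $r_T$ has selected a leader; otherwise all $12$ polls complete without success, after which $r_T$ randomly re-selects a role from $(\mathit{leader}, \mathit{follower})$, again within $12 \cdot (2 \cdot (\log N + 1) + 2) = 24 \cdot (\log N + 1) + 24$ rounds. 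Either way the claimed bound holds.

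I do not expect a genuine obstacle here; the lemma is a routine bookkeeping variant of Lemma~\ref{dlemma:short_follow} with the constant $2$ replaced by $12$. The single point worth a sentence of care is that $\mathit{pollCount}$ is part of the root's state and could in principle be corrupted in the initial configuration, but under the reset-free assumption $r_T$ reached its current (follower) role only by executing the role-selection and initialization steps of Algorithm~\ref{dalgo:follow}, so $\mathit{pollCount}$ was set to $12$ by $r_T$ itself and the bound applies without exception.
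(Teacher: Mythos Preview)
Your proposal is correct and follows exactly the paper's approach: the paper's proof simply says ``by similar argument to Lemma~\ref{dlemma:short_follow}, a long follower polls its cluster at most 12 times, each requiring $2\cdot(\log N+1)+2$ rounds,'' which is precisely your computation $12\cdot(2\cdot(\log N+1)+2)=24\cdot(\log N+1)+24$. If anything, you have given more detail than the paper does, and your remark about the reset-free assumption handling potentially corrupted $\mathit{pollCount}$ is a reasonable addition.
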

\begin{proof}
By similar argument to Lemma \ref{dlemma:short_follow}, a long follower polls its cluster at most 12 times, each requiring $2 \cdot (\log N + 1) + 2$ rounds (Lemma \ref{dlemma:pfc}).  If a leader is not returned, $T$ will randomly select a new role.  If at least one leader is returned, $T$ selects it.
\end{proof}

%
\begin{lemma}
Let $T$ be a follower cluster that has returned a leader after a $\mathit{PFC}$ search wave of Algorithm \ref{dalgo:follow}.  After at most $16 \cdot (\log N + 1) + 16$ rounds, $T$ has a merge partner.
\label{dlemma:follow_success_unmatched}
\end{lemma}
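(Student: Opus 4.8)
The plan is to follow the follower cluster $T$ from the moment its $\mathit{PFC}$ search wave reports a potential leader to $r_T$ until $r_T$ is assigned a merge partner, decomposing this interval into two stages, each bounded by an already-established timing result. Let $b \in T$ be the node whose marked neighbour $c$ is the potential leader that reaches $r_T$. \textbf{Stage 1 (leader-inform wave):} continuing past the point where Lemma~\ref{dlemma:follow_to_root} leaves $r_T$ holding a returned potential leader, $r_T$ selects one such leader node $c'$, lying in some cluster $T''$, propagates $c'$ to all of $T$ with a $\mathit{PFC}$ wave, and during the feedback portion $c'$ is forwarded up the tree while $r_T$ installs the overlay edge to $c'$; by Lemma~\ref{dlemma:pfc} this costs at most $2(\log N+1)+2$ rounds, after which $r_T$ is a root-follower adjacent to $c' \in T''$ and every node of $T$ knows $T''$. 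Since $c'$ was a potential leader, $T''$ is either already executing its leader algorithm (with $c'$ open or closed) or $c'$ is mid-merge, in which case the rule that a followed merging cluster becomes a leader, together with the $\mathcal{O}(\log N)$-round merge bound, puts $T''$ into its leader algorithm within $\mathcal{O}(\log N)$ more rounds, with $r_T$'s edge carried along to whichever node of the merged cluster inherits $c'$'s range.

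\textbf{Stage 2 (the leader's remaining $\mathit{PFC}$ waves):} in the worst case $T''$ has only just become an open leader (or only just finished its merge and is re-propagating the leader role) when $r_T$'s edge arrives, so it must still complete the $\mathit{PFC}(\mathit{Lead})$ wave, the closing $\mathit{PFC}$ wave, and the $\mathit{PFC}(\mathit{ConnectFollowers})$ wave. By Lemma~\ref{dlemma:pfc} each of the first two costs at most $2(\log N+1)+2$ rounds, and by the analysis inside Lemma~\ref{dlemma:lead_max} the $\mathit{ConnectFollowers}$ wave---including its worst-case feedback-action delay of $5(\log N+1)+6$ rounds while incident followers resolve---costs at most $7(\log N+1)+8$ rounds; adding the Stage~1 cost (the extra $\mathcal{O}(\log N)$ for a mid-merge $c'$ is absorbed, since in that case none of the Stage~2 budget has yet been spent when the merge finishes) yields the claimed $16(\log N+1)+16$ rounds. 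That $T$ is genuinely matched, not merely that the clock runs down, follows because the feedback action of $\mathit{ConnectFollowers}$ (Algorithm~\ref{algo:connect}) forces the node of $T''$ holding $r_T$'s edge to stall until $r_T$ is visible to it as a proper root-follower, at which point $r_T$ is included in that node's pairing of incident followers---matched with another root-follower or forwarded to a parent that does likewise---so that ultimately some node of $T''$, possibly $r_{T''}$ itself, sets a merge partner for $r_T$; this is exactly the guarantee of Lemma~\ref{dlemma:lead_max}, which already asserts that every follower of $T''$ is assigned a merge partner once $\mathit{ConnectFollowers}$ completes, even if $T''$ afterwards re-selects a role.

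\textbf{Expected main obstacle:} the delicate point is the race where Stages~1 and~2 meet---must $r_T$'s edge reach $c'$ before $T''$'s $\mathit{ConnectFollowers}$ feedback wave sweeps past $c'$? I would argue yes, using two facts: $b$ can only have marked $c'$ while $c'$ was an \emph{open} leader, which is strictly before $r_{T''}$ begins its $\mathit{ConnectFollowers}$ wave; and the ``$\mathit{PotentialFollower}(c')$ or non-root $\mathit{Follower}(c')$'' predicate that stalls $c'$'s feedback action stays true throughout the hand-off from $b$ to $r_T$, so $c'$ cannot advance its feedback wave past the instant $r_T$ is attached. The remaining work---pinning the constant to exactly $16$ rather than a slightly larger multiple of $(\log N+1)$---is bookkeeping over which of $T''$'s sub-waves have already finished when $r_T$ connects, which I expect to be fiddly but not conceptually deep.
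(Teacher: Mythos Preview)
Your two-stage decomposition (leader-inform wave, then wait on the selected leader's remaining work) is exactly the paper's approach, and your use of the stall predicate in $\mathit{ConnectFollowers}$ to guarantee $r_T$ is eventually paired is the same mechanism the paper relies on via Lemma~\ref{dlemma:lead_max}.

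Where you diverge from the paper is only in bookkeeping, and there the paper is cleaner. The paper does not re-derive the leader's wave costs; it simply invokes Lemma~\ref{dlemma:lead_max} as a black box ($9(\log N+1)+10$) for the case $T'$ was a leader, and Lemma~\ref{dlemma:merge} ($5(\log N+1)+4$) followed by Lemma~\ref{dlemma:lead_max} for the case $T'$ was mid-merge, giving $2+5+9=16$ on the coefficient and $2+4+10=16$ on the constant with no overlap argument needed. Your Stage~2 instead itemises three waves (Lead, closing, $\mathit{ConnectFollowers}$) for $11(\log N+1)+12$, which already exceeds the $9(\log N+1)+10$ the paper charges for the entire leader phase; to then land on $16$ in the merge case you are forced into the ``absorption'' overlap between Stage~1 and the ongoing merge, which you state but do not compute. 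The numbers do in fact close if one runs Stage~1 concurrently with the tail of the merge, but the paper avoids this delicacy entirely by citing the two lemmas sequentially and accepting a slightly loose (but still $\le 16(\log N+1)+16$) bound. Your race-condition paragraph is genuine extra care; the paper simply leaves that to the ``all followers of $T$ have been assigned a merge partner'' clause of Lemma~\ref{dlemma:lead_max}.
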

\begin{proof}
Let $r_T$ be the root node of $T$.  Node $\mathit{r}_T$ selects a returned leader $T'$ and, in $2 \cdot (\log N + 1) + 2$ additional rounds, all nodes in $T$ have been informed of leader $T'$ and $\mathit{r}_T$ has an edge to a node $b$ from $T'$.

After $\mathit{root}_T$ has an edge to a node $b \in T'$, $\mathit{root}_T$ waits to be assigned a merge partner.  Suppose $T'$ had the role of leader when selected by $T$.  By Lemma \ref{dlemma:lead_max}, after at most $9 \cdot (\log N + 1) + 10$ rounds, $T$ will be assigned a merge partner.

Suppose $T'$ was executing a merge when selected by $T$.  $T$ will be assigned a merge partner when (i) $T'$ finishes its merge, and (iii) $T'$ finishes executing Algorithm \ref{algo:lead}.  If $T'$ was merging, it completes in at most $5 \cdot (\log N + 1) + 4$ rounds (Lemma \ref{dlemma:merge}).  The resulting cluster $T''$ will be a leader, and will require at most $9 \cdot (\log N + 1) + 10$ rounds before all followers have been assigned a merge partner (Lemma \ref{dlemma:lead_max}).
\end{proof}

Since the initial configuration is set by the adversary, it can be difficult to make probabilistic claims when dealing with the initial configuration.  For instance, the adversary could assign all clusters the role of long follower, in which case the probability that a merge happens over $24 \cdot (\log N +1) + 24$ rounds is 0.  Notice, however, that after a short amount of time, regardless of the initial configuration, clusters are guaranteed to have randomly selected their current roles.  Therefore, we ignore the first $24 \cdot (\log N + 1) + 24$ rounds of execution in the following lemmas.

\begin{definition}
Let $G_0$ be the initial network configuration.  We define $\mathcal{F}_{\Delta}(G_0)$ to be the set of future configurations reached after $\Delta = 26 \cdot (\log N + 1) + 26$ rounds of program execution from the initial configuration.
\end{definition}

\begin{lemma}
Let $T$ be a follower cluster in configuration $G_i \in \mathcal{F}_{\Delta}(G_0)$.  With probability at least $1/2$, $T$ either randomly selects a new role or has found a leader in $4 \cdot (\log N + 1)$ rounds.
\label{dlemma:follow_to_lead}
\end{lemma}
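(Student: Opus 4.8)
The plan is to reduce this claim to the already-established short-follower bound (Lemma~\ref{dlemma:short_follow}), after first arguing that in a configuration $G_i \in \mathcal{F}_{\Delta}(G_0)$ the short/long sub-role of $T$ is a genuine unbiased coin flip that lies outside the adversary's control. Concretely, since $\Delta = 26 \cdot (\log N + 1) + 26 > 24 \cdot (\log N + 1) + 24$, and a long follower re-selects its role after exactly $12$ polling waves (Algorithm~\ref{dalgo:follow}), each lasting $2 \cdot (\log N + 1) + 2$ rounds (Lemma~\ref{dlemma:pfc}), any follower role the adversary installed in $G_0$ has been discarded and freshly re-chosen uniformly at random by round $\Delta$. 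Hence in $G_i$ the event ``$T$ is a short follower'' has probability exactly $1/2$, independent of the initial configuration; this is precisely the observation flagged in the paragraph preceding the definition of $\mathcal{F}_{\Delta}$, so this step is short.

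Next I would condition on $T$ being a short follower and split on the state of $T$ in $G_i$. If the root $r_T$ has already had a potential leader returned to it by a completed search wave, the conclusion holds immediately, reading ``$T$ has found a leader in $4 \cdot (\log N + 1)$ rounds'' as a persistent property satisfied in particular at the start of the window. Otherwise $T$ is still polling, with at most two polling waves remaining on its schedule ($\mathit{pollCnt} \leq 2$ in Algorithm~\ref{dalgo:follow}); by Lemma~\ref{dlemma:short_follow}, whose proof just counts these at-most-two waves via Lemma~\ref{dlemma:pfc}, within $4 \cdot (\log N + 1)$ further rounds either a potential leader is returned to $r_T$ or $r_T$ re-selects a role uniformly at random. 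In both sub-cases the desired event occurs, and since the short-follower event has probability $1/2$, the lemma follows.

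The main obstacle is purely a bookkeeping one: Lemma~\ref{dlemma:short_follow} is stated with bound $4 \cdot (\log N + 1) + 4$ for a cluster that \emph{begins} its short-follower phase, whereas here the window is measured from an arbitrary intermediate configuration $G_i$. The point to get right is that a cluster already mid-schedule has fewer than two polls left, so the window only shrinks, and at worst one must absorb the $\mathcal{O}(1)$ slack at the ends of a $\mathit{PFC}$ wave; I would either phrase the count so this slack disappears or simply invoke the $4 \cdot (\log N + 1) + 4$ figure and note that the stated $4 \cdot (\log N + 1)$ holds up to the usual additive constant. Nothing else in the argument depends on the exact constant, so this is a matter of careful statement rather than a conceptual difficulty.
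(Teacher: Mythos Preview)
Your proposal is correct and follows essentially the same route as the paper: argue that, since no cluster can remain a follower for more than $24\cdot(\log N+1)+24$ rounds, the follower sub-role in any $G_i\in\mathcal{F}_\Delta(G_0)$ was chosen by a genuine fair coin, then invoke Lemma~\ref{dlemma:short_follow} on the short-follower event. Your treatment is in fact more careful than the paper's on two points---splitting on whether $T$ is mid-schedule versus just starting, and explicitly flagging the $+4$ discrepancy between the bound quoted from Lemma~\ref{dlemma:short_follow} and the $4\cdot(\log N+1)$ stated here---both of which the paper simply glosses over.
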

\begin{proof}
Cluster $T$ must have randomly selected its follower role in $G_i$, as no cluster can be a follower for longer than $24 \cdot (\log N + 1) + 24$ rounds.  Given that $T$ is a follower, with probability $1/2$, $T$ must have been a short follower, and therefore either $T$ finds a leader or selects a new role after $4 \cdot (\log N + 1) + 4$ rounds (Lemma \ref{dlemma:short_follow}).
\end{proof}

\begin{lemma}
Consider configuration $G_i \in \mathcal{F}_{\Delta}(G_0)$.  With probability at least $1/4$, every node in cluster $T$ will have been a potential leader for at least one round over the next $21 \cdot (\log N + 1) + 20$ rounds.
\label{dlemma:lead_probability}
\end{lemma}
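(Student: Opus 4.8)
The plan is to case-split on the state of the cluster $T$ in $G_i$. In a reset-free configuration $T$ is in exactly one of four states: (a) an \emph{open-leader} cluster (its root is still running the ``inform of leader role'' $PFC$ wave of Algorithm \ref{algo:lead}, so no node of $T$ has yet been closed), (b) a \emph{closed-leader} cluster (its root is at or past the ``close'' wave of Algorithm \ref{algo:lead}), (c) a follower cluster, or (d) a merging cluster (its root has been assigned a merge partner). Two observations make each case routine once the state is fixed. First, whenever a cluster runs Algorithm \ref{algo:lead} starting from its first line, the opening $PFC$ wave reaches a node strictly before the closing wave does, so every node holds the role of \emph{open leader} -- hence is a potential leader -- for at least one round; by Lemma \ref{dlemma:pfc} the last such node is reached within $3\cdot(\log N+1)+2$ rounds. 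Second, once a cluster is assigned a merge partner, the merge-notification wave of Algorithm \ref{algo:merge_sketch} turns every node of that cluster into a ``merging'' node -- hence a potential leader -- within a further $\log N+1$ rounds, and the merge itself lasts $\Theta(\log N)$ rounds (Lemma \ref{dlemma:merge}). It therefore suffices to show that, with probability at least $1/4$, within $21\cdot(\log N+1)+20$ rounds $T$ either (re)enters the first line of Algorithm \ref{algo:lead} or is assigned a merge partner.

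Cases (a), (b), and (d) are handled directly. In case (a) no node of $T$ has been closed, so each node will be an open leader for at least one round during the remainder of the current run of Algorithm \ref{algo:lead}; the claim holds with probability $1$. In case (d), $T$ already has a merge partner, so by the second observation every node of $T$ is a potential leader for at least one round; again the claim holds with probability $1$. In case (b), Lemma \ref{dlemma:lead_max} ensures that within $9\cdot(\log N+1)+10$ rounds $T$ is either assigned a merge partner -- in which case case (d) applies -- or re-selects its role uniformly at random, choosing \emph{leader} (and re-running Algorithm \ref{algo:lead} from its first line) with probability $1/2$; either way the claim holds with probability at least $1/2$, and all of these chains fit inside the $21\cdot(\log N+1)+20$-round budget.

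Case (c) is the one that determines the constants. Split it according to whether $T$ is still searching or has already returned a leader to its root. If $T$ has already returned a leader, Lemma \ref{dlemma:follow_success_unmatched} gives $T$ a merge partner within $16\cdot(\log N+1)+16$ further rounds (regardless of how $T$ reached this state), so case (d) applies and the claim holds with probability $1$. If $T$ is still searching, note that because $G_i\in\mathcal{F}_{\Delta}(G_0)$ and a follower re-selects within $24\cdot(\log N+1)+24<\Delta$ rounds (Lemma \ref{dlemma:long_follow}), the follower role $T$ currently holds, together with its short/long sub-choice, was selected by $T$'s own reads of $\Psi$; hence $T$ is a \emph{short} follower with probability at least $1/2$. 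Condition on this. By Lemma \ref{dlemma:short_follow}, within $4\cdot(\log N+1)+4$ rounds $T$ either returns a leader -- whereupon, by the previous sub-case, it is matched and merges -- or re-selects its role and becomes a leader with probability $1/2$. Writing $p$ for the probability of the first alternative, the conditional success probability is at least $p+(1-p)/2=1/2+p/2\geq 1/2$, and multiplying by the probability $1/2$ of being a short follower yields $1/4$. Combining all four cases gives the lemma.

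The main obstacle is the round bookkeeping rather than the probabilistic structure. One must check that every branch above -- and in particular the longest, in which a short follower returns a leader, possibly waits through that leader's own merge, is assigned a merge partner, and only then has its nodes turned into potential leaders by the merge-notification wave -- actually completes within $21\cdot(\log N+1)+20$ rounds; this amounts to summing the bounds of Lemmas \ref{dlemma:pfc}, \ref{dlemma:lead_max}, \ref{dlemma:short_follow}, \ref{dlemma:follow_success_unmatched}, and \ref{dlemma:merge} and verifying the constants cohere. A secondary subtlety is that the ``$T$ is a short follower with probability $1/2$'' step genuinely requires restricting to $\mathcal{F}_{\Delta}(G_0)$: on the adversarial initial configuration it fails, but after $\Delta=26\cdot(\log N+1)+26$ rounds no cluster can still be wearing an adversary-imposed role.
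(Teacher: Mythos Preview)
Your proposal is correct and follows essentially the same approach as the paper: the same four-way case split (open leader, closed leader, follower, merging), the same appeals to Lemmas \ref{dlemma:lead_max}, \ref{dlemma:short_follow}, and \ref{dlemma:follow_success_unmatched}, and the same reason the bound $1/4$ appears (short follower with probability $1/2$, then leader re-selection with probability $1/2$). Your treatment is in fact slightly more careful than the paper's in two respects: you make the probability computation $p+(1-p)/2\ge 1/2$ explicit rather than leaving it implicit, and you articulate why the restriction to $\mathcal{F}_{\Delta}(G_0)$ is genuinely needed for the ``short follower with probability $1/2$'' step.
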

\begin{proof}
Consider the possible roles and states of any cluster $T$.  If $T$ is currently participating in a merge or is an $\mathit{OpenLeader}$, then our lemma holds.

Suppose $T$ is a follower in configuration $G_i$.  By Lemma \ref{dlemma:follow_to_lead}, with probability at least $1/2$, $T$ will either find a leader or randomly select another role after $4 \cdot (\log N + 1) + 4$ rounds.  If $T$ finds a leader, after an additional $16 \cdot (\log N + 1) + 16$ rounds (Lemma \ref{dlemma:follow_success_unmatched}), $T$ is assigned a merge partner, and after at most $\log N + 1$ additional rounds, all nodes in $T$ are potential leaders.  If $T$ randomly selects another role, with probability $1/2$ $T$ selects the leader role, and all nodes are potential leaders after at most $\log N + 1$ additional rounds.

Suppose a node $b \in T$ is a closed leader ($\mathit{role}_b = \mathit{ClosedLeader}$).  After at most $9 \cdot (\log N + 1) + 10$ rounds (Lemma \ref{dlemma:lead_max}), either the root of $T$ is assigned a merge partner and $b$ becomes a potential leader after an additional $\log N + 1$ rounds, \emph{or} the root of $T$ is not assigned a merge partner and selects a new role at random.  With probability $1/2$, then, $b$ becomes a potential leader after an additional $\log N + 1$ rounds.
\end{proof}

\begin{lemma}
Every cluster $T$ in configuration $G_i \in \mathcal{F}_{\Delta}(G_0)$ has probability at least $1/16$ of being assigned a merge partner over $64 \cdot (\log N + 1) + 64$ rounds.
\end{lemma}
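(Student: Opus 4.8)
The plan is to lower-bound the success probability by the product of two essentially independent ``$1/4$'' events: (i) cluster $T$ spends an entire long-follower search window inside the allotted $64 \cdot (\log N + 1) + 64$ rounds, and (ii) a fixed neighbour $T'$ of $T$ in the cluster graph is a potential leader throughout a sub-window of that search. We may assume $G_i$ is not already legal, so there are at least two clusters; since the network is (and remains) weakly connected, the cluster graph is connected, so we may fix a cluster $T'$ adjacent to $T$ and an edge between some $u \in T$ and some $v \in T'$. If $T$ or $T'$ ever merges the conclusion of the lemma already holds, so we may assume $T$ and $T'$ stay adjacent clusters throughout.

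For event (i): tracing $T$ forward from $G_i$, whatever its current role and state, Lemmas \ref{dlemma:lead_max}, \ref{dlemma:short_follow}, \ref{dlemma:long_follow} (and \ref{dlemma:merge} in the merging case) show that within at most $24 \cdot (\log N + 1) + 24$ rounds $T$ is either assigned a merge partner --- which is already a win --- or re-runs the matching algorithm and draws a fresh role; conditioned on the latter, with probability $(1/2)\cdot(1/2) = 1/4$ it becomes a long follower and begins its search at some configuration $G_{j'}$ reached within $24 \cdot (\log N + 1) + 24$ rounds. For event (ii): apply Lemma \ref{dlemma:lead_probability} to $T'$ with starting configuration $G_{j'}$, obtaining that with probability at least $1/4$ every node of $T'$ is a potential leader for at least one round over the next $21 \cdot (\log N + 1) + 20$ rounds. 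By Lemma \ref{dlemma:long_follow}, $T$ keeps searching for the full $24 \cdot (\log N + 1) + 24$ rounds after $G_{j'}$, and as $21 \cdot (\log N + 1) + 20 < 24 \cdot (\log N + 1) + 24$ the potential-leader window lies inside $T$'s search window; hence at some round of that window the fixed neighbour $v \in T'$ is a potential leader while $u \in T$ is still searching. Then $u$ marks $v$ immediately, Lemma \ref{dlemma:follow_to_root} has the root of $T$ return a leader within $5 \cdot (\log N + 1) + 6$ further rounds, and Lemma \ref{dlemma:follow_success_unmatched} supplies a merge partner within $16 \cdot (\log N + 1) + 16$ more. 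The round total is at most $24 \cdot (\log N + 1) + 24$ to the fresh role selection, at most $24 \cdot (\log N + 1) + 24$ for the long-follower search --- which by Lemma \ref{dlemma:long_follow} already absorbs the detection and feedback just described --- and at most $16 \cdot (\log N + 1) + 16$ to obtain the partner, i.e.\ $64 \cdot (\log N + 1) + 64$ rounds.

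It remains to multiply the probabilities and absorb the ``early win''. Event (i) is determined by the random role choices of $T$'s root, which may be treated as fresh because $G_i \in \mathcal{F}_{\Delta}(G_0)$; the event of Lemma \ref{dlemma:lead_probability} for $T'$, once we condition on $G_{j'}$, is determined by the coin tosses of $T'$'s root and of clusters near $T'$, not of $T$ (a searching follower tosses no further coins, and $T$'s following $T'$ cannot make a node of $T'$ fail to be a potential leader), so its probability stays at least $1/4$ even after conditioning on event (i). Writing $p$ for the probability of the ``early merge partner'' case, the success probability is therefore at least $p + (1-p)\cdot(1/4)\cdot(1/4) \ge 1/16$. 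I expect the main obstacle to be exactly this bookkeeping: making the conditional-independence argument rigorous, and checking the timing --- that the $21 \cdot (\log N + 1) + 20$-round potential-leader guarantee for $T'$ is genuinely ``caught'' by one of $T$'s twelve polls before $T$'s long-follower budget runs out, and that $T$ and $T'$ remain adjacent clusters until that detection occurs (any merge breaking adjacency being a win regardless).
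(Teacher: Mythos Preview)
Your proposal is correct and follows essentially the same approach as the paper: wait at most $24(\log N+1)+24$ rounds for $T$ to re-select a role, hit long follower with probability $1/4$, apply Lemma~\ref{dlemma:lead_probability} to a neighbouring cluster $T'$ for another $1/4$, then invoke Lemma~\ref{dlemma:follow_success_unmatched} for the final $16(\log N+1)+16$ rounds. The paper's proof is considerably terser and does not spell out the independence argument, the explicit ``early win'' absorption, or the check that the $21(\log N+1)+20$ potential-leader window sits inside the long-follower search window; your write-up simply makes these implicit steps explicit.
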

\begin{proof}
After at most $24 \cdot (\log N + 1) + 24$ rounds, if $T$ has not been assigned a merge partner, $T$ will re-select its role.  With probability $1/4$, $T$ will be a long follower and be searching for a leader for $24 (\log N + 1) + 24$ rounds.  By Lemma \ref{dlemma:lead_probability}, a neighboring cluster $T'$ has probability at least $1/4$ of being a potential leader during this time.  Therefore, $T$ has probability at least $1/16$ of selecting a leader within $48 \cdot (\log N + 1) + 48$ rounds, which will result in $T$ being assigned a merge partner after at most an additional $16 \cdot (\log N + 1) + 16$ rounds (Lemma \ref{dlemma:follow_success_unmatched}).
\end{proof}

\section{Merge: Additional Details}
\subsection{Additional Figures}
To help see the merge process, Figure \ref{fig:merge1} contains two steps of a merge procedure, depicted both on the host network $\textsc{Avatar}_{\textsc{Cbt}}$ and on the guest network $\textsc{Cbt}$.

\begin{figure}
\centering
\subfigure[Step 0 of Merge (Guest Network)]{
   \label{fig:guest_merge0}
   \includegraphics[scale=0.23]{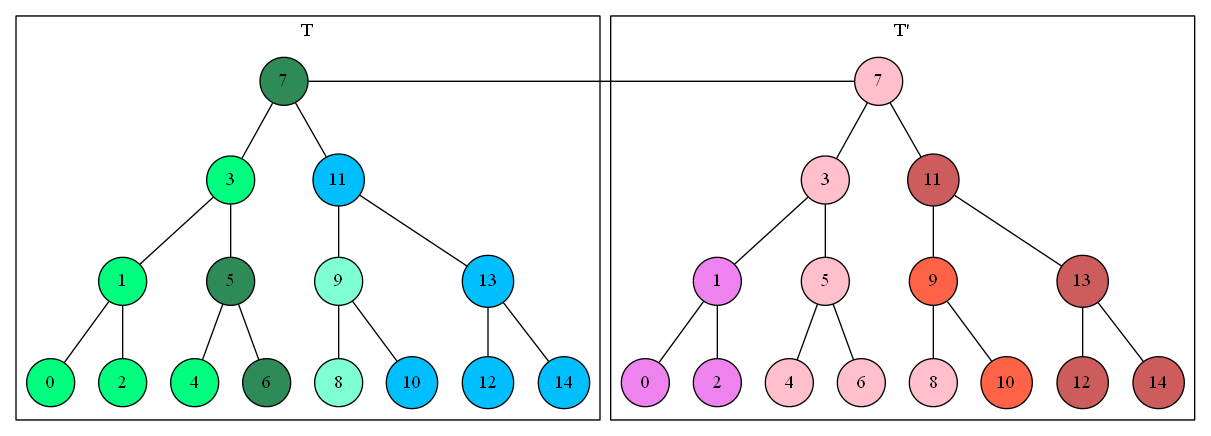}
}\hfill
\subfigure[Step 0 of Merge (Host Network)]{
   \label{fig:real_merge0}
   \includegraphics[scale=0.20]{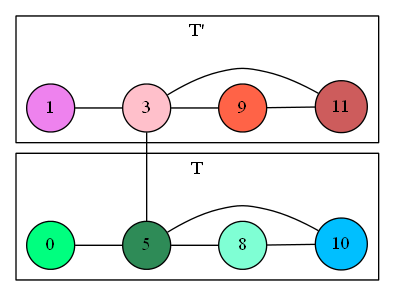}
}
%
\subfigure[Step 1 of Merge (Guest Network)]{
   \label{fig:virtual_merge1}
   \framebox{\includegraphics[scale=0.23]{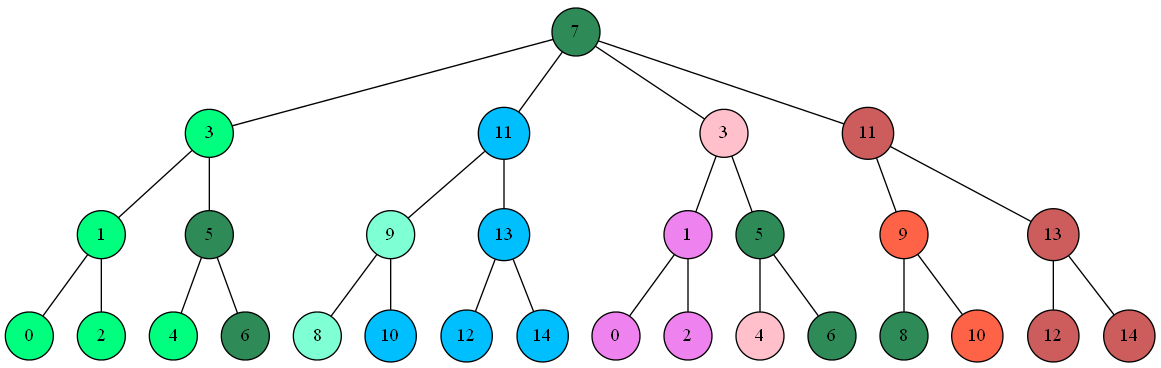}}
}\hfill
\subfigure[Step 1 of Merge (Host Network)]{
   \label{fig:real_merge1}
   \framebox{\includegraphics[scale=0.20]{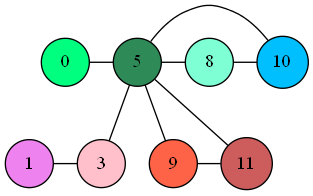}}
}
\caption{The guest and host networks for clusters $T$ and $T'$ at the start of a merge, and after the first step of a merge.  After the merge step, only one root exists in the guest network (the root hosted by $5$), and the successor of node $3$, along with 3's range, has been updated.}
\label{fig:merge1}
\end{figure}

\subsection{Full Algorithms}
We provide the full algorithms used for merging in our self-stabilizing algorithm: the procedure used in the guest network for replacing a guest node with another (Figure \ref{algo:replace_node}), and the algorithm for merging the entire cluster (Figure \ref{algo:merge}).

\begin{algorithm}
\begin{tabbing}
........\=....\=....\=....\=....\=....\=....\kill
$\mathit{ReplaceNode}(c, d):$\\
1.\>\textbf{if} $\mathit{partner}_a \neq \mathit{cluster}_b \vee \mathit{partner}_b \neq \mathit{cluster}_a$\\
\>\>$\vee \mathit{rs}_a \neq L \vee \mathit{rs}_b \neq L$ \textbf{then}\\
2.\>\>Reset hosts of nodes $c$ and $d$ (ends the merge process)\\
3.\>\textbf{fi}\\
4.\>\textbf{if} $\mathit{host}_d < \mathit{clusterSucc}_{\mathit{host}_c}$ \textbf{then}\\
5.\>\>$\mathit{clusterSucc}_{\mathit{host}_c} = \mathit{host}_d$\\
6.\>\>$\mathit{LostNodes}_c = \{b : \mathit{host}_b = \mathit{host}_c \wedge b > \mathit{clusterSucc}_{\mathit{host}_c}\}$\\
7.\>\textbf{else if} $\mathit{clusterPred}_{\mathit{host}_c} = \bot \wedge \mathit{host}_d < \mathit{host}_c$ \textbf{then}\\
8.\>\>$\mathit{clusterPred}_{\mathit{host}_c} = \mathit{host}_d$;\\
9.\>\>$\mathit{LostNodes}_c =$\\
\>\>\>$\{b : \mathit{host}_b = \mathit{host}_c \wedge \mathit{host}_d < b < \mathit{host}_c\}$\\
10.\>\textbf{else} \textit{// No successor pointer is updated}\\
11.\>\>Connect cluster children of $c$ to $d$; Delete node $c$\\
12.\>\textbf{fi}\\
13.\>\textbf{for each} $a \in \mathit{LostNodes}_c$ \textbf{do}\\
14.\>\>Copy node $a$ and tree neighbors to $\mathit{host}_d$\\
15.\>\>Delete $a$ from $\mathit{host}_c$'s embedding\\
16.\>\textbf{od}
\end{tabbing}
\caption{The $\mathit{ReplaceNode}$ Procedure}
\label{algo:replace_node}
\end{algorithm}

\begin{algorithm}
\begin{tabbing}
........\=....\=....\=....\=....\=....\=....\kill
\>\textbf{Precondition:} $T$ and $T'$ are merge partners with\\
\>\>connected roots.\\
1.\>$\mathit{root}_T$ ($\mathit{root}_{T'}$) notifies $T$ ($T')$ of \\
\>\>(i) merge partner $T'$ ($T$), and\\
\>\>(ii) value of the shared random sequence.\\
2.\>Remove all \emph{matched edges between $T$ and $T'$}.\\
3.\>$\mathit{ResolveCluster}(\mathit{root}_T, \mathit{root}_{T'})$\\
4.\>Once $\mathit{ResolveCluster}$ completes at leaves,\\
\>\>inform nodes in new cluster $T'' = T \cup T'$ about\\
\>\>new cluster identifier\\
\\
$\mathit{ResolveCluster}(a,b):$ for $a \in \textsc{Cbt}_T(N), b \in \textsc{Cbt}_{T'}(N)$\\
\>\>\textit{// without loss of generality, assume $a \prec b$}\\
1.\>\>$\mathit{ReplaceNode}(a,b)$\\
\>\>\textit{// Node $b$ is now connected to children of $a$.}\\
\>\>\textit{// Let $l_a$ ($r_a$) be the left (right) child of $a$,}\\
\>\>\textit{// and $l_b$ ($r_b$) be the left (right) child of $b$.}\\
2.\>\>Create edges $(l_a, l_b)$ and $(r_a, r_b)$;\\
3.\>\>Concurrently execute $\mathit{ResolveCluster}(l_a, l_b)$\\
\>\>\>and $\mathit{ResolveCluster}(r_a, r_b)$
\end{tabbing}
\caption{The Merge Algorithm}
\label{algo:merge}
\end{algorithm}

\subsection{Analysis of Merging}
We present the following lemma concerning the time required to complete a merge between two clusters $T$ and $T'$.

\begin{lemma}
Let $T$ and $T'$ be proper clusters, and let the merge partner of $T$ ($T'$) be $T'$ ($T$).  Assume the root $r_T$ of $T$ and the root $r_{T'}$ of $T'$ are connected.  In $5 \cdot (\log N + 1) + 4$ rounds, $T$ and $T'$ have merged together into a single proper unmatched clean cluster $T''$, containing exactly $N$ nodes.
\label{dlemma:merge}
\end{lemma}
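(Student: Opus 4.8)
The plan is to decompose the merge (Algorithm~\ref{algo:merge}) into four phases and bound each: (1) the PFC wave by which $r_T$ and $r_{T'}$ notify their clusters of the merge partner and the value of the shared random sequence; (2) the deletion of the matched edges between $T$ and $T'$; (3) the recursive $\mathit{ResolveCluster}$ descent down the guest \textsc{Cbt}; and (4) the closing PFC wave that installs the new cluster identifier. Since $T$ and $T'$ are proper clusters by hypothesis, Lemma~\ref{dlemma:pfc} applies and phase (1) completes in $2(\log N+1)+2$ rounds; phase (2) piggybacks on phase (1) and adds no asymptotic cost, as a node deletes its matched edges to the partner as soon as it learns the merge is underway, the root edge excepted so the union stays connected. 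For phase (3) I would argue by induction on the level $j$ that $\mathit{ResolveCluster}$ reaches level $j$ within a constant number of rounds of reaching level $j-1$: each invocation of $\mathit{ReplaceNode}(a,b)$ performs only range comparisons on the hosts of $a$ and $b$, a single transfer of $\mathit{LostNodes}$ (whose total size is bounded by the host degree, hence $\mathcal{O}(\log^2 N)$ during a merge by Lemma~\ref{lemma:merge_degree}, and is moved in $\mathcal{O}(1)$ message rounds), and the creation of the two child edges $(l_a,l_b),(r_a,r_b)$, after which the recursion on the two children runs in parallel. Summing, the descent takes $\mathcal{O}(\log N)$ rounds, and phase (4) is again $2(\log N+1)+2$ rounds by Lemma~\ref{dlemma:pfc}; with the constants chosen as in the algorithm the total is $5(\log N+1)+4$.

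The correctness argument is the heart of the proof. I would maintain the invariant that, once $\mathit{ResolveCluster}$ has returned from every pair of guest nodes at levels $0,\ldots,j$, the guest nodes of $\textsc{Cbt}(N)$ at levels $\le j$ are correctly embedded --- each guest node $g$ sitting on the unique host $u\in V(T)\cup V(T')$ with $g\in\mathit{range}(u)$ --- and all Type~1 and Type~2 edges of $\textsc{Avatar}_{\textsc{Cbt}}$ induced by those guest nodes are present in the host graph with consistent cluster pointers. The base case $j=0$ holds because the two roots are connected (precondition) and $\mathit{ReplaceNode}$ adjusts exactly one of $\mathit{clusterSucc}$/$\mathit{clusterPred}$ of the ``winning'' root toward the losing root and ships the now-foreign guest nodes (and their tree neighbours) across. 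For the inductive step, after the level-$j$ pair $(a,b)$ is resolved the surviving host is connected to the children of both $a$ and $b$, so the edges $(l_a,l_b)$ and $(r_a,r_b)$ can be created and $\mathit{ResolveCluster}$ recurses; the range updates at level $j+1$ are exactly those needed to absorb the identifiers contributed by the other cluster, so the embedding and the Type~2 edges among levels $\le j+1$ come out right. When the recursion bottoms out at the leaves, the invariant yields a correct dilation-1 embedding of $\textsc{Cbt}(N)$ on $V(T)\cup V(T')$, i.e. the host graph equals $\textsc{Avatar}_{\textsc{Cbt}}(N,V(T)\cup V(T'))$; the phase-(4) PFC wave then writes the new root's identifier into every $\mathit{cluster}$ variable, clears the merge-partner fields, and leaves every node in PFC state $\mathit{Clean}$. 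Verifying conditions (i)--(v) of a \emph{proper cluster} for $T''$ against this then follows as in the earlier clustering lemmas, with (iv) using the reset-free assumption. Finally, the node count is exactly $N$ because each of $T$ and $T'$ already realized an $N$-node guest \textsc{Cbt} and the merge keeps precisely one guest copy of each identifier in $[N]$; the host set is exactly $V(T)\cup V(T')$ since the identifier sets of $V(T)$ and $V(T')$ are disjoint and every host $u$ retains at least the guest node with identifier $u$ (as $u\in\mathit{range}(u)$), so no host is emptied.

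I expect the main obstacle to be phase (3): proving the level-by-level invariant rigorously, in particular that $\mathit{ReplaceNode}$'s local range tests always identify the correct ``winner,'' that the $\mathit{LostNodes}$ transfer together with the child re-linking reconstructs \emph{every} Type~2 edge (not merely the Type~1 successor/predecessor chain), and that the consistency checks in the opening lines of $\mathit{ReplaceNode}$ (Algorithm~\ref{algo:replace_node}) never fire --- hence no reset --- precisely because $T$ and $T'$ entered the merge as proper clusters with matching $\mathit{partner}$ fields. A secondary point requiring care is bounding the per-level cost by a constant despite the $\mathcal{O}(\log^2 N)$ degree blow-up during a merge, which I would handle by observing that each lost guest node carries only $\mathcal{O}(1)$ adjacency data and the transfer uses one message per incident edge per round.
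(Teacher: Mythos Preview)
Your decomposition and per-phase bounds are the same strategy as the paper's proof, which is a bare round count: $2(\log N{+}1)+2$ for the preparatory $\mathit{PFC}(\mathit{Prep})$ wave, then $2$ rounds per level for $\mathit{ResolveCluster}$ (one for $\mathit{ReplaceNode}$, one to connect the two pairs of children) giving $2(\log N{+}1)$ for the descent, and finally a single feedback pass of $(\log N{+}1)$ rounds plus $2$ for cleaning. Your phase~(4) accounting is off: the paper does \emph{not} charge a full $2(\log N{+}1)+2$ PFC wave after the leaves are reached, only a feedback-plus-clean of $(\log N{+}1)+2$; with your own numbers ($2(\log N{+}1)+2$ for phase~(1), at least $2(\log N{+}1)$ for phase~(3), and $2(\log N{+}1)+2$ for phase~(4)) you get $6(\log N{+}1)+4$, so the claimed total $5(\log N{+}1)+4$ does not follow from your decomposition as written. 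Everything you say about the level-by-level correctness invariant, Type~2 edge reconstruction, the no-reset guarantee from the opening checks of $\mathit{ReplaceNode}$, and the constant-per-level cost under $\mathcal{O}(\log^2 N)$ degree growth goes well beyond the paper, which asserts the per-level constant without further argument and does not prove the structural invariant at all.
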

\begin{proof}
The first step of the merge procedure is to execute the $\mathit{PFC}(\mathit{Prep})$ wave, which requires $2 \cdot (\log N + 1) + 2$ rounds (Lemma \ref{dlemma:pfc}).  Consider an invocation of the procedure $\mathit{ResolveCluster}(a,b)$.  Let $a$ be from cluster $T$, $b$ be from cluster $T'$, and without loss of generality let $a$ be the guest node which is to be deleted (that is, the range of the host of $b$ will contain $a$ after the merge).  $\mathit{ReplaceNode}(a,b)$ requires only 1 round, and results in the children of $a$ being connected to node $b$.  In the next round, $b$ will connect its children with the children from $a$, which requires 1 round.  $\mathit{ResolveCluster}$ is then executed concurrently for nodes from level $i+1$.  Therefore, the running time starting from level $i$ is $T(i) = 2 + T(i+1)$.  Since there are $\log N + 1$ levels, we have $T(0) = \sum_{i=0}^{\log N}{2} = 2 \cdot (\log N + 1)$.  After the resolution process reaches the leaves, the final feedback travels up the tree, requiring an additional $\log N + 1$ rounds, plus $2$ rounds for cleaning.
\end{proof}

\begin{lemma}
With probability at least $(1 - N/2^k)$ (where $k = |L|$ and $k \geq \log N$), the algorithm does not disconnect the network.
\end{lemma}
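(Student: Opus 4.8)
The plan is to show that among all the edge deletions the algorithm can perform, only the ``matched-edge removal'' in line~2 of Algorithm~\ref{algo:merge} can ever split a connected component, and that each such removal either happens inside a genuine merge of two proper clusters (where it is harmless) or requires the adversary to have guessed the secret $k$-bit label $L$ that is the relevant prefix of $\Psi$.

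\textbf{Step 1: catalogue and classify the deletions.} I would first check every line of every algorithm that removes an edge. Reset actions (Algorithm~\ref{algo:reset}) delete no edges. The deletions in $\mathit{ConnectFollowers}$ (the lines removing $(a,b_{i+1})$ and $(a,b_{k-1})$) and in $\mathit{ReplaceNode}$ (the line removing a lost guest node's host edges) are all \emph{compensated}: whenever $(x,y)$ is removed, in the same round an edge $(x',y)$ is created with $x$ adjacent to $x'$ --- here $x'$ is the follower $b_i$, resp.\ the parent of $a$, resp.\ the host $\mathit{host}_d$, each of which is adjacent to $x$ by the precondition of the procedure it occurs in. Hence any path through $(x,y)$ can be rerouted via $x \to x' \to y$, so a compensated deletion never increases the number of connected components. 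This leaves the line-2 removal of all $T$--$T'$ edges except the root--root edge as the only potentially dangerous deletion.

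\textbf{Step 2: legitimate merges are safe.} If $T$ and $T'$ are proper clusters whose roots were joined by the matching algorithm, then each is a connected graph (since $\textsc{Avatar}_{\textsc{Cbt}}(N,\cdot)$ is connected), so after deleting the crossing edges $T$ and $T'$ remain internally connected, the surviving root--root edge keeps $T\cup T'$ connected, and no edge leaving $T\cup T'$ is touched; by Lemma~\ref{dlemma:merge} the rest of the merge then runs to completion, so no later step of it disconnects anything either. For a line-2 removal that is \emph{not} of this kind, the guard of $\mathit{ReplaceNode}$ (and the analogous condition governing line~2, cf.\ the note that ``no edge is deleted unless both incident nodes receive from their respective cluster roots a message matching $\Psi$'') forbids deleting a crossing edge $(u,v)$ unless both $u$ and $v$ hold a confirmation value equal to $L$. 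Since $\Psi$ is shared, immutable, and unknown to the adversary, $L$ is uniform on $\{0,1\}^k$ and independent of the entire adversarial choice of the initial configuration, so any one such spurious confirmation value equals $L$ with probability exactly $2^{-k}$. Union-bounding over the at most $N$ identifiers that could act as the root of a spurious merge (each broadcasting a single adversarial confirmation value down its prep wave) bounds by $N\cdot 2^{-k}=N/2^k$ the probability that any spurious line-2 removal ever fires; on the complementary event every deletion is either compensated or part of a legitimate merge, so connectivity is preserved, and $k\ge\log N$ makes the bound meaningful.

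\textbf{Main obstacle.} The delicate point is cleanly separating ``legitimate'' from ``spurious'' merges, and in particular excluding the scenario where the adversary seeds a root that re-broadcasts the \emph{correct} $L$ while its (or its partner's) cluster is disconnected or the root--root edge is missing. The intended resolution is that in a reset-free configuration a root can only hold a merge partner actually delivered by the matching algorithm, which by Lemma~\ref{dlemma:pfc_works} certifies both clusters as proper and their roots as genuinely adjacent; consequently any confirmation value capable of triggering a disconnecting deletion must originate in the not-yet-reset-free initial configuration, where it is adversarial and hence independent of $L$, validating the union bound. Nailing this down --- including the corner case of a root caught mid-prep-wave, so that the relevant values are adversarial in-flight messages rather than freshly read --- is the part that needs care; the path bookkeeping of Step~1 and the one-line probability estimate of Step~2 are routine once it is settled.
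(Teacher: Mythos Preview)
Your proposal is correct and follows essentially the same approach as the paper: both arguments reduce to the observation that only adversarially-seeded merge deletions can disconnect the graph, and each such deletion requires the adversary's planted confirmation value to match the hidden $k$-bit string $L$, after which a union bound over the adversary's guesses finishes. The paper is terser---it does not catalogue the other deletions individually as you do in Step~1---and it counts the adversary's guesses as at most $N/2$ \emph{pairs} rather than $N$ roots, yielding the slightly sharper $N/2^{k+1}$; but the structure of the argument is the same.
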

\begin{proof}
First, notice that deletions (of edges and nodes) that occur due to proper clusters $T$ and $T'$ merging do not disconnect the network -- the only edges deleted are those between nodes in $T \cup T'$, and these nodes will form a proper cluster $T''$ after the merge.  The only way in which the network can be disconnected is if the adversary creates an initial configuration such that a node $b$ believes it is either merging, or preparing for a merge, and thus deletes an edge to a node $c$.  Notice that for any edge $(b,c)$ to be deleted, both $b$ and $c$ must have the same value for their random sequence, and this value must match the shared random string $L$.  While the adversary can enforce the first condition, they are unable to guarantee the second.  Instead, for any particular pair of nodes $a$ and $b$, the adversary has probability $1/2^k$ of setting the random sequences of $a$ and $b$ to match $L$.  An adversary can have up to $N/2$ different ``guesses'' in any initial configuration.  Therefore, the probability that the network is disconnected is at most $N/2^{k+1}$ (for $k \geq \log N$).
\end{proof}

\subsection{Degree Expansion Analysis}
In this section, we describe in full the analysis used to show the degree expansion of our algorithm is $\mathcal{O}(\log^2 N)$.

To begin, we present the following corollary, which is a result of Theorem \ref{theorem:max_degree}.

\begin{corollary}
Consider a node $u$ hosting a set of nodes $\mathit{Virtual}_u$ such that all $b \in \mathit{Virtual}_u$ belong to the same proper cluster $T$.  Node $u$ has at most $2 \cdot \log N$ virtual nodes with neighbors in cluster $T$ that are not hosted by $u$.
\label{dcorollary:cluster_degree}
\end{corollary}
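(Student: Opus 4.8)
The plan is to reduce the statement directly to the span-disjointness argument already carried out in the proof of Theorem~\ref{dtheorem:max_degree}. First I would unwind the definition of a proper cluster: since $T$ is a proper cluster, the subgraph of $G$ induced by the nodes of $T$ is exactly $\textsc{Avatar}_{\textsc{Cbt}}(N,T)$, so the underlying guest network is the complete binary search tree $\textsc{Cbt}(N)$ on the virtual node set $[N]$, and each virtual node is hosted by exactly one host node of $T$. In particular $\mathit{Virtual}_u = \Phi^{-1}(u)$ coincides with the set of virtual nodes lying in the cluster range of $u$, which is a contiguous node segment $S[a,b] \subseteq [N]$.

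Next I would characterize the objects we must count. A virtual node $c \in \mathit{Virtual}_u$ has a neighbor in $T$ that is not hosted by $u$ if and only if $c$ is incident in $\textsc{Cbt}(N)$ to an edge whose other endpoint lies outside the segment $S$ (that endpoint is hosted by some other node of $T$, since every virtual node is hosted by some node of $T$). Hence the quantity in the corollary is bounded by the number of edges of $\textsc{Cbt}(N)$ with exactly one endpoint in $S$: each such virtual node is the inside endpoint of at least one crossing edge, and each crossing edge contributes exactly one inside endpoint.

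Then I would invoke the key combinatorial fact established in the proof of Theorem~\ref{dtheorem:max_degree}: edges of $\textsc{Cbt}(N)$ of the same span size have pairwise disjoint spans, so for each of the $\log N$ span sizes at most two such edges can cross out of any fixed segment $S$ (at most one near each endpoint of $S$), giving at most $2\log N$ crossing edges in total. Combining this with the reduction of the previous paragraph yields that $u$ has at most $2\log N$ virtual nodes with neighbors in $T$ not hosted by $u$, as claimed.

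The main obstacle — really the only point requiring care — is the bookkeeping identifying $\mathit{Virtual}_u$ with a node segment of $[N]$ and the "external" virtual nodes with inside endpoints of boundary-crossing edges; once the \emph{proper cluster} definition is expanded this identification is immediate, and all of the quantitative work is inherited from Theorem~\ref{dtheorem:max_degree}.
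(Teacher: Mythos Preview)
Your proposal is correct and matches the paper's approach: the paper states this result as an immediate corollary of Theorem~\ref{theorem:max_degree} without further argument, and what you have written is precisely the unpacking of that implication --- identifying $\mathit{Virtual}_u$ with a contiguous segment of $[N]$ and then invoking the span-disjointness bound on crossing edges from the proof of Theorem~\ref{dtheorem:max_degree}. Your observation that the number of virtual nodes with an external neighbor is at most the number of crossing edges (since each such node is the inside endpoint of at least one crossing edge) is the only extra sentence needed, and you supply it.
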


We define the set of actions a node may execute that can increase the degree of a real node $u$.

\begin{definition}
Let a \emph{degree-increasing action} of a virtual node $b$ be any action that adds a node $c$ to the neighborhood of a node $b' \in N_b$ such that $b'$ is not hosted by $\mathit{host}_b$.  Specifically, the degree-increasing actions are:
\begin{enumerate}
\item (Selection for Leaders): edges added from the connecting and forwarding of followers during the $\mathit{PFC}(\mathit{ConnectFollowers},\bot)$ wave of Algorithm \ref{algo:lead}
\item (Selection for Followers): forwarding an edge to a leader after the root has selected a leader in Algorithm \ref{dalgo:follow}
\item (Merge): resolution and virtual node transfer actions during Algorithm \ref{algo:merge}
\end{enumerate}
\end{definition}

Notice that transferring all non-cluster edges from a loser node is not a degree-increasing action, as the edges are ``virtual transfers'' between two virtual nodes hosted by the same real node.

\begin{lemma}
Let $u$ be a real node in configuration $G_i$.  The maximum number of real nodes $u$ will add to any neighbor $v$'s neighborhood in a single round is $2 \cdot \log N$.
\label{dlemma:single_node_increase}
\end{lemma}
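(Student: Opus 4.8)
The plan is to bound, round by round and action by action, how many distinct real nodes $u$ can inject into a single neighbor $v$'s neighborhood, by going through each of the three degree-increasing action types listed in the preceding definition. The key observation is that every degree-increasing action is an edge forwarding performed on behalf of some virtual node $b$ hosted by $u$: the action takes an edge incident on $b$ (or on a virtual child of $b$) and hands it to another real node. So the total count is controlled by (i) how many virtual nodes at $u$ can simultaneously be performing such an action, and (ii) how many edges each such virtual node can forward in one round. Since we are in a reset-free configuration (assumed for the remainder of the appendix), every virtual node at $u$ lies in a proper cluster, and Corollary \ref{dcorollary:cluster_degree} tells us $u$ has at most $2\log N$ virtual nodes with neighbors outside $u$ in its cluster.

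First I would handle the \emph{Selection for Followers} case: after $T$'s root selects a leader, a single virtual node $b$ forwards at most one edge (to that leader) to its parent during the feedback wave; aggregated over all of $u$'s virtual nodes in $T$, only those $b$ with an out-of-host neighbor can contribute, and by the corollary there are at most $2\log N$ of them. Next, the \emph{Selection for Leaders} case: in $\mathit{ConnectFollowers}$, a virtual leader node $a$ creates edges among its incident follower roots and forwards at most one leftover follower edge to its parent — again at most a constant number of new edges per virtual node, times at most $2\log N$ relevant virtual nodes per cluster by the corollary. Finally, the \emph{Merge} case: in $\mathit{ReplaceNode}$/$\mathit{ResolveCluster}$, a "loser" virtual node $a$ transfers its out-of-host cluster edges to the "winner" $b$ and connects its children to $b$'s children, but a given level of the guest tree is resolved in a single round step, and on that level at most $2\log N$ virtual nodes of $u$ have out-of-host neighbors (corollary again), with each transferring $O(1)$ edges per round step of the resolution. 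In every case the per-round count over all of $u$'s virtual nodes is $O(\log N)$, and with the constants worked out it is at most $2\log N$.

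The main obstacle I expect is the bookkeeping for the merge case, because during a merge a real node's virtual population is transiently larger than a single proper cluster's worth (it hosts virtual nodes belonging to both $T$ and $T'$ as they coalesce), so Corollary \ref{dcorollary:cluster_degree} does not apply verbatim to the in-progress configuration. I would address this by arguing that the resolution process touches guest nodes one level at a time (as in Lemma \ref{dlemma:merge}), so in any single round only the virtual nodes on one level are performing transfers, and on a fixed level the relevant count is still governed by the span-disjointness argument underlying Theorem \ref{theorem:max_degree} — at most two edges of each of the $\log N$ span sizes cross out of a contiguous node segment. Once that is pinned down, summing $2$ edges per span size over $\log N$ span sizes gives the $2\log N$ bound, and the other two action types are strictly easier. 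The final step is simply to take the maximum over the three action types, which is $2\log N$, completing the lemma.
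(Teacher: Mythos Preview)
Your case analysis over the three degree-increasing action types mirrors the paper's structure, but there are two substantive differences worth flagging.

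First, you premise everything on being in a reset-free configuration and then invoke Corollary~\ref{dcorollary:cluster_degree}. The paper does \emph{not} assume reset-free here; instead it repeatedly uses the reset mechanism itself as the bounding device: if $u$ hosts virtual nodes executing different procedures, $u$ resets and forwards nothing; if two followers of a leader node share a host, that host detects a reset fault. This matters because the very next lemma (Lemma~\ref{dlemma:degree_reset}) applies the $2\log N$ bound to the phase \emph{before} $u$ is in a proper clean cluster, so the bound must hold in arbitrary configurations, not only reset-free ones.

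Second, your treatment of $\mathit{ConnectFollowers}$ has a gap. You write ``at most a constant number of new edges per virtual node, times at most $2\log N$ relevant virtual nodes per cluster by the corollary,'' but the corollary bounds how many of $u$'s \emph{own} virtual nodes have out-of-host cluster neighbors; it says nothing about how many follower roots might be hosted by a single neighbor $v$. A leader node $a$ may have arbitrarily many followers, and if several of them were hosted by the same $v$, then $u$ would add several edges to $v$ when pairing them. The paper closes this by observing that distinct followers must have distinct hosts (otherwise a reset fault is detected), so $u$ adds at most one edge to any follower's host. Your merge argument also takes an unnecessary detour through per-level reasoning; the paper's version is more direct: in a single round $u$ hands a contiguous range of virtual nodes to its new cluster successor $v$, and that range---being a segment---has at most $2\log N$ real neighbors by the span-disjointness argument of Theorem~\ref{theorem:max_degree}.
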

\begin{proof}
We consider the degree-increasing actions.  Notice that a real node will detect a reset fault if it hosts two virtual nodes $b$ and $b'$ such that $b$ and $b'$ are executing different algorithms -- for example, if $b$ is merging while $b'$ is executing a selection for leaders, host $u$ will reset and not forward any neighbors.

Consider the selection algorithms for both leaders and followers as executed on a virtual node $b$.  Node $b$ can only increase the degree of its parent or of a follower.  Consider the degree increase $b$ causes to its parent.  Node $b$ may give its parent a single edge to a follower or a leader.  Since $u$ hosts at most $2 \cdot \log N$ virtual nodes with cluster neighbors not hosted by $u$, and each of these neighbors can increase the degree of a node by at most 1, node $u$ can only increase the degree of a cluster neighbor when executing selection for leaders and followers by at most $2 \log N$.

Next, consider how virtual node $b$ may increase the degree of a follower with Algorithm \ref{algo:lead}.  Every follower $b'$ of $b$ may have one additional edge added by $b$.  Notice, however, that every follower $b'$ must have a unique host -- if not, this host would detect a reset fault.  Therefore, $u$ can increase the degree of a real node $v$ by at most 1 when connecting followers in Algorithm \ref{algo:lead}.

Consider the merge actions of virtual nodes hosted by $u$.  Again, node $u$ must have all virtual nodes executing the merge algorithm, else $u$ resets.  In a given round, $u$ may update its successor and give up all hosted virtual nodes in a particular range to its new successor $v$.  The virtual nodes in this range can have at most $2 \cdot \log N$ real neighbors.
\end{proof}

\begin{lemma}
Let $u$ be a real node in some configuration $G_i$.  The degree expansion of $u$ before all virtual nodes hosted by $u$ are members of a proper clean unmatched cluster is $\mathcal{O}(\log^2 N)$.
\label{dlemma:degree_reset}
\end{lemma}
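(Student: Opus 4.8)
The plan is to bound separately (a) how long $u$ can remain before all its hosted virtual nodes lie in a proper clean unmatched cluster, and (b) how many incident edges $u$ can gain per round during that interval; multiplying the two gives the additive degree growth, and comparing it to the target degree $\Delta_{\mathit{ON}} = \Theta(\log N)$ (Theorem~\ref{theorem:max_degree}) yields the degree expansion. For part~(a) I would split into cases on the status of a virtual node $b$ hosted by $u$. If $b$ is not in a proper cluster, Lemma~\ref{dlemma:reset_to_clean} already puts $b$ in a proper unmatched clean cluster within $\mathcal{O}(\log N)$ rounds. If $b$ is in a proper but not clean cluster, the $\mathit{PFC}$ mechanism either repairs the cluster or triggers resets within $\mathcal{O}(\log N)$ rounds. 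If $b$ is in a proper clean matched cluster, the merge begins and completes within $\mathcal{O}(\log N)$ rounds, and by Lemma~\ref{dlemma:merge} the result is again a proper clean unmatched cluster. Taking the maximum over all virtual nodes $u$ hosts, and using that a completed merge (or a reset to a size-$1$ cluster) already ends the interval so that no chaining of merges occurs, the interval has length $\mathcal{O}(\log N)$.

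For part~(b) I would show that at most $\mathcal{O}(\log N)$ real nodes are added to $u$'s neighborhood in any round of this interval. The key ingredients are Lemma~\ref{dlemma:single_node_increase} (a single sender contributes at most $2\log N$ edges) and the detector structure: $u$ declares a reset fault the moment it receives edges for which no merge or matching-selection action accounts. Hence in any round not ending with $u$ resetting, $u$'s virtual nodes all locally appear to be either in a clean cluster executing the matching selection, in which case only the at most $2\log N$ \emph{boundary} virtual nodes of $u$ (Corollary~\ref{dcorollary:cluster_degree}) forward an edge to $u$ and each forwards at most one, or in an ongoing merge, in which case only the single level currently handled by $\mathit{ResolveCluster}$ contributes and the ranges transferred onto $u$ during that level touch only $\mathcal{O}(\log N)$ outside hosts, exactly as in the merge case of Lemma~\ref{dlemma:single_node_increase}. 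For the merge contribution one may alternatively cite Lemma~\ref{lemma:merge_degree} directly, which bounds the entire merge-induced increase by $\mathcal{O}(\log^2 N)$. A round in which $u$ does reset contributes only what was sent in the previous round, again $\mathcal{O}(\log N)$ per sender, and the reset itself ends the interval.

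Combining (a) and (b), node $u$ begins with $\Delta_{G_i}$ incident edges and gains at most $\mathcal{O}(\log N) \cdot \mathcal{O}(\log N) = \mathcal{O}(\log^2 N)$ more before all its virtual nodes reach a proper clean unmatched cluster. Since the target degree is $\Theta(\log N)$ (Theorem~\ref{theorem:max_degree}), the ratio $(\Delta_{G_i} + \mathcal{O}(\log^2 N)) / \max(\Delta_{G_i}, \Theta(\log N))$ is $\mathcal{O}(\log^2 N)$, which is the asserted degree expansion of $u$ over this interval.

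The step I expect to be the main obstacle is part~(b): ruling out a round in which many distinct neighbors simultaneously forward edges to $u$. The subtlety is that $u$ need not locally detect a global inconsistency, so one must argue that every such undetected configuration still leaves $u$ believing it is in a clean cluster or an active merge, and that in both regimes the number of simultaneous senders, and the number of outside hosts touched by transferred ranges during a single merge level, is controlled by the $\mathcal{O}(\log N)$-boundary property of $\textsc{Avatar}_{\textsc{Cbt}}$ (Theorem~\ref{theorem:max_degree}, Corollary~\ref{dcorollary:cluster_degree}) rather than by however many edges the adversary stacked onto $u$ in the initial configuration. Making the merge sub-case precise, that only one level's transfers land on $u$ per round and that they jointly reach only $\mathcal{O}(\log N)$ hosts, is where the most care is needed.
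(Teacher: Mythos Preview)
Your proposal is correct and follows essentially the same route as the paper: bound the interval length by $\mathcal{O}(\log N)$ via Lemma~\ref{dlemma:reset_to_clean}, bound the per-round incoming edges by $\mathcal{O}(\log N)$ via Lemma~\ref{dlemma:single_node_increase} together with the reset-detection logic, and multiply. The one place the paper is slightly more explicit than you is the ``many senders at once'' case: the paper isolates the \emph{initial} round as the only round in which every neighbor of $u$ may simultaneously hand $u$ up to $2\log N$ edges, notes that $u$ then resets immediately, and observes this contributes a multiplicative $\mathcal{O}(\log N)$ factor (hence still within the $\mathcal{O}(\log^2 N)$ claim). Your treatment of the reset round says ``$\mathcal{O}(\log N)$ per sender'' without bounding the number of senders, so your stated additive gain of $\mathcal{O}(\log^2 N)$ in the ``Combining'' paragraph omits this term; however, your final ratio calculation still goes through once you fold in the $\mathcal{O}(\Delta_{G_i}\log N)$ reset-round contribution, since dividing by $\max(\Delta_{G_i},\Theta(\log N))$ absorbs it.
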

\begin{proof}
By Lemma \ref{dlemma:single_node_increase}, the largest number of nodes any node $v$ will give to node $u$ in a single round is $2 \cdot \log N$.  Furthermore, in order for $u$ to receive $2 \cdot \log N$ nodes from a neighbor $v$, $u$ must host a virtual node whose merge partner is equal to the cluster of the virtual node of $v$.  If $u$ detects virtual nodes without matching tree identifiers, it executes a reset.  If $u$ detects nodes from the same cluster but different levels being connected to a neighbor attempting to merge, $u$ executes a reset.  Therefore, after the first round, at most $2 \cdot \log N$ nodes can be added to $u$'s neighborhood in a single round.  Since, by Lemma \ref{dlemma:reset_to_clean}, all nodes hosted by $u$ are members of a proper clean unmatched cluster in $\mathcal{O}(\log N)$ rounds, the degree expansion of $u$ before all nodes hosted by $u$ are members of the same proper clean unmatched cluster is $\mathcal{O}(\log^2 N)$.

In the initial configuration, it is possible for all neighbors of $u$ to give $u$ $2 \cdot \log N$ neighbors.  In this case, the degree expansion is limited to $\mathcal{O}(\log N)$, since $u$ will reset immediately after receiving these neighbors.
\end{proof}

\begin{lemma}
Let $u$ be a real node such that all virtual nodes hosted by $u$ are members of a proper clean unmatched cluster in configuration $G_i$.  Let $u$'s degree in $G_i$ be $\Delta_u$.  Node $u$'s degree will be at most $\Delta_u + 2 \cdot \log N \cdot (\log N + 1) + 2 \cdot \log N \cdot T(\mathit{lead}))$ until the algorithm terminates, where $T(\mathit{lead})$ is the number of times the virtual nodes hosted by $u$ participate in the leader selection procedure of Algorithm \ref{algo:lead}.
\label{dlemma:degree_clean}
\end{lemma}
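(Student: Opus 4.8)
The plan is to follow $u$'s degree through the entire future evolution of the proper cluster containing its virtual nodes. First I would fix the structural backbone: by Lemmas~\ref{dlemma:reset_to_clean} and~\ref{dlemma:clean_merge}, from $G_i$ onward $u$ never resets, and the cluster hosting its virtual nodes passes through a sequence of proper clusters --- alternately clean and unmatched, then matched, then merging, then clean and unmatched again --- each merge producing, by Lemma~\ref{dlemma:merge}, another proper cluster, until the algorithm terminates. Hence every subsequent edge added to $u$'s neighborhood is produced by one of the three degree-increasing actions (leader selection, follower selection, merge), and it suffices to bound the total contribution of each.

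For the merge actions I would argue that a single execution of Algorithm~\ref{algo:merge} raises $u$'s degree by at most $2\log N \cdot (\log N + 1)$ above its pre-merge value, and then --- crucially --- that this increase does not persist. The merge resolves the guest tree level by level; at each of the at most $\log N + 1$ levels, $u$ absorbs the virtual nodes transferred to it by at most one partner host, and those virtual nodes form a contiguous node segment of $[N]$, which by (the proof of) Theorem~\ref{theorem:max_degree} has at most $2\log N$ edges leaving it, so $u$ gains at most $2\log N$ neighbors per level. When the merge completes the new cluster $T''$ is a proper cluster, so by Theorem~\ref{theorem:max_degree} the part of $u$'s neighborhood lying inside $T''$ is back down to at most $2\log N + 2$; thus at any moment at most one such bulge is present, which is what produces the additive $2\log N \cdot (\log N + 1)$ term.

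For leader selection, each time $u$'s cluster runs Algorithm~\ref{algo:lead} the only way an edge can be added to $u$ is through one of $u$'s boundary virtual nodes forwarding or retaining the edge of a follower; by Corollary~\ref{dcorollary:cluster_degree} there are at most $2\log N$ such virtual nodes and each contributes at most one new neighbor, so $u$'s degree grows by at most $2\log N$ per leader selection, for at most $2\log N \cdot T(\mathit{lead})$ in total. Finally, for follower selection I would show that its net contribution is already absorbed by the other two terms: marking a neighbor as a potential leader creates no edge, forwarding a potential-leader edge up the tree moves it to $u$'s parent rather than to $u$, and the single potential-leader edge the cluster ultimately keeps either is incident to a node other than $u$ or becomes an intra-cluster edge of the ensuing merge --- so it is already counted in $\Delta_u$ or against that merge. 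Summing the three contributions gives the bound in the statement.

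The main obstacle, and where I expect the real work, is the bookkeeping that keeps the merge contribution from accumulating across the (potentially many) merges before termination: one must verify that the virtual nodes transferred at a level carry no extra inter-cluster edges beyond those already charged to some leader selection, so that only the intra-cluster bulge --- which collapses back to $2\log N + 2$ once the merge finishes (Theorem~\ref{theorem:max_degree}) --- is ever temporarily present, and likewise that follower selection deposits no uncharged residual edge. Making this precise requires tracking exactly which edges Algorithms~\ref{algo:connect} and~\ref{algo:merge} delete and matching them, level by level, against the $2\log N$ bound of Corollary~\ref{dcorollary:cluster_degree}. By contrast, the reduction to the three degree-increasing actions and the per-level counting for the merge itself are routine given the earlier lemmas.
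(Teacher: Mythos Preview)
Your proposal is correct and follows essentially the same approach as the paper: decompose the degree growth into the three degree-increasing actions, bound leader selection by $2\log N$ per execution (cumulative, giving the $T(\mathit{lead})$ term), bound a single merge by $2\log N\cdot(\log N+1)$ using the level-by-level transfer argument and Corollary~\ref{dcorollary:cluster_degree}/Theorem~\ref{theorem:max_degree}, observe that this bulge collapses once the merge completes so merges do not accumulate, and argue that follower selection contributes only a transient edge that is deleted upon forwarding. The paper's proof is somewhat terser on the bookkeeping you flag as the main obstacle, but the argument is the same.
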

\begin{proof}
We consider the three degree-increasing actions that a proper clean unmatched cluster $T$ from configuration $G_i$ may execute.  Consider first the follower selection procedure from Algorithm \ref{dalgo:follow}.  The degree can increase only from node $b$ adding the neighbor $\mathit{leader}$ from cluster $T'$ to the neighborhood of $\mathit{parent}_b$.  Furthermore, this degree increase of one is temporary -- a node deletes an edge to $\mathit{leader}$ after forwarding it, and once the root has the edge, it eventually becomes part of a merge, and either the root of $T$ or the root of $T'$ is deleted.

Next, consider the selection procedure for leaders in Algorithm \ref{algo:lead}.  During the $\mathit{PFC}(\mathit{ConnectFollowers},\bot)$ wave, a virtual node $b$ may receive at most a single neighbor from each child, and after an additional round will retain at most 1 of these edges.  Since a real node $u$ can host at most $2 \cdot \log N$ nodes with children from another host, the degree increase each time a node $u$ participates in the selection procedure for leaders is at most $2 \cdot \log N$.

Finally, consider the degree increase from the merge algorithm.  Assume $b$ and $b'$ are nodes in level $i$ in $T$ and $T'$ (respectively), and suppose $b$ and $b'$ resolve.  Without loss of generality, assume $b'$ is the node which will be deleted (that is, the host of $b'$ is losing part of its range).  The degree of $b$ can only increase by 2 (the children of $b'$).  The degree of $\mathit{host}_b$ may increase if $\mathit{host}_{b'}$ copies some of its virtual nodes to $\mathit{host}_b$.  By Corollary \ref{dcorollary:cluster_degree}, any node $v$ can have at most $2 \cdot \log N$ real neighbors inside cluster $T'$.  Furthermore, there can be exactly one node $b' \in T'$ that updates its successor to $\mathit{host}_b$ at level $i$.  As there are $\log N + 1$ levels, the maximum degree increase during a merge is $2 \cdot \log N \cdot (\log N + 1)$.  Notice that, unlike the degree increase from the selection procedure for leaders, the degree increase from merges is not additive -- the largest node $u$'s degree can be as the result of intra-cluster edges is $2 \cdot \log N$, regardless of how many merges $u$ participates in.  Therefore, while a node's degree may temporarily grow during a merge to $2 \cdot \log N \cdot (\log N + 1)$, after the merge is complete, $u$'s intra-cluster degree is at most $2 \cdot \log N$.
\end{proof}

\end{document}